\documentclass[preprint,authoryear]{elsarticle}
\pagestyle{plain}
\usepackage[T1]{fontenc}
\usepackage[utf8]{inputenc}
\usepackage{tabularx}
\usepackage{supertabular}
\usepackage{booktabs}
\RequirePackage[authoryear]{natbib}
\usepackage{lmodern}
\usepackage{setspace}
\usepackage{nicefrac}
\usepackage{amsmath,dsfont}
\usepackage{framed}
\usepackage{a4wide}
\usepackage{amssymb,amsmath,amsthm}
\usepackage{bbm}
\usepackage{changes}
\usepackage[flushleft]{threeparttable}
\definecolor{dblue}{rgb}{0.21,0.21,0.55}

\newcommand{\hmn}{H\negthinspace M \negthinspace N}
\newcommand{\hn}{H\negthinspace N}
\usepackage{multirow}
\usepackage{lscape}
\usepackage[english]{babel}
\usepackage{graphicx}
\usepackage{arydshln}
\usepackage{rotating}
\usepackage[flushmargin,hang]{footmisc}
\usepackage{amsmath}
\usepackage{lscape}
\usepackage[pdflinkmargin=5pt,pdfstartview={FitBH -32768},plainpages=false]{hyperref}

\setlength{\textfloatsep}{0.3cm}

\setlength{\parindent}{0.05cm}
\setlength{\parskip}{0.0cm}

\textwidth14cm
\textheight23cm
\voffset 5mm
\topmargin0mm
\headheight0mm
\headsep0mm
\footskip15mm
\hoffset 10mm
\evensidemargin0mm
\oddsidemargin0mm

 \parindent5mm

\renewcommand{\P}{\mathbb{P}}

\newcommand{\E}{\mathbb{E}}
\newcommand{\N}{\mathbb{N}}
\newcommand{\R}{\mathbb{R}}
\newcommand{\1}{\mathbbm{1}}
\newcommand{\KLEINO}{{\scriptstyle{\mathcal{O}}}}
\DeclareMathAccent{\verywidehat}{\mathord}{largesymbols}{'144}
\newcommand{\var}{\mathbb{V}\hspace*{-0.05cm}\textnormal{a\hspace*{0.02cm}r}}

\newcommand{\cov}{\mathbb{C}\textnormal{o\hspace*{0.02cm}v}}

\allowdisplaybreaks[3]
\newtheorem{remark}{Remark}
\newtheorem{theo}{Theorem}
\newtheorem{assump}{Assumption}
\newtheorem{prop}{Proposition}[section]
\newtheorem{lem}{Lemma}

\newtheorem{cor}[prop]{Corollary}

\interfootnotelinepenalty=10000
\begin{document}
\renewcommand*{\thefootnote}{\fnsymbol{footnote}}

\title{Jump detection in high-frequency order prices}
\author[1]{Markus Bibinger
\footnote{Financial support from the Deutsche Forschungsgemeinschaft (DFG) under grant 403176476 is gratefully acknowledged.}}
\author[2]{Nikolaus Hautsch}
\author[2]{Alexander Ristig}
\address[1]{Faculty of Mathematics and Computer Science, Institute of Mathematics, University of W\"urzburg} 
\address[2]{Department of Statistics and Operations Research, University of Vienna}
\begin{frontmatter}
%
%
\begin{abstract}
{{\normalsize \noindent 
We propose methods to infer jumps of a semi-martingale, which describes long-term price dynamics, based on discrete, noisy, high-frequency observations. Different to the classical model of additive, centered market microstructure noise, we consider one-sided microstructure noise for order prices in a limit order book.

We develop methods to estimate, locate and test for jumps using local minima of best ask quotes. We provide a local jump test and show that we can consistently estimate jump sizes and jump times. One main contribution is a global test for jumps. We establish the asymptotic properties and optimality of this test. We derive the asymptotic distribution of a maximum statistic under the null hypothesis of no jumps based on extreme value theory. We prove consistency under the alternative hypothesis. The rate of convergence for local alternatives is determined and shown to be much faster than optimal rates for the standard market microstructure noise model. This allows the identification of smaller jumps. In the process, we establish uniform consistency for spot volatility estimation under one-sided noise. Online jump detection based on the new approach is shown to achieve a speed advantage compared to standard methods applied to mid quotes.

A simulation study sheds light on the finite-sample implementation and properties of the new approach and draws a comparison to a popular method for market microstructure noise. We showcase how our new approach helps to improve jump detection in an empirical analysis of intra-daily limit order book data.}}

\begin{keyword}
Boundary model\sep high-frequency data \sep limit order book \sep market microstructure\sep price jumps\\[.25cm]
{\it JEL classification:} C12, C58 
\end{keyword}
\end{abstract}
%
\end{frontmatter}
\thispagestyle{plain}
\section{Introduction}
For price data recorded at high frequencies it is well-known that market microstructure noise dilutes the underlying semi-martingale dynamics due to structural market effects, such as the bid-ask bounce and transaction costs. The standard observation model to account for these effects is
\[Y_i=X_{t_i^n}+\epsilon_i,~0\le i\le n,\]
where $(X_t)$ is a latent, continuous-time semi-martingale, which models the efficient log-price process, and $(\epsilon_i)$ are observation errors due to market microstructure. The classical model of market microstructure noise (MMN) introduces an additive noise process $(\epsilon_i)$  with expectation zero, see, for instance, \cite{howoften}, \cite{hansenlunde}, \cite{bn2}, \cite{sahaliajacod} and \cite{li2022remedi}. It is motivated and typically used for trade prices. Besides volatility estimation, testing for jumps of $(X_t)$, based on discrete observations with or without MMN, is one of the most important topics in statistics for high-frequency data, see \cite{barndorff2006econometrics}, \cite{ait2009testing}, \cite{lietal}, \cite{leemykland} and \cite{lm12}, among others.

\begin{figure}[t]
\begin{framed}
\centering
\includegraphics[width=11cm]{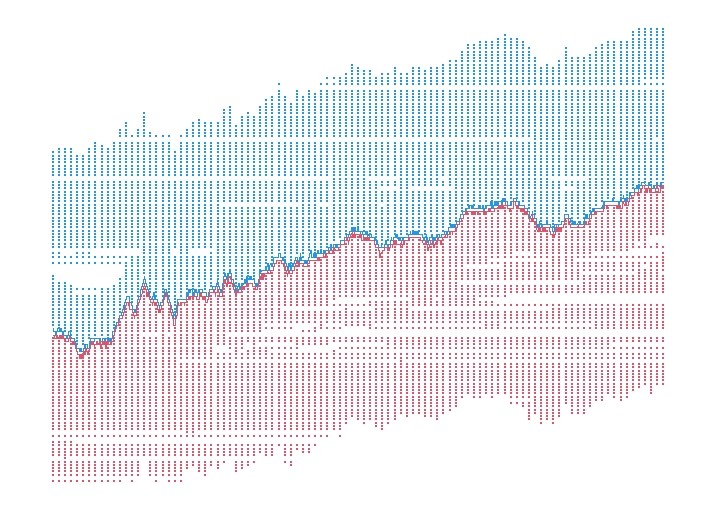}~~~\end{framed}
\caption{\label{Fig:lob} Lines interpolate best ask (blue) and best bid (red) prices of AAPL over 10 minutes. Above (below) prices of other active limit ask (bid) orders are plotted.}
\end{figure}
  
Limit order book data comprises more information than only trade prices. At each time point, it provides different quote levels including the best bid and best ask. Figure \ref{Fig:lob} shows a snapshot of quote dynamics for the AAPL asset traded at Nasdaq\footnote{We use limit order book data provided by LOBSTER, \url{lobsterdata.com}.} within a short time interval of 10 minutes. Only a small subset of all available quotes is displayed by the points. In the upper part, prices of ask limit orders are plotted with the line of best ask quotes as a lower hull. In the lower part, we show bid limit order prices, where the upper hull depicts the path of best bid quotes. At each time, all prices of ask limit orders are above all prices of bid limit orders with a bid-ask spread between the two lines. Since trades occur mainly when market orders are executed against existing limit orders, trade prices bounce between the best bid and the best ask. A natural question is how to efficiently use information from this data for inference on the efficient log-price and how to model and smooth out market microstructure for such data. One approach in recent works is to design structural noise models incorporating observed order book quantities, see \cite{li2016efficient}, \cite{chaker2017high}, \cite{clinet2019testing} and \cite{andersen2022local}. Assuming more structure than fully uninformative noise, these works establish improved volatility estimation and inference on the efficient log-price compared to MMN. We take a different point of view on the market microstructure and do not impose a functional form of the noise. Instead, we argue that the noise of limit order prices should not have unbounded support $(-\infty,\infty)$. Noise distributions with restricted support facilitate improved inference on the efficient log-price without imposing further structural conditions. This intriguing effect was used by \cite{BJR} to estimate the integrated volatility of a continuous semi-martingale with improved optimal rate $n^{-1/3}$, while the slower rate $n^{-1/4}$ is optimal for MMN. A generalization to spot volatility estimation and progress in the theory to establish asymptotic confidence intervals for volatility were recently contributed in \cite{bibinger2022}. Whenever data from a limit order book is available, we suggest to consider the time series of best ask quotes. Best bid quotes can be used equivalently and both be combined in practice. Based on a limit order book, it is so far standard to extract time series referring to both sides of the order book, such as trade prices or mid quotes, and to apply the MMN-model and corresponding methods to it. While the calibration of the MMN-model to mid quotes is reasonable to exploit larger samples than for trade prices, it is not clear why it should be suitable for best ask quotes or best bid quotes. This work proposes to consider local minima computed from samples referring to the upper side of the order book, such as best ask quotes or seller-initiated trades, and to model this data with lower-bounded noise. In an analogous way, we use for applications also local maxima computed from samples referring to the lower side of the order book, such as best bid quotes or buyer-initiated trades, which are modeled with one-sided, upper-bounded noise. Since boundary models facilitate super-efficient statistical inference, this is beneficial compared to combining lower and upper side in one model. In fact, considering best ask price dynamics, we do not have any bid-ask bounce effect. Instead, efficient prices should (usually) lie below the ask quotes and above bid quotes within the spread. Therefore, different to the MMN-model, we consider an additive noise model with \textbf{l}imit \textbf{o}rder \textbf{m}icrostructure \textbf{n}oise (LOMN). For ask quotes this is modeled by \textbf{l}ower-bounded, \textbf{o}ne-sided \textbf{m}icrostructure \textbf{n}oise, i.e., $\epsilon_i\ge 0$. Local minimum and maximum prices are provided also by the popular high-frequency candlestick data. Its use for high-frequency econometrics is recently discussed, see, e.g., \cite{candle2} and \cite{candle1}, so far based on models without microstructure noise. For many stocks, such data is publicly available, typically the highest frequency is so far restricted to one per minute.\,\footnote{e.g.\ on investing.com for, e.g., spx-500, several sources are mentioned in footnote 3 of \cite{candle1}} Extracting data from a limit order book, applying both approaches with the different models MMN and LOMN is possible in practice. For this reason, we compare both in the empirical analysis and in appropriate simulations. When we use adequate methods for the two different models in simulations, let us emphasize that this does not yield a comparison of the performances of different test methods, but rather of the different underlying noise models.

The jump inference based on local minima of ask quotes, and local maxima of bid quotes, has three striking advantages. First, the \emph{convergence rate is faster} than under MMN, such that \emph{smaller jumps can be detected} in theory and practice. Second, it is \emph{not manipulated by a pulverization effect} as local averages under MMN, which is a main problem of jump detection under noise (\cite{zhangmykland3}). Third, we establish in theory and an empirical study that jump detection based on local minima of ask and maxima of bid quotes has an advantage of speed. This means that jumps are detected online faster than based on mid quotes and MMN. In fast-moving electronic financial markets, speed is a main aspect for high-frequency trading algorithms and strategies, see, e.g., \cite{adv3} and \cite{adv4}. Jump events often provide relevant news, see, e.g., \cite{adv1}. The mean speed advantage in our data example is that jumps are detected approx.\ 0{.}32 seconds earlier. This is enough to seize opportunities for high-frequency, computer-based trading by taking advantage of the earlier news arrival, for instance, to rebalance a portfolio.

Since our statistics use only differences between adjacent block-wise local minima, our methods are robust to model generalizations where $\epsilon_i\ge b(i/n)$, for any continuous boundary function $b(t)$ of finite variation. For instance, a positive gap between boundary and the efficient log-price could exist to compensate market processing costs. This gap may be time-varying, but it seems plausible that it should be rather persistent. 

This work establishes inference on jumps of the efficient log-price in the LOMN-model and points out improvements compared to MMN. Methods to estimate, locate and test for jumps are constructed. Our test for jumps based on extreme value theory is in the spirit of the jump tests by \cite{leemykland}, for non-noisy high-frequency prices, and \cite{lm12} and \cite{bibwinkellev} for the MMN-model. The distinct structure of the noise in the LOMN-model, however, leads us to consider statistics based on local minima instead of local averages. The non-linearity of these statistics requires fundamentally different proofs in the asymptotic theory as $n\to\infty$. We prove a Gumbel convergence of a maximum statistic on which our test is based on and consistency under local alternatives with an optimal rate of convergence. While the Gumbel convergences in \cite{leemykland} and \cite{lm12} can be traced back to the weak convergence of the maximum of i.i.d.\ standard normally distributed random variables, this is not the case for our maximum statistic. Based on extreme value theory and bounds for the tails of convolutions, we explicitly derive this convergence in distribution. An important advantage compared to related procedures for MMN is that we establish uniformly consistent spot volatility estimation and the asymptotic theory for jump tests without any assumptions on the moments of the noise. The rate of convergence for local alternatives improves from $n^{-1/4}$ under MMN to the faster rate $n^{-1/3}$. This means that under LOMN we are able to detect smaller jumps than under MMN. For fixed jump size and number of observations, the power of our test outperforms the power of tests under MMN. Beyond improved asymptotic properties, our methods do not cause some intricate finite-sample problems known for MMN. In particular, we show that the effect of pulverization of jumps by pre-averages reported in  \cite{zhangmykland3} under MMN, which can result in spurious jump detection or gradual jumps, is avoided when using local order statistics. 

We develop consistent estimators for jump sizes and jump times as well as a local and a global test for jumps. The global test allows to test for jumps over some time period, usually one trading day. This is the standard problem of testing for jumps and can be used in practice to analyze whether or not jumps have to be taken into account. Detecting specific jumps and estimating their times and sizes is important to separate jumps and continuous price adjustments. This can be used, for instance, to perform high-frequency regression or factor analysis, separately for jumps and continuous components. Due to different mechanisms behind co-jumps and continuous co-movement of prices, it is crucial to split the two price components in such an analysis, see \cite{todorovrough}, \cite{uni}, \cite{jumpregressions}, \cite{J2}, \cite{caporin}, \cite{ait2020high}, \cite{Pelger} and \cite{jumpfactor}, among others. In particular, \cite{todorovrough} highlight the benefits of this separation for asset pricing by identifying higher risk premiums associated with discontinuous betas. Using jumps to model price shocks in response to news and announcements, the estimation is moreover one main ingredient of several macroeconomic studies based on intra-daily high-frequency data, for instance in the research field on monetary policy, see, among others, \cite{app2} and \cite{app1}.

The remainder of this paper is structured as follows. The theoretical contribution is developed in Section \ref{sec:2}. Section \ref{sec:2.1} discusses the LOMN-model. In Section \ref{sec:2.2} we construct and discuss statistical methods for which asymptotic results are presented in Section \ref{sec:2.3} on uniformly consistent spot volatility estimation and in Section \ref{sec:2.4} on jump detection. All proofs are provided in Section \ref{sec:6}. Although jump tests under MMN and under LOMN are designed for two different models, we emphasize the possibility to apply it to time series coming from the same limit order book data. Based on simulations, in Section \ref{sec:3} we study on the one hand the finite-sample implementation and properties of our new methods and, on the other hand, provide a comparison to the test by \cite{lm12}. For this comparison, we simulate the same efficient log-prices alternatively with both LOMN and MMN to apply our methods and as well the classical method by Lee and Mykland. An empirical analysis of limit order book data in Section \ref{sec:4} compares results for methods based on one-sided noise applied to best ask and best bid quotes and the classical MMN-approach considering mid quotes. This empirical part reveals advantages of the new approach and emphasizes stylized facts of limit order book data which are generally relevant for studies of price jumps. Section \ref{sec:5} concludes.

\section{Theory\label{sec:2}}
\subsection{Model with lower-bounded, one-sided microstructure noise}\label{sec:2.1}
On a filtered probability space, $(\Omega^X,\mathcal{F}^X,(\mathcal{F}^X_t),\mathbb{P}^X)$, the latent, efficient log-price process in continuous time is described by an It\^o semi-martingale   
\begin{eqnarray}X_t&=&X_0+\int_0^t a_s\,ds+\int_0^t\sigma_s\,dW_s + \int_0^t \int_{\mathbb{R}}\delta(s,z)\1_{\{|\delta(s,z)|\leq 1 \}}(\mu-\nu)(ds,dz) \nonumber \\ && +\int_0^t \int_{\mathbb{R}}\delta(s,z)\1_{\{|\delta(s,z)|> 1 \}} \mu(ds,dz)\,,\,t\ge 0\,,\label{sm} \end{eqnarray}
with a one-dimensional standard Brownian motion $(W_t)$, the drift process $(a_t)$, the volatility process $(\sigma_t)$, and with $\delta$ defined on $\Omega\times \mathbb{R}_+\times \mathbb{R}$. The Poisson random measure $\mu$ is compensated by $\nu(ds,dz)=\lambda(dz)\otimes ds$, with a $\sigma$-finite measure $\lambda$. We write
\begin{align}\label{smdecomp}X_t=C_t+J_t\,,\end{align}
with the continuous component $(C_t)$, and the c\`{a}dl\`{a}g jump component $(J_t)$.

In the model with lower-bounded, one-sided microstructure noise,
\begin{align}\tag{LOMN}\label{lomn}Y_i=X_{t_i^n}+\epsilon_i\,,\,i=0,\ldots,n,~~\epsilon_i\stackrel{iid}{\sim}F_{\eta},\epsilon_i\ge 0\,,\end{align}
the discretization $(X_{t_i^n})_{0\le i\le n}$, with high-frequency observations of $(X_t)$ on the fix time interval $[0,1]$, is perturbed by exogenous noise $(\epsilon_i)_{0\le i\le n}$. The baseline model considered in \cite{BJR} and \cite{bibinger2022} is with i.i.d.\ noise and for regularly spaced observation times, $t_i^n=i/n,\,i=0,\ldots,n$. In line with stylized facts of the data, we develop the theory under the following, less restrictive conditions.
\begin{assump}\label{noise} All $(\epsilon_i)_{0\le i\le n}$ have a cumulative distribution function (cdf) $F_{\eta}$ which satisfies
\begin{align}\label{noise_dist}
F_{\eta}(x) =\eta x\big(1+\KLEINO(1)\big) ,\;\mbox{as}~x\downarrow 0\,.
\end{align}
The noise is $m_n$-dependent, such that $\cov(\epsilon_k,\epsilon_l)=0$, for $|k-l|>m_n$, with a sequence $m_n=\KLEINO(\log(n))$.
Suppose that the empirical measure 
\[\mu_n^{S}=\frac{1}{n}\sum_{i=0}^n\delta_{t_i^n}\]
of the $(\mathcal{F}^X$-measurable) observation times $(t_i^n)_{0\le i\le n}$ converges as $n\to\infty$ (almost surely) weakly, $\mu_n^S\stackrel{w}{\rightarrow}\mu^S$, to a limit measure $\mu^S$ which has a continuous, strictly positive Lebesgue density $f^S$ on $[0,1]$. 
\end{assump}
The model with condition \eqref{noise_dist} is nonparametric. Close to the boundary, condition \eqref{noise_dist} means that the extreme value index is $-1$ for the \emph{minimum} domain of attraction. We do not make any assumption about the tails of the noise and its maximum domain of attraction. For instance, a uniform distribution on some interval $[0,A]$, $A>0$, an exponential distribution and a heavy-tailed (shifted) Pareto distribution all satisfy \eqref{noise_dist}. In particular, the developed asymptotic theory for the LOMN-model does not require the existence of moments of the noise. Considering block-wise minima instead of block-wise averages as typically in the MMN-model, this is an important advantage of the statistical methods designed for LOMN. The standard assumption \eqref{noise_dist} on one-sided noise has been imposed in the same way by \cite{jirak2014adaptive} and \cite{BJR}.

Assumption \ref{noise} allows for serially correlated noise. The restriction on the growth of $m_n$ is required for Theorem \ref{thmgumbel} with its fast convergence rate and not for the other results. It is possible to extend Theorem \ref{thmgumbel} to larger $m_n$ by adapting the choice of $h_n$, which would result in a slightly slower rate of convergence. 
The assumption on the observation times $t_i^n$ is a mild condition which naturally rules out atoms in the limit measure $\mu^S$ requiring absolute continuity with respect to the Lebesgue measure on $[0,1]$. The $t_i^n$ can be deterministic, or $\mathcal{F}^X$-measurable, possibly depending on $(W_s)$ and $(\sigma_s)$. In the latter case, Assumption \ref{noise} grants \emph{almost sure} weak convergence, i.e., existence of a set $\Omega_0^X\subseteq\Omega^X$, with $\P^X(\Omega_0^X)=1$, on that $\mu_n^S(\omega)\stackrel{w}\to\mu(\omega)$ holds pathwise. A standard specification for non-regularly spaced observations is
\[t_i^n=F^{-1}(i/n),~\mbox{with}~F(t)=\int_0^t f^S(u)\,du\,,\]
i.e., a quantile transformation of an equispaced scheme. This is the setting of Assumption 3.1 of \cite{BHMR}, being equivalent to a time-change as used in Sec.\ 5.3 of \cite{bn2}, and it is sufficient for our weaker assumption. Our assumption also covers cases that cannot be represented via a quantile transformation. For instance, if $t_i^n$ are generated by a (possibly inhomogeneous) Poisson process with intensity $nf^S(t)$. Since conditional on $n$, jump times of a standard Poisson process are i.i.d.\ uniformly distributed on $[0,1]$, it satisfies the weak convergence with a uniform limit distribution, i.e.\ $f^S=1$ constant. Asymptotic distributions of our statistics will not depend on $f^S$, but the proofs of our asymptotic results require the mild regularity that $f^S$ is strictly positive and continuous.

\begin{assump}\label{sigma}
The drift $(a_t)_{t\ge 0}$  is a locally bounded process. The volatility is strictly positive, $\inf_{t\in[0,1]}\sigma_t>0$, $\mathbb{P}^X$-almost surely. 
For all $0\leq t+s\leq1$, $t\ge 0$, $s\ge 0$, with some constants $ C_{\sigma}>0$, and $\alpha>0$, it holds that 
\begin{align}
\label{vola}\E\big[(\sigma_{(t+s)}-\sigma_{t})^2\big]  \le C_{\sigma} s^{2\alpha}\,.\end{align}
\end{assump}
Condition \eqref{vola} imposes a certain regularity of the volatility process, measured by the parameter $\alpha$, but does not rule out volatility jumps. Working under Assumption \ref{sigma} with general $\alpha$, our asymptotic theory is developed in a framework which covers different volatility models recently discussed in the literature. For rough volatility, see \cite{chong2} and references therein, $\alpha$ is given by the Hurst exponent while $\alpha=1/2$ holds under the common assumption that $(\sigma_t)$ is another It\^o semi-martingale. We impose the following standard condition on the jumps.
\begin{assump}\label{jumps}
Assume for the predictable function $\delta$ in \eqref{sm} that \(\sup_{\omega,x}|\delta(t,x)|/\gamma(x)\) is locally bounded with a non-negative, deterministic function $\gamma$ that satisfies
\begin{align}\label{BG}\int_{\mathbb{R}}(\gamma^r(x)\wedge 1)\lambda(dx)<\infty\,.\end{align}
\end{assump}
The notation $a\wedge b=\min(a,b)$, and $a\vee b=\max(a,b)$, is used throughout this manuscript. The generalized Blumenthal-Getoor or jump activity index $r,\, 0\le r\le 2$, in \eqref{BG} determines the jump activity. The most restrictive case is $r=0$, when jumps are of finite activity. The larger $r$, the more general jump components are included.

\subsection{Statistical methods\label{sec:2.2}}
We first discuss inference for a possible jump,
\[\Delta X_{\tau}=X_{\tau}- X_{\tau-}=X_{\tau}- \lim_{u\uparrow \tau}X_{u}\,,\]
at some given (stopping) time $\tau\in(0,1)$. For this purpose, consider 
\begin{align}\label{locest2}\hat X_{\tau}=\min_{i=\lfloor n\tau\rfloor+1,\ldots,\lfloor n\tau\rfloor+nh_n} Y_{i}\,,
~\hat X_{\tau-}=\min_{i=\lfloor n\tau\rfloor-nh_n+1,\ldots,\lfloor n\tau\rfloor} Y_{i}\,.\end{align}
These statistics are local minima of $nh_n$ noisy observations over blocks, where we choose the block length $h_n$ such that $nh_n$ is integer-valued. The two disjoint blocks contain observations in a vicinity shortly after and before time $\tau$, respectively. We can estimate the jump $\Delta X_{\tau}$ based on 
\begin{align}\label{bhr}\widehat {\Delta X_{\tau}}=\hat X_{\tau}-\hat X_{\tau-}\,.\end{align}
Asymptotically, $h_n\to 0$, and $nh_n\to\infty$, as $n\to\infty$. Therefore, $\tau\in(h_n,1-h_n)$, for $n$ sufficiently large and any $\tau\in(0,1)$, such that we omit a discussion of adjustments for boundary cases when $\tau\notin(h_n,1-h_n)$.  
In order to construct a global test for jumps and to perform volatility estimation, we partition the whole observation interval $[0,1]$ in $h_n^{-1}$ equispaced blocks, $h_n^{-1}\in\N$, and take local minima on each block. Consider, for $k=0,\ldots,h_n^{-1}-1$, the local block-wise minima
\begin{align}\label{localmin}m_{k,n}=\min_{i\in\mathcal{I}_k^n}Y_i~,~\mathcal{I}_k^n=\{i\in\{0,\ldots,n\}: ~t_i^n \in (kh_n,(k+1)h_n)\}\,.\end{align}
Here, $h_n^{-1}$ is an integer, while in general $nh_n$ not. In particular, $h_n$ can be different for the local and for the global statistics. This is necessarily the case when $n$ is such that a choice $h_n^{-1}\in\N$ and $nh_n\in\N$ is not possible. However, since the asymptotic orders of optimal block lengths will be identical, we use for simplicity the same notation $h_n$ for the block lengths in the construction of local and global statistics. 

Under the global null hypothesis of no price jumps, a consistent estimator for the spot squared volatility $\sigma_{\tau}^2$ is given by 
\begin{align}\label{simpleestimator}\hat\sigma^2_{\tau-}=\frac{\pi}{2(\pi-2)K_n}\sum_{k=(\lfloor h_n^{-1}\tau\rfloor-K_n)\vee 1}^{\lfloor h_n^{-1}\tau\rfloor-1}h_n^{-1}\big(m_{k,n}-m_{k-1,n})^2\,,\end{align}
with sequences $h_n\to 0$ and $K_n\to\infty$, such that $K_nh_n\to 0$. This estimator is available online at time $\tau$ during a trading day because it relies only on past observations before time $\tau$. 
Working with ex-post data over the whole interval, one may use as well
\begin{align}\label{simpleestimator2}\hat\sigma^2_{\tau}=\frac{\pi}{2(\pi-2)K_n}\sum_{k=(\lfloor h_n^{-1}\tau\rfloor-(K_n-1)/2)\vee 1}^{(\lfloor h_n^{-1}\tau\rfloor+(K_n-1)/2)\wedge (h_n^{-1}-1)}h_n^{-1}\big(m_{k,n}-m_{k-1,n})^2\,,\end{align}
for some odd integer $K_n$. A difference of estimators 
\begin{align}\label{simpleestimator3}\hat\sigma^2_{\tau+}=\frac{\pi}{2(\pi-2)K_n}\sum_{k=(\lfloor h_n^{-1}\tau\rfloor+1)}^{(\lfloor h_n^{-1}\tau\rfloor+K_n)\wedge (h_n^{-1}-1)}h_n^{-1}\big(m_{k,n}-m_{k-1,n})^2\,,\end{align}
over a window after time $\tau$ and \eqref{simpleestimator} over a window before time $\tau$ allows to infer a possible jump in the volatility process at time $\tau$. Minima and maxima in the lower and upper summation limit are only relevant when $\tau$ is close to the boundaries in small intervals with lengths that tend to zero. In these boundary cases, the factor $K_n^{-1}$ can also be adjusted to get an average over available $k$. Since the boundary effects are not relevant for the asymptotic theory, we do, however, not incorporate such adjustments in \eqref{simpleestimator}, \eqref{simpleestimator2} and \eqref{simpleestimator3}.

A spot volatility estimator which is robust with respect to jumps in $(X_t)$, is obtained with thresholding. We truncate differences of local minima whose absolute values exceed a threshold $u_n= \beta^{tr}\cdot h_n^{\kappa},\kappa\in(0,1/2)$, with a constant $\beta^{tr}>0$, which leads to
\begin{align}\label{truncatedestimator}\hat\sigma^{2,(tr)}_{\tau-}=\frac{\pi}{2(\pi-2)K_n}\sum_{k=(\lfloor h_n^{-1}\tau\rfloor-K_n)\vee 1}^{\lfloor h_n^{-1}\tau\rfloor-1}h_n^{-1}\big(m_{k,n}-m_{k-1,n})^2\1_{\{|m_{k,n}-m_{k-1,n}|\le u_n\}}\,.\end{align}
The estimators $\hat\sigma^{2,(tr)}_{\tau}$, and $\hat\sigma^{2,(tr)}_{\tau+}$, are constructed analogously.

Our global test for price jumps is based on the maximum statistic
\begin{align}\label{bhrgl}T^{BHR}(Y_0,Y_1\ldots,Y_n):=\max_{k=1,\ldots,h_n^{-1}-1}\Big|\frac{m_{k,n}-m_{k-1,n}}{\hat\sigma_{kh_n}}\Big|\,,\end{align}
with $\hat\sigma_{kh_n}=\big(\hat\sigma^{2,(tr)}_{kh_n-}\big)^{1/2}$, for $k\ge K_n$, and $\hat\sigma_{kh_n}=\big(\hat\sigma^{2,(tr)}_{kh_n+}\big)^{1/2}$, else.

\begin{figure}[t]
\begin{framed}
\includegraphics[width=6.5cm]{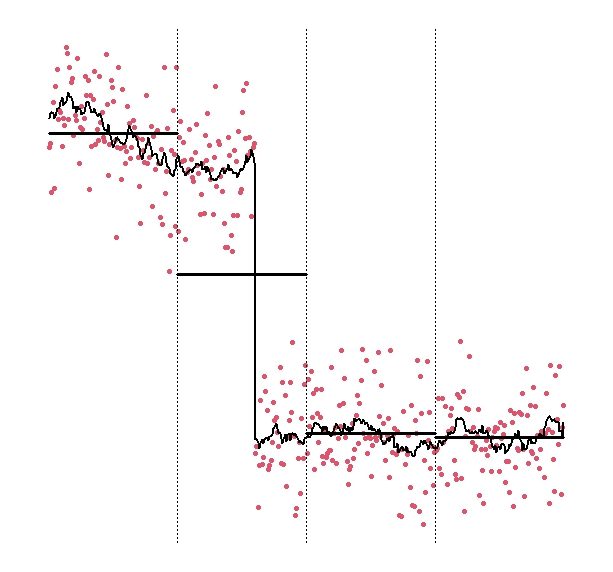}~~~~\includegraphics[width=6.5cm]{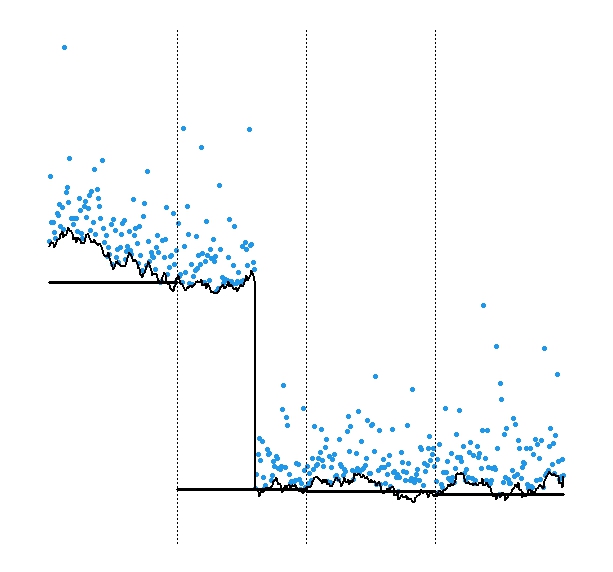}\end{framed}
\caption{\label{Fig:locmin}Simulated efficient price with one jump and local averages of MMN observations (left) and local minima of LOMN observations (right).}
\end{figure}

A benefit of jump tests based on maximum statistics is that they can be used, moreover, to locate jump times. If the test rejects, when statistic \eqref{bhrgl} exceeds a critical value, the time at which the maximum is taken consistently estimates the position of the largest absolute jump. We establish a precise result on this localization in Proposition \ref{localize}. A sequential application and a combination with \eqref{bhr} allow the identification of jumps with their times and sizes.

Similar methods as developed here for observations with LOMN have been discussed in the literature for observations under MMN. While our mathematical analysis requires some new proof techniques, we illustrate next that inference on jumps is statistically less involved using local minima of observations with LOMN instead of local averages of observations with MMN.  Figure \ref{Fig:locmin} shows a simple, illustrative example with four blocks and a simulated diffusive efficient price (solid line) with one huge downward jump. While this efficient price is unobservable in practice, observations with additive noise are available. On the left-hand side we show observations with i.i.d.\ centered normally distributed MMN, while on the right-hand side we depict observations with i.i.d.\ exponentially distributed LOMN. Both noise variances coincide. For MMN, the estimation of jumps is based on local, block-wise averages which are drawn as bars in the left plot of Figure \ref{Fig:locmin}. Instead of identifying one jump, the differences between local averages rather suggest two adjacent jumps of smaller sizes. This effect, caused by averaging over a jump time, has been highlighted by \cite{zhangmykland3} as the ``pulverisation of jumps by pre-averages''. It is a difficult problem when estimating jumps at a priori unknown jump times. Theoretically, it can only be solved in asymptotics with sophisticated statistical methods, see, \cite{vetter} and \cite{bibwinkellev}. In finite samples this pulverization effect impairs jump detection under MMN. In contrast, the differences between local minima of observations with LOMN, which are drawn as bars in the right plot of Figure \ref{Fig:locmin}, correctly suggest one jump and accurately estimate its size. Having a downward jump on the second block, the minimum on this block is taken after the jump. In case of an upward jump, the minimum would be taken before the jump. In any case, this results in only one large difference between block-wise, local minima of the correct size. So, there is \emph{no pulverization effect for local minima under LOMN}. Figure \ref{Fig:locmin} reveals another advantage of using local minima under LOMN. Since the cumulative (=running) minimum on a block is always greater or equal the total minimum, a jump is detected as soon as the distance from an observation to the local minimum on the previous block exceeds a critical value of the test. In this example, this is already the case for the \emph{first} observation after the jump. Since no such monotonicity relation exists between cumulative and total average on a block, online jump detection under LOMN has an \emph{advantage of speed}.

In \ref{S2}, additional plots are provided which complement the illustration considering moreover upward jumps, local maxima of one-sided, upper-bounded noise, illustrations for a jump occurring near the boundary between two blocks and an example with real data. For data, it is standard to compute local averages from samples referring to both sides of the order book such as trade prices or mid quotes. We propose to compute local minima from samples referring to the upper side of the order book only, such as best ask quotes or seller-initiated trades. At the same time, local maxima of bid quotes or buyer-initiated trades are also to be considered in applications.

\subsection{Uniformly consistent spot volatility estimation\label{sec:2.3}}
We begin with the asymptotic theory on spot volatility estimation. The expectation of the volatility estimator hinges on the function
\begin{align}\label{psi}\Psi_n(\sigma^2)&:=\frac{\pi}{2(\pi-2)} h_n^{-1}\E\Big[\Big(\min_{i\in\{0,\ldots,nh_n-1\}}\big(\sigma B_{t_i^n}+\epsilon_i\big)-\min_{i\in\{1,\ldots,nh_n\}}\big(\sigma \tilde B_{t_i^n}+\epsilon_i\big)\Big)^2\Big]\\
&\notag=\frac{\pi}{\pi-2}\, h_n^{-1}\var\Big(\min_{i\in\{0,\ldots,nh_n\}}\big(\sigma B_{t_i^n}+\epsilon_i\big)\Big)(1+\KLEINO(1)),\end{align}
where $(B_t)$ and $(\tilde B_t)$ denote two independent standard Brownian motions.
In \cite{BJR} it was proved that $\Psi_n$ is monotone and invertible. For $n h_n^{3/2}m_n^{-1}\to \infty$, we have
\begin{align}\label{psiapprox}\Psi_n(\sigma^2)=\sigma^2+\KLEINO(1),~\mbox{as}~n\to\infty\,,\end{align}
for any $\sigma^2>0$. A proof for i.i.d.\ noise is given as Step 3 of the proof of Theorem 1 in \cite{bibinger2022} and generalizes to $m_n$-dependent noise with Lemma \ref{lem4}. By \eqref{psiapprox}, we do not require knowledge of $\Psi_n$ for the construction of a consistent estimator.

If $(X_t)$ is continuous, i.e., $J_t=0$ in \eqref{smdecomp}, under Assumptions \ref{sigma} and \eqref{noise_dist} with i.i.d.\ noise, \cite{bibinger2022} proved that when $h_n$ is chosen such that $n h_n^{3/2}\to \infty$, the spot volatility estimator \eqref{simpleestimator} satisfies
\begin{align}\label{spotclt}\hat\sigma^2_{\tau-}-\Psi_n\big(\sigma_{\tau-}^2\big) =\mathcal{O}_{\P}\big(K_n^{-1/2}\big)\,,\end{align}
under the condition
\begin{align}\label{K_n}K_n=C_K h_n^{\delta -2\alpha/(1+2\alpha)},~\mbox{for $0<\delta<2\alpha/(1+2\alpha)$, and with a constant $C_K>0$}\,.\end{align}
Moreover, consistency holds true in general. However, the estimation error $\hat\sigma^2_{\tau-}-\sigma_{\tau-}^2 =\KLEINO_{\P}(1)$ does in general not decay as fast as the one in \eqref{spotclt}. The function $\Psi_n$ is feasible, as \cite{bibinger2022}, Sec.\ 5.1, shows how the function $\Psi_n$ and its inverse $\Psi_n^{-1}$ can be approximated numerically. Analogous results apply in case of the modified versions \eqref{simpleestimator2} and \eqref{simpleestimator3}, respectively. Under the same setup with jumps satisfying Assumption \ref{jumps} with 
\begin{align}\label{jumpres}r< \frac{2+2\alpha}{1+2\alpha}\,,\end{align}
the truncated spot volatility estimator \eqref{truncatedestimator} with
\begin{align}\kappa\in \Big(\frac{1}{2-r}\frac{\alpha}{2\alpha+1},\frac12\Big)\,,\end{align} 
satisfies
\begin{align}\label{spotclttr}\hat\sigma^{2,(tr)}_{\tau-}-\Psi_n\big(\sigma_{\tau-}^2\big)=\mathcal{O}_{\P}\big(K_n^{-1/2}\big)\,.\end{align}
Moreover, feasible central limit theorems and asymptotic confidence intervals for the estimators are established in \cite{bibinger2022}.  
The convergence rate $K_n^{-1/2}$ gets arbitrarily close to $n^{ -2\alpha/(3+6\alpha)}$, which is optimal in the LOMN-model. In the important special case when $ \alpha=1/2$, for a semi-martingale volatility, the rate is arbitrarily close to $n^{ -1/6}$. This is much faster than the known optimal rate of convergence in the MMN-model, which is $n^{ -1/8}$, see \cite{BHMR2}. In \eqref{jumpres} we impose mild restrictions on the jump activity. For the standard model with a semi-martingale volatility, i.e., $\alpha=1/2$, we require that $r<3/2$. For $\alpha=1$, we have the strongest condition implying $r<4/3$. 

In this work, we do not require central limit theorems for spot volatility estimation. Instead, the asymptotic theory for the global jump test relies on \emph{uniformly consistent} spot volatility estimation. Uniform consistency in functional estimation is typically much more difficult to prove than pointwise results. We prove a quite strong result under surprisingly mild assumptions.
\begin{prop}\label{uniform}
Under Assumptions \ref{noise} and \ref{sigma} and when there are no jumps in $(X_t)$ and $(\sigma_t)$, the spot volatility estimator \eqref{simpleestimator} with $ n h_n^{3/2}m_n^{-1}\to \infty$, and $K_n$ chosen as in \eqref{K_n}, satisfies
\begin{align*}\max_{k=1,\ldots,h_n^{-1}-1}\big|\hat\sigma_{kh_n-}^2-{{\Psi_n}}\big(\sigma_{kh_n}^2\big)\big|=\KLEINO_{\P}\Big(K_n^{-\gamma}\Big)\,,\end{align*}
for all $\gamma$, with $\gamma<1/2$.
\end{prop}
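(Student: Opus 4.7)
The strategy is to upgrade the pointwise estimate~\eqref{spotclt} to a uniform one by combining high-moment versions of the pointwise bound with a union bound over the $h_n^{-1}$ grid points $kh_n$. Since $h_n^{-1}$ and $K_n$ are both polynomial in $n$, a $K_n^{\varepsilon}$-loss relative to the pointwise rate is automatically absorbed into the final $\KLEINO_{\P}(K_n^{-\gamma})$ statement, provided pointwise $p$-th moment bounds of order $K_n^{-p/2}$ can be established for arbitrarily large $p$.

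For each $k$ I would decompose
\[
\hat\sigma^{2}_{kh_n-}-\Psi_n\bigl(\sigma_{kh_n}^{2}\bigr)=B_k+S_k,
\]
where $B_k=K_n^{-1}\sum_{j}\bigl(\Psi_n(\sigma_{jh_n}^{2})-\Psi_n(\sigma_{kh_n}^{2})\bigr)$ is the bias coming from the variation of $\sigma$ across the backward window of length $K_nh_n$, and $S_k=K_n^{-1}\sum_j\xi_j$ is the centred stochastic remainder with $\xi_j=\tfrac{\pi}{2(\pi-2)}h_n^{-1}(m_{j,n}-m_{j-1,n})^2-\Psi_n(\sigma_{jh_n}^{2})$. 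Minkowski together with a high-moment version of the Hölder bound~\eqref{vola}, obtained after a standard localisation of $(\sigma_t)$ or directly from an It\^o-type BDG estimate when $(\sigma_t)$ is itself a semi-martingale, gives $\|B_k\|_p\lesssim (K_nh_n)^{\alpha}$. By the tuning in~\eqref{K_n}, $(K_nh_n)^{\alpha}\lesssim K_n^{-1/2}h_n^{\delta(\alpha+1/2)}$, so that $\|B_k\|_p=o(K_n^{-1/2+\eta})$ for every $\eta>0$, uniformly in $k$; a Markov union-bound step then yields $\max_k|B_k|=\KLEINO_\P(K_n^{-\gamma})$ for every $\gamma<1/2$.

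For the stochastic term, $S_k$ is an average of $K_n$ contrasts $\xi_j$ that are $1$-dependent conditionally on $(\sigma_t)$: $\xi_j$ and $\xi_{j+1}$ share the block minimum $m_{j,n}$, while $\xi_j$ and $\xi_{j+2}$ depend on disjoint observations. Splitting into two conditionally independent odd/even subsequences and applying Rosenthal's inequality reduces the analysis to uniform $p$-th moment bounds $\E|\xi_j|^p\le C_p$. This is the crux and is where the one-sided noise structure plays its role: the event $\{m_{j,n}>X_{jh_n}+x\}$ forces $\epsilon_i>x-(X_{t_i^n}-X_{jh_n})$ for every $i\in\mathcal I_j^n$, whence~\eqref{noise_dist} combined with a Gaussian tail bound for the maximal fluctuation of $\sigma B$ over one block produces an upper tail for $m_{j,n}-X_{jh_n}$ of order $(1-cx)_+^{nh_n}+e^{-c'x^2/h_n}$; the lower tail is controlled by the running infimum of $\sigma B$ on $[jh_n,(j+1)h_n]$. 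After rescaling by $h_n^{-1/2}$ these estimates give $\E|\xi_j|^p\le C_p$ uniformly in $n$ and $j$, without any assumption on the right tail of $F_\eta$, and Rosenthal then yields $\E|S_k|^p\le C_p K_n^{-p/2}$.

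Combining the two moment estimates, Markov and a union bound deliver
\[
\mathbb{P}\Bigl(\max_{k=1,\ldots,h_n^{-1}-1}\bigl|\hat\sigma^{2}_{kh_n-}-\Psi_n(\sigma_{kh_n}^{2})\bigr|>\varepsilon K_n^{-\gamma}\Bigr)\le C_p\,\varepsilon^{-p}\,h_n^{-1}\,K_n^{-p(1/2-\gamma)},
\]
which tends to $0$ for every fixed $\gamma<1/2$ once $p$ is chosen large enough, since $h_n^{-1}$ is polynomially bounded in $K_n$. The hard part will be the third step: turning~\eqref{noise_dist} and the non-negativity of the noise into uniform $p$-th moment bounds for the squared difference of adjacent block minima, without invoking any moment assumption on $F_\eta$. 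This asymmetric tail handling is the structural novelty compared with the MMN setup, in which $\xi_j$ would be a squared local average with automatic Gaussian tail control.
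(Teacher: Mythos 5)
Your proposal follows essentially the same route as the paper's proof: a bias term controlled by the H\"older regularity of $(\sigma_t)$ and the tuning \eqref{K_n}, a centred stochastic term bounded in $L^p$ via a Rosenthal-type inequality (the paper uses the martingale-increment extension for the uncorrelated sequences of block minima rather than an odd/even split, which is a cosmetic difference), uniform $p$-th moment bounds for the rescaled block minima derived from the one-sided noise condition \eqref{noise_dist} alone, and a Markov-plus-union bound over the $h_n^{-1}$ blocks absorbing the polynomial factor for $p$ large. The paper additionally makes explicit the preliminary uniform replacement of $Y_i-X_{kh_n}$ by $\epsilon_i+\sigma_{kh_n}(W_{t_i^n}-W_{kh_n})$ (freezing drift and volatility at block starts), which is the step that justifies your claimed conditional independence across non-adjacent blocks, but this is implicit in your tail-bound analysis and does not change the argument.
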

Condition \eqref{K_n} grants that the squared bias of order $(K_nh_n)^{2\alpha}$ is smaller in the MSE than the variance which decays with order $K_n^{-1}$. This yields the pointwise rate $K_n^{-1/2}$, and the uniform rate is almost the same. For all upcoming results, we do not require a certain rate of the volatility estimation and also not \eqref{K_n}. In particular, the regularity $\alpha$ from \ref{sigma} does not need to be known and the asymptotic results are valid for any $\alpha>0$.
\begin{cor}\label{uniformcor}Under the assumptions of Proposition \ref{uniform}, if $n h_n^{3/2}m_n^{-1}\to \infty$, the spot volatility estimator \eqref{simpleestimator} is uniformly consistent, i.e.
\[\max_{k=1,\ldots,h_n^{-1}-1}\big|\hat\sigma_{kh_n-}^2-\sigma_{kh_n}^2\big|=\KLEINO_{\P}\big(1\big)\,.\] 
\end{cor}
It is clear that consistency uniformly over the interval $(0,1)$ requires the assumption of a continuous volatility, see e.g., the discussion in Sec.\ 2.2 of \cite{uni2}. A generalization of this result to jumps in $(X_t)$ using the threshold estimator is possible. However, for the construction of our test we will rely on Corollary \ref{uniformcor}. Under MMN a uniformly consistent volatility estimation requires the existence of all moments of the noise, see \cite{mehmet}. It is clear that Rosenthal-type inequalities or related results to prove the uniformity require existence of higher moments of the considered statistics, in our case of the local minima. From this point of view, it might be surprising that we do not have to impose such assumptions for Proposition \ref{uniform}. Although our proof relies as well on maximal and moment inequalities, this is not the case here, since we only need \emph{moments of the local minima} for which \eqref{noise_dist} is sufficient. This is a crucial advantage of inference based on local order statistics compared to local averages, since higher moments of local averages exist only if the corresponding noise moments exist.

\subsection{Asymptotic results on the identification of jumps\label{sec:2.4}}
We start with an asymptotic result on the inference for jumps at some pre-specified time $\tau\in(0,1)$.
\begin{theo}\label{thmloc}
Under Assumptions \ref{noise}, \ref{sigma} and \ref{jumps}, for $n h_n^{3/2}m_n^{-1}\to \infty$, $\widehat {\Delta X_{\tau}}$ from \eqref{bhr} satisfies the stable weak convergence 
\begin{align}\label{stablecltbhr}h_n^{-1/2}\,\big(\widehat {\Delta X_{\tau}}-\Delta X_{\tau}\big)\stackrel{st}{\longrightarrow} Z_2-Z_1\,,\end{align}
with two random variables
\begin{align}Z_1\sim \hmn\big(0,\sigma_{\tau}^2\big)~,~Z_2\sim \hmn\big(0,\sigma_{\tau-}^2\big)\,,\end{align}
which are conditionally on $(\sigma_t)$ independent. HMN refers to the half mixed normal distribution, that is, $Z_1\stackrel{d}{=}\sigma_{\tau}|U|$, for $U\sim\mathcal{N}(0,1)$ standard normal. If $\Delta X_{\tau}=0$, 
\begin{align}\label{stablecltf}h_n^{-1/2}\,\Big(\frac{\hat X_{\tau}}{\hat\sigma^{(tr)}_{\tau+}}-\frac{\hat X_{\tau-}}{\hat\sigma^{(tr)}_{\tau-}}\Big)\stackrel{d}{\longrightarrow} \tilde Z_2-\tilde Z_1\,,\end{align}
with two independent random variables
\begin{align}\label{eqlocaltest}\tilde Z_1\sim \hn\big(0,1\big)~,~\tilde Z_2\sim \hn\big(0,1\big)\,.\end{align}
HN refers to the standard half-normal distribution.
\end{theo}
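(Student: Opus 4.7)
My plan is to reduce both minima to minima of $\sigma_\tau W + \epsilon$ on a Brownian scale on each side of $\tau$. By a standard localization for It\^{o} semi-martingales, I may assume $(a_t)$, $(\sigma_t)$, $(\sigma_t^{-1})$ and the jump coefficient $\delta$ are bounded. Using the decomposition \eqref{smdecomp}, for $n$ large enough both blocks $[\tau-h_n,\tau)$ and $[\tau,\tau+h_n]$ contain, with probability tending to one, only the single large jump $\Delta X_\tau$, so that for $i > n\tau$ one may replace $X_{t_i^n}-X_\tau$ by $\int_\tau^{t_i^n} a_s\,ds + \int_\tau^{t_i^n}\sigma_s\,dW_s$ plus the small-jump remainder whose block-supremum is $\mathcal{O}_\P(h_n^{1/r})$ by Assumption~\ref{jumps}; together with the drift $\mathcal{O}(h_n)$, this is $\KLEINO_\P(h_n^{1/2})$ whenever $r<2$. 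The jump $\Delta X_\tau$ itself is added to \emph{every} term of the right-hand minimum and so passes through $\hat X_\tau - \hat X_{\tau-}$ untouched.

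\textbf{Brownian approximation.} Next I replace $\int_\tau^{t_i^n}\sigma_s\,dW_s$ by $\sigma_\tau(W_{t_i^n}-W_\tau)$. By Burkholder--Davis--Gundy together with \eqref{vola}, the block-supremum of the error is $\mathcal{O}_\P(h_n^{1/2+\alpha}) = \KLEINO_\P(h_n^{1/2})$. Hence
\[
\hat X_\tau - X_\tau = \min_{i=n\tau+1,\dots,n\tau+nh_n}\!\bigl(\sigma_\tau(W_{t_i^n}-W_\tau)+\epsilon_i\bigr) + \KLEINO_\P(h_n^{1/2}),
\]
and analogously for $\hat X_{\tau-} - X_{\tau-}$ on the left block. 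Time-rescaling $s=(i-n\tau)/(nh_n)$ and invoking the reflection principle gives $h_n^{-1/2}\min_i \sigma_\tau(W_{t_i^n}-W_\tau) \stackrel{st}{\longrightarrow} \sigma_\tau \min_{t\in[0,1]}\widetilde W_t \stackrel{d}{=} -\sigma_\tau|U|$, with $\widetilde W$ a Brownian motion independent of $\mathcal{F}_\tau$, producing the half mixed normal scale.

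\textbf{Negligibility of the noise and stable convergence.} The core quantitative step is to bound the perturbation contributed by $\epsilon$,
\[
h_n^{-1/2}\Bigl[\min_i\bigl(\sigma_\tau(W_{t_i^n}-W_\tau)+\epsilon_i\bigr) - \min_i\sigma_\tau(W_{t_i^n}-W_\tau)\Bigr] = \KLEINO_\P(1).
\]
For any $\delta>0$, the Brownian occupation-time/local-time bound gives $\gtrsim \delta\cdot nh_n$ indices within $\delta\sqrt{h_n}$ of the Brownian minimum on the block with overwhelming probability; by \eqref{noise_dist} the minimum of the independent $\epsilon_i$ over such a subset is $\mathcal{O}_\P((\eta\,\delta\, nh_n)^{-1})$. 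Balancing $\delta\sqrt{h_n}$ against $(\eta\,\delta\, nh_n)^{-1}$ gives an overall perturbation $\mathcal{O}_\P(n^{-1/2}h_n^{-1/4})$, which is $\KLEINO(h_n^{1/2})$ precisely under $h_n n^{2/3}\to\infty$. This is the main obstacle: everything else uses off-the-shelf semi-martingale machinery, but this one-sided order-statistic tail bound is exactly where \eqref{noise_dist} is used essentially and why no moment assumption on $\epsilon$ beyond the boundary behaviour is needed. Stable convergence is inherited from the $\mathcal{F}_\tau$-measurability of $\sigma_\tau$ and the $\mathcal{F}_\tau$-independence of post-$\tau$ Brownian increments and noise; the left block yields an independent copy because it involves disjoint Brownian increments and disjoint noise variables, giving jointly the limit $Z_2-Z_1$ with the stated conditional independence.

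\textbf{Feasible version under $\Delta X_\tau = 0$.} When the efficient log-price has no jump at $\tau$, $\sigma$ is continuous at $\tau$ and $X_\tau = X_{\tau-}$. Applying Proposition~\ref{uniform} (in its two-sided variants based on \eqref{simpleestimator2} and \eqref{simpleestimator3}) pointwise at $\tau$ gives $\hat\sigma^{(tr)}_{\tau\pm}\stackrel{\P}{\to}\sigma_\tau$, and the infimum $\inf_t\sigma_t>0$ from Assumption~\ref{sigma} keeps the denominators bounded away from zero. Combined with \eqref{stablecltbhr} and Slutsky, the half mixed normal scale $\sigma_\tau$ is absorbed into the normalisation and the limit becomes a difference of two independent standard half normals, i.e., \eqref{stablecltf}. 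The remaining bookkeeping consists of verifying that the truncation threshold $u_n = \beta^{tr}h_n^\kappa$ does not discard any of the statistically relevant increments on either side of $\tau$ under the no-jump null, which is immediate from $u_n \gg h_n^{1/2}$ and the rate in paragraph~3.
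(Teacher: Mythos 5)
Your overall architecture matches the paper's up to and including the Brownian approximation: the paper's Lemma 6.1-type reduction (its Lemma 1) performs exactly your localization and replacement of $X_{t_i^n}-X_\tau$ by $\sigma_\tau(W_{t_i^n}-W_\tau)$ plus $\KLEINO_\P(h_n^{1/2})$, with the same treatment of drift, volatility increments and small jumps under $r<2$. Where you genuinely diverge is the distributional step. The paper (its Lemma 3) computes the conditional survival function of the noisy block minimum, arrives at a Laplace-type functional $\E[\exp(-\eta\sigma_\tau nh_n^{3/2}\int_0^1(B_t-x)_-\,dt\,(1+\KLEINO(1)))]$, and shows it converges to $\P(\inf_{[0,1]}B\ge x)$ using a small-ball estimate $\P(\int_0^1(W_t)_-dt\le x)=\mathcal{O}(x^{1/3})$; the reflection principle then gives the half-normal. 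You instead sandwich the noisy minimum between the pure Brownian minimum and that minimum plus an occupation-time/order-statistic error. This is a more transparent route to the same limit and makes it visible why $nh_n^{3/2}\to\infty$ and \eqref{noise_dist} are exactly what is needed. The stable-convergence and feasibility paragraphs are thinner than the paper's (which runs the standard isotonic $\sigma$-field argument explicitly), but the ideas are the right ones.

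Two concrete issues in your key step. First, the occupation count is wrong: the number of grid points with $W_{t_i^n}-W_\tau$ within $\delta\sqrt{h_n}$ of the block minimum is of order $\delta^2\, nh_n$, not $\delta\, nh_n$ — near its minimum the Brownian path behaves like a Bessel(3) process, whose occupation density at level $a$ above the minimum vanishes linearly in $a$, so the time spent within $\delta$ of the minimum scales as $\delta^2$. Redoing your balance with $\delta^2 nh_n$ points gives a perturbation of order $(\eta n)^{-1/3}$ rather than $n^{-1/2}h_n^{-1/4}$; fortuitously this is still $\KLEINO(h_n^{1/2})$ precisely when $h_nn^{2/3}\to\infty$, so the conclusion survives, but the stated bound does not. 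Second, the phrase ``with overwhelming probability'' hides the real work: you need a quantitative lower bound, uniform as $\delta=\delta_n\to0$, of the form $\P(\mathrm{Leb}\{t:B_t\le\min B+\delta\}\le\epsilon\delta^2)\to0$ as $\epsilon\to0$. This is not off-the-shelf; it is the exact counterpart of the paper's small-ball Lemma on $\int_0^1(W_t)_-\,dt$, and your proof is incomplete without supplying it. Finally, a small point on \eqref{stablecltf}: the theorem does not assume $\sigma$ is continuous at $\tau$ when $\Delta X_\tau=0$ (the two one-sided estimators $\hat\sigma^{(tr)}_{\tau\pm}$ are there precisely to accommodate a volatility jump), so you should drop that claim and argue with $\hat\sigma^{(tr)}_{\tau+}\to\sigma_\tau$ and $\hat\sigma^{(tr)}_{\tau-}\to\sigma_{\tau-}$ separately.
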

The standardization in \eqref{stablecltf}, where $\hat\sigma^{(tr)}_{\tau-}$ and $\hat\sigma^{(tr)}_{\tau+}$ are the square roots of the estimators \eqref{truncatedestimator} and the truncated version of \eqref{simpleestimator3}, takes into account possible simultaneous price and volatility jumps, see \cite{voljumps} and \cite{bibwinkellev} for empirical evidence of such simultaneous jumps. Figure~\ref{Fig:coh} moreover illustrates two examples. Stable convergence is stronger than weak convergence and is important here, since the limit random variables hinge on the stochastic volatility. More precisely, we prove stability with respect to $\mathcal{F}^X$ in the sense of \cite{JP}, Sec.\ 2.2.1. Since the asymptotic distribution does not hinge on the noise level $\eta$, in contrast to methods for MMN, we do not require any pre-estimation of noise parameters. Moreover, our methods and results remain valid for time-varying noise levels $\eta_t$ in \eqref{noise_dist}, under the mild assumption that $0<\eta_t<\infty$, for all $t$. 

Theorem \ref{thmloc} shows that {\emph{we can consistently estimate price jumps}}. Setting $h_n=\log(n)n^{-2/3}$, $n h_n^{3/2}m_n^{-1}\to \infty$ is satisfied and the convergence rate is $n^{-1/3}$, up to a logarithmic factor. Moreover, asymptotic confidence is feasible and \eqref{eqlocaltest} yields the following local jump test.
\begin{cor}\label{corollary}The test for the null hypothesis that there is no jump at time $\tau\in(0,1)$, $\Delta X_{\tau}=0$, which rejects when
\begin{align}h_n^{-1/2}\,\Big|\frac{\hat X_{\tau}}{\hat\sigma^{(tr)}_{\tau+}}-\frac{\hat X_{\tau-}}{\hat\sigma^{(tr)}_{\tau-}}\Big|>q_{1-\alpha}^{L}\end{align}
with $q_{1-\alpha}^{L}$ being the $(1-\alpha)$ quantile of the distribution of $|\tilde Z_2-\tilde Z_1|$ and the half-normal random variables from \eqref{eqlocaltest}, has asymptotic level $\alpha$ and asymptotic power 1. That is, under $H_1:|\Delta X_{\tau}|>0$, and $\sigma_{\tau}^{-1}J_{\tau}\ne \sigma_{\tau-}^{-1}J_{\tau-}$, it holds for any $\alpha>0$ that
\begin{align}\label{consistloc}\lim_{n\to\infty}\P_{H_1}\bigg(h_n^{-1/2}\,\Big|\frac{\hat X_{\tau}}{\hat\sigma^{(tr)}_{\tau+}}-\frac{\hat X_{\tau-}}{\hat\sigma^{(tr)}_{\tau-}}\Big|>q_{1-\alpha}^{L}\bigg)=1\,.\end{align}
\end{cor}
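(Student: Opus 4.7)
The proof splits naturally into the level and consistency claims, both flowing mostly from Theorem~\ref{thmloc} together with the spot volatility estimator consistency from Section~\ref{sec:2.3}.

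For the asymptotic level, abbreviate $T_n=h_n^{-1/2}\bigl(\hat X_\tau/\hat\sigma^{(tr)}_{\tau+}-\hat X_{\tau-}/\hat\sigma^{(tr)}_{\tau-}\bigr)$. Under $H_0$, equation \eqref{stablecltf} yields $T_n\stackrel{d}{\to}\tilde Z_2-\tilde Z_1$, and hence $|T_n|\stackrel{d}{\to}|\tilde Z_2-\tilde Z_1|$ by the continuous mapping theorem. Since $\tilde Z_1,\tilde Z_2$ are independent half-normals, the law of $|\tilde Z_2-\tilde Z_1|$ is absolutely continuous on $[0,\infty)$; in particular $q_{1-\alpha}^{L}$ is a continuity point of its cdf, so the Portmanteau theorem gives $\P_{H_0}(|T_n|>q_{1-\alpha}^{L})\to\alpha$.

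For consistency, the plan is to prove $|T_n|\stackrel{\P}{\to}\infty$ under $H_1$ by combining two convergences. First, the block-wise local minima satisfy $\hat X_\tau-X_\tau=\KLEINO_\P(1)$ and $\hat X_{\tau-}-X_{\tau-}=\KLEINO_\P(1)$; this follows from $h_n\to 0$, $nh_n\to\infty$ and \eqref{noise_dist}, in fact quantitatively with rate $\mathcal{O}_\P(h_n^{1/2})$ by the argument underlying \eqref{stablecltbhr}. Second, the truncated spot volatility estimators are jump-robust and consistent at the candidate jump time, $\hat\sigma^{(tr)}_{\tau\pm}\stackrel{\P}{\to}\sigma_{\tau\pm}$, by \eqref{spotclttr} together with \eqref{psiapprox}. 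Continuous mapping then yields $h_n^{1/2}T_n\stackrel{\P}{\to}X_\tau/\sigma_\tau-X_{\tau-}/\sigma_{\tau-}$. Writing $X_\tau=C_\tau+J_\tau$ and using that $C$ is continuous, so $C_\tau=C_{\tau-}$, the limit equals
\begin{align*}
\frac{J_\tau}{\sigma_\tau}-\frac{J_{\tau-}}{\sigma_{\tau-}}+C_\tau\,\frac{\sigma_{\tau-}-\sigma_\tau}{\sigma_\tau\,\sigma_{\tau-}},
\end{align*}
which is nonzero under the stated hypothesis $\sigma_\tau^{-1}J_\tau\neq\sigma_{\tau-}^{-1}J_{\tau-}$. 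Multiplying by $h_n^{-1/2}\to\infty$ then gives $|T_n|\stackrel{\P}{\to}\infty$, and \eqref{consistloc} follows.

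The genuine technical obstacle lies in the second ingredient: the jump-robustness of $\hat\sigma^{(tr)}_{\tau\pm}$ when the estimation window sits immediately next to the jump time $\tau$. The threshold $u_n=\beta^{tr}h_n^{\kappa}$ with $\kappa\in(\tfrac{1}{2-r}\tfrac{\alpha}{2\alpha+1},\tfrac12)$ is calibrated so that differences $m_{k,n}-m_{k-1,n}$ driven only by the continuous and noise parts are retained with high probability, while the single block straddling a macroscopic jump is filtered out. Verifying this clean separation uses the same truncation analysis that delivers \eqref{spotclttr}; once this is in place, the remainder of the argument is elementary.
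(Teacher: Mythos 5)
Your proposal is correct in substance and follows essentially the same route as the paper: the level claim is read off from \eqref{stablecltf} by continuous mapping and the continuity of the law of $|\tilde Z_2-\tilde Z_1|$, and consistency comes from $\hat X_{\tau\pm}=X_{\tau\pm}+\mathcal{O}_{\P}(h_n^{1/2})$ (Lemmas \ref{lem1} and \ref{lem2}), consistency of the truncated spot volatility estimators (which the paper also simply defers to \eqref{spotclttr}), and the diverging factor $h_n^{-1/2}$. The one step to tighten is your final non-degeneracy claim. You correctly write the limit of $h_n^{1/2}T_n$ as $J_\tau/\sigma_\tau-J_{\tau-}/\sigma_{\tau-}+C_\tau(\sigma_{\tau-}-\sigma_\tau)/(\sigma_\tau\sigma_{\tau-})$, but the hypothesis $\sigma_\tau^{-1}J_\tau\ne\sigma_{\tau-}^{-1}J_{\tau-}$ only makes the first difference nonzero; when $\Delta\sigma_\tau\ne0$ the level-dependent cross term need not vanish and could in principle cancel the jump term, so the hypothesis as stated does not by itself imply the full limit is nonzero. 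The paper's own proof handles this identically in spirit -- it writes the statistic as $h_n^{-1/2}\big|J_\tau/\hat\sigma_{\tau+}-J_{\tau-}/\hat\sigma_{\tau-}\big|-\mathcal{O}_{\P}(1)$ and silently absorbs the $C_\tau$ contribution into the remainder -- so this is not a defect relative to the paper, but since your write-up surfaces the extra term explicitly, you should either note that it vanishes when $\sigma$ is continuous at $\tau$ or treat the non-cancellation as an additional (generic) non-degeneracy condition rather than asserting it follows from the stated hypothesis.
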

Standardizing the local minima with estimates of the volatility before and after time $\tau$, the method is robust with respect to a volatility jump, $\Delta\sigma_{\tau}\ne 0$. It is intuitive that this requires the mild assumption that $\sigma_{\tau}^{-1}J_{\tau}\ne \sigma_{\tau-}^{-1} J_{\tau-}$.
\begin{remark}
In the MMN-model jumps can be estimated with an optimal rate of convergence $n^{-1/4}$ based on $n$ equidistant high-frequency observations, see, for instance, the LAN result in Proposition 5.2 of \cite{koike2017}. Hence, the rate of convergence for LOMN is faster. A lower bound for the rate in the LOMN-model is $n^{-1/3}$, and we attain this optimal rate up to a logarithmic factor. For the same reasons as for volatility estimation discussed in Remark 1 of \cite{bibinger2022}, it is not possible to develop feasible tests attaining the rate $n^{-1/3}$ without a logarithmic factor.
\end{remark}
Using extreme value theory, we present a result for a global test for price jumps in the spirit of the Gumbel test by Lee and Mykland for high-frequency prices without noise in \cite{leemykland} and with MMN in \cite{lm12}, respectively.
\begin{theo}\label{thmgumbel}On the null hypothesis of no jumps,
\begin{align}\label{globalhypo}H_0:\sup_{\tau\in [0,1]}|\Delta X_{\tau}|=0\,,\end{align}
under Assumptions \ref{noise}, \ref{sigma} and \ref{jumps} with $(\sigma_t)$ Hölder continuous with regularity $\alpha$, with $h_n=2\log(2h_n^{-1}-2)n^{-2/3}(1+\KLEINO(1))$, the statistic \eqref{bhrgl} satisfies the Gumbel convergence
\begin{align}\label{gumbel}n^{1/3}\,T^{BHR}(Y_0,Y_1\ldots,Y_n)-2\log(2h_n^{-1}-2)+\log\big(\pi\log(2h_n^{-1}-2)\big)\stackrel{d}{\longrightarrow}\Lambda\,,\end{align}
where $\Lambda$ refers to the standard Gumbel distribution, that is, it holds with $B_n=2\log(2h_n^{-1}-2)-\log\big(\pi\log(2h_n^{-1}-2)\big)$, for all $x\in\R$ that 
\begin{align}\label{gumbelex}
\lim_{n\to\infty}\P_{H_0}\Big(n^{1/3}\,T^{BHR}(Y_0,Y_1\ldots,Y_n)-B_n\le x\Big)= \exp\big({-e^{-x}}\big)\,.
\end{align}
The test that rejects $H_0$ whenever
\begin{align}n^{1/3}\,T^{BHR}(Y_0,Y_1\ldots,Y_n)-B_n>q_{1-\alpha}^{\thinspace\Lambda}\,,\end{align}
for $q_{1-\alpha}^{\thinspace\Lambda}$ being the $(1-\alpha)$ quantile of the Gumbel distribution, has asymptotic level $\alpha$. 
Moreover, under the alternative hypothesis that
\begin{align}H_1:\liminf_{n\to\infty}n^{\beta}\sup_{\tau\in(0,1)}|\Delta X_{\tau}|>0,~\mbox{for some}~\beta<1/3\,,\end{align}
under Assumptions \ref{noise}, \ref{sigma} and \ref{jumps}, the test rejects asymptotically with probability 1:
\begin{align}\label{consist}\lim_{n\to\infty}\P_{H_1}\bigg(n^{1/3}\,T^{BHR}(Y_0,Y_1\ldots,Y_n)-B_n>q_{1-\alpha}^{\thinspace\Lambda}\bigg)=1\,.\end{align}
\end{theo} 
The condition of a continuous volatility is required to use Proposition \ref{uniform} in the proof of the Gumbel convergence on the null hypothesis. Since volatility jumps are typically associated with events that simultaneously trigger price jumps, it is not too restrictive to work under a null hypothesis that there are no price and no volatility jumps. For the results under the alternative hypothesis, we only require pointwise consistency of the spot volatility estimator and allow for volatility jumps.

In contrast to the Gumbel convergences in \cite{leemykland}, \cite{lm12} and \cite{bibwinkellev}, we cannot trace back our result \eqref{gumbel} to the Gumbel convergence for the maximum of i.i.d.\ standard normally distributed random variables. Instead, we prove that the statistic $T^{BHR}$ can be approximated by the maximum of absolute differences of 1-dependent half-normally distributed random variables. We then establish the extreme value theory for these random variables. Since we expect this and related results to be of interest in their own right for extreme value theory and its applications to various high-frequency jump tests, the result embedded into a more general theory has been provided in \cite{arxiv2}. \cite{Nunes} adds a recent discussion of the convergence rates obtained in \cite{leemykland}, \cite{lm12} and \cite{arxiv2} and shows that they are coherent.

The result \eqref{consist} implies \emph{consistency} of the test, i.e., it rejects asymptotically almost surely if there is a jump, $\sup_{\tau\in [0,1]}|\Delta X_{\tau}|>0$. The stronger result \eqref{consist} addresses moreover \emph{local alternatives}. The sequence of tests can detect jumps with decreasing sizes in $n$, as long as the jump sizes decrease slower than $n^{-1/3}$. This result provides information about what jump sizes can be detected for a given sample size $n$. While in the MMN-model, $n^{-1/4}$ is the optimal rate for local alternatives, our rate $n^{-\beta}$, for any $\beta<1/3$ , is much faster. This shows that for the same sample size, we can \emph{detect much smaller jumps}.

The identification of jumps is complemented by the following localization of jump times.
\begin{prop}\label{localize}If the test rejects, and if there is exactly one jump at time $\theta\in(0,1)$, the jump time is consistently estimated by
\begin{align}\hat\theta_n=h_n\operatorname{argmax}_{1\le k\le h_n^{-1}-1}\frac{|m_{k,n}-m_{k-1,n}|}{\hat\sigma_{kh_n}}\,,\end{align}
with rate of convergence $|\hat\theta_n-\theta|=\mathcal{O}_{\P}(h_n)$.
\end{prop}
The argmax identifies the block of length $h_n$ containing the jump, and consistency holds as $h_n\to 0$. If multiple jumps occur on different blocks under the alternative, the argmax-estimator identifies the block with the largest absolute jump. A sequential application of the test and argmax-estimator allows consistent estimation of the multiple jump times. In each iteration of the sequential test, the previous maximum absolute difference, and thus the previously detected block with a jump, is discarded. If there is a finite number $N$ of blocks containing jumps, $N$ can be estimated consistently using a sequence of levels $\alpha_n\to 0$, adopting the procedure from Theorems 3-5 by \cite{leemykland}. Jump sizes can be estimated consistently by Theorem 1, such that our methods allow full recovery of the jumps in the efficient price. Moreover, associated \emph{online jump detection} is feasible, where the \emph{advantage of speed} for local minima of observations under LOMN -- illustrated in Figure 2 -- becomes relevant.
\begin{cor}\label{onlinecor}
If for an observation $Y_i$ on the $k$th block, $i\in\mathcal{I}_k^n$, the distance to the local minimum $m_{k-1,n}$ on the previous block, with $\mathcal{I}_k^n$ and $m_{k-1,n}$ defined in \eqref{localmin}, exceeds the threshold
\begin{align}\label{onlineeq}Y_i-m_{k-1,n}<n^{-1/3}\hat\sigma_{kh_n}\big(\log(-\log(1-\alpha))-B_n\big)\,,\end{align}
with $B_n$ from Theorem \ref{thmgumbel} and $\hat\sigma_{kh_n}$ the online spot volatility estimator as in \eqref{bhrgl}, a negative jump of the efficient price $(X_t)$ is detected at level $\alpha$ and located on the time interval $(kh_n,t_i^n)$.
\end{cor}
If there is a jump, it is detected online with asymptotic probability 1. A false detection has asymptotic probability $\alpha$ given by the level. Like the global test in Theorem \ref{thmgumbel}, online jump detection can also detect positive jumps and separate positive and negative jumps. The advantage of speed of local minima, however, is particularly of interest to detect negative jumps online as fast as possible. In practice, Corollary \ref{onlinecor} is implemented with running local minima, because a running minimum over a block equals $Y_i$, as soon as $Y_i$ enters this running minimum -- thanks to the monotonicity of minima with respect to subset relations. Local averages of observations under MMN can -- in the best case if not perturbed by the pulverization effect -- detect a jump on the $k$th block at its endpoint $(k+1)h_n$. Moreover, the block lengths $h_n$ used for local averages are proportional to $n^{-1/2}$ and hence longer than for our methods applied to observations with LOMN. As illustrated in Figure \ref{Fig:locmin}, the estimation of the jump time with accuracy $t_i^n-kh_n$ can be more precise than the general upper bound $h_n$ in Proposition \ref{localize}. An analogous result holds true for local maxima of observations with one-sided, upper-bounded noise, which show an advantage of speed in detecting positive jumps, see the illustration in Figure \ref{plotR2} of \ref{S2}.

\section{Simulations and finite sample behavior\label{sec:3}}
The aim of this simulation study is twofold. Firstly, we evaluate the finite-sample performance of the main theoretical results and, secondly, we provide a comparison of jump tests for the LOMN-model and the MMN-model. For both cases, we simulate $n = 23{,}400$ observations, corresponding to one observation per second over a (NASDAQ) trading day of 6{.}5 hours. The efficient log-price process under the null hypothesis is sampled from
\begin{subequations}
\begin{align}
	dX_t &= v_t \sigma_t dW_t \label{dXt}\\
	\label{sim2}d\sigma_t^2 &= 0.0162 \cdot (0.8465 -\sigma_t^2)dt + 0.117\cdot \sigma_tdB_t\\
	\label{sim3}v_t &= (1.2-0.2\cdot \sin(3/4 \pi t)) \cdot 0.01 \quad\text{with}\quad t\in[0,1].
\end{align}
\end{subequations}
The factor $v_t$ generates a typical U-shaped intraday volatility pattern and $(W_t, B_t)$ is a two-dimensional Brownian motion with leverage $d[W,B]_t = -0.5dt$. This setup captures a variety of realistic features of financial high-frequency data. Variants thereof are frequently employed in the literature, see, e.g., \citet{lm12} and  \citet{bibwinkellev} as well as the references therein. Under the alternative, \eqref{dXt} is augmented by a jump occurring at a random time point, however neither close to the beginning nor to the end of the sampled trajectory. We consider both positive and negative jumps. The sizes of the jumps under the alternative are given in Tables \ref{tab:most_global} -- \ref{tab:local_comparison} in absolute value. They are chosen to illustrate the transition from non-detectable to detectable sizes depending on the noise level as well as the block length $h_n$ as discussed below.
R code and replication files for all simulations are publicly available.\footnote{\href{https://github.com/bibinger/LOMN}{github.com/bibinger/LOMN}}

\begin{table}
\centering
	\begin{tabular}{c|ccccccc}
		\hline\hline
			 & &  \multicolumn{6}{c}{$\vert \text{jump size}\vert$} \\
		\hline
		 $q$ & $0.0\%$ & $0.1\%$ & $0.2\%$ & $0.3\%$& $0.4\%$ & $0.5\%$\\
		\hline
			 $ 0.010\%$ & 0.05 & 0.09 & 0.65 & 0.98 & 1.00 & 1.00 \\
			 $ 0.025\%$ & 0.06 & 0.09 & 0.64 & 0.98 & 1.00 & 1.00 \\
		 $ 0.050\%$ & 0.05 & 0.08 & 0.60 & 0.96 & 1.00 & 1.00 \\
			 $ 0.075\%$ & 0.05 & 0.09 & 0.58 & 0.96 & 1.00 & 1.00 \\
			 $ 0.100\%$ & 0.05 & 0.08 & 0.55 & 0.95 & 1.00 & 1.00  \\
		\hline\hline
	\end{tabular}
	\caption{Simulation results for the global test on a significance level of $\alpha=5\%$. The column $\vert \text{jump size}\vert=0.00\%$ gives the estimated size and the columns with $\vert \text{jump size}\vert>0\%$ give the estimated power for the corresponding jump sizes.}
	\label{tab:most_global}
\end{table}
\subsection{Size and power of the global test from Theorem~\ref{thmgumbel}}\label{sec:sim_global}
Based on the setup for $(X_t)$, \eqref{dXt} - \eqref{sim3}, the noisy observations are generated by
\begin{align}
	Z_{i}&= X_{i/n} + \epsilon_{i} \quad\text{with}\quad\epsilon_{i+1}=-0.5 \epsilon_{i} + \eta_{i+1} \quad\text{and}\quad \eta_i\overset{iid}{\sim}N(0, q^2),\quad i=0,\ldots,n,\label{sim_autoc}
\end{align}
where the noise level $q$ is shown in Table~\ref{tab:most_global}. The noise is serially correlated. Autocorrelation patterns of high-frequency returns similar to first-order autoregressive time series with negative parameter are documented in several papers, e.g., see \cite{robert2010} and \cite{Griffin2008}, and for our data this is found in \ref{S3}. It should be noted that we first impose standard two-sided Gaussian noise in \eqref{sim_autoc} and then generate an implicit sample of simulated ask quotes $Y_{i}^{A}$ by using $Y_{i}^{A}= Z_{i}$, whenever $Z_{i} > X_{i/n}$, what yields one-sided additive noise. Bid quotes can be simulated setting $Y_{i}^{B}= Z_{i}$, whenever $Z_{i} < X_{i/n}$. The main reason for this implicit sampling scheme lies in illustrating the coherence of the proposed test with samples stemming from MMN-models. As in our data analysis, methods for MMN and LOMN can be applied to time series based on the same data set. Relying on this sampling scheme makes also clear that our test can be applied to transaction data if information is available whether the transaction is buyer-initiated or seller-initiated. Furthermore, this sampling scheme incorporates the realistic feature that the effective sample size of ask or bid quotes is just about half of the original sample size $n$ of available mid quotes and the observations $Y_{i}^{A}$ are in contrast to $Z_{i}$ not equally spaced any more. The results of this sub-section are based on simulated ask quotes only. 

To perform the jump test following Theorem~\ref{thmgumbel}, the block length $h_n$ and the spot volatility estimates $\hat\sigma_{kh_n}^2$ have to be determined. Our asymptotic results are worked out under the condition that $h_n>n^{-2/3}$, we choose $h_n = 1{.}3 \cdot n^{-2/3}$ here. The spot volatility is estimated by \eqref{simpleestimator2}. As in the simulations of \citet{bibinger2022}, the spot volatility estimator is tuned rather conservatively by averaging over many $(K_n = 200)$ relatively short intervals of local minima $(n h_n = 30)$. Even though this finite-sample tuning can be sub-optimal for the larger noise levels, it estimates the volatility path quite robustly. As demonstrated in Sec.\ 5.1 of \cite{bibinger2022}, the function $\Psi_n$ in \eqref{spotclt} needs to be taken into account as the first-order approximation \eqref{psiapprox} generates a non-negligible finite-sample bias for this tuning. To account for this bias, the spot volatility estimator is multiplied with the correction factor $0{.}954$ as suggested in Sec.\ 5.1 of \citet{bibinger2022}.

\begin{figure}[t]
	\begin{minipage}{1\textwidth}
     		\centering
        		\includegraphics[width=6.5cm]{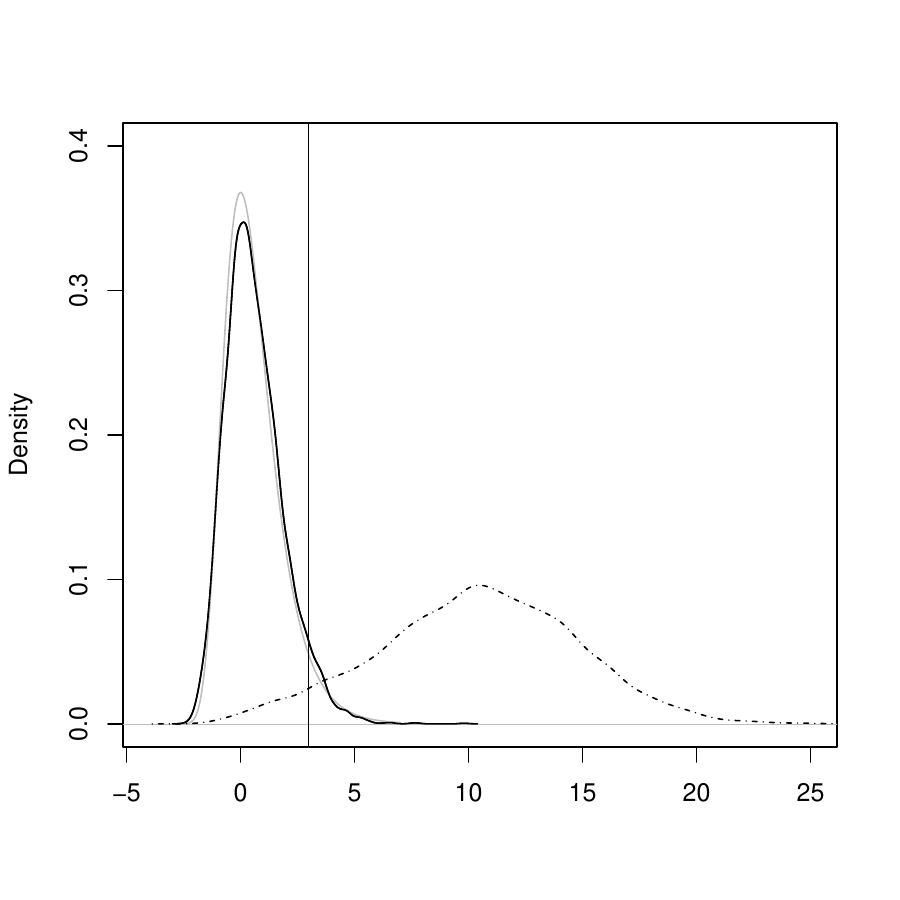}
		\includegraphics[width=6.5cm]{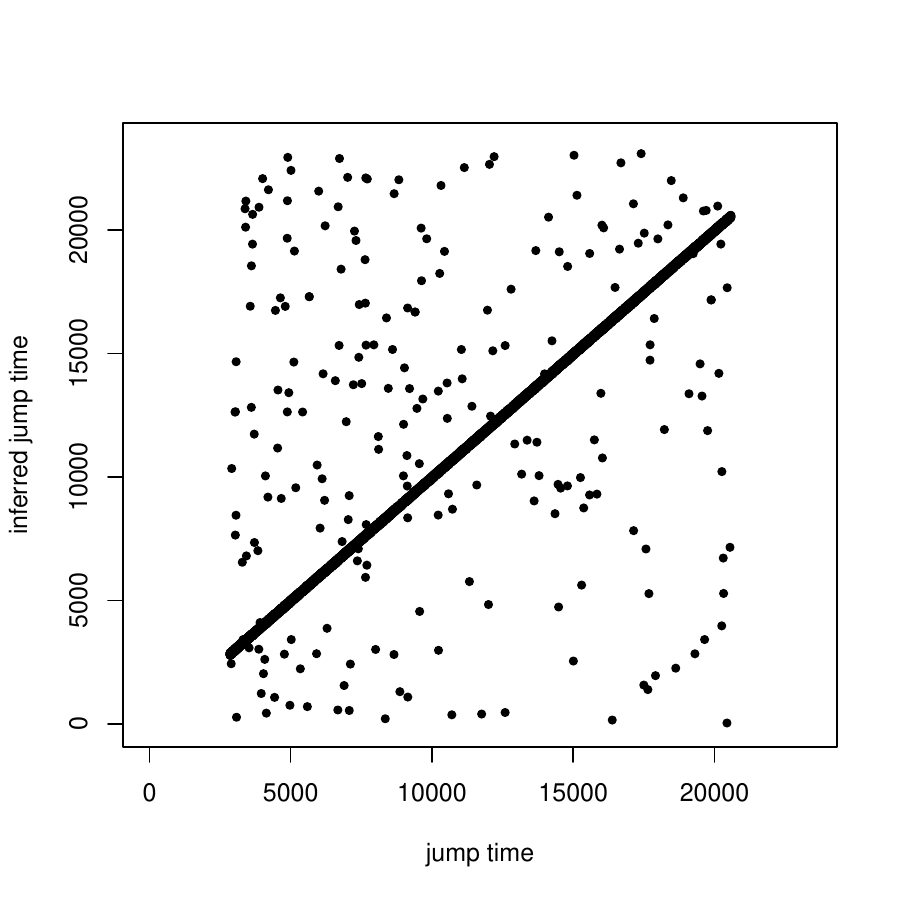}
    	\end{minipage}
		\caption{\label{Fig:density}Left panel: Kernel density estimates of the standardized version of the test statistic $T^{BHR}$ under the null (black solid line) and under the alternative (dashed-dotted line) for $\vert \text{jump size}\vert = 0.3\%$ and noise level $q = 0.1\%$. The gray solid line depicts the density function of the standard Gumbel distribution. The black and gray vertical lines (almost indistinguishable) are the $95\%$-quantile of the standardized version of the test statistic $T^{BHR}$ under the null and the $95\%$-quantile of the standard Gumbel distribution. Right panel: Jump times vs.\ inferred jump times for $\vert \text{jump size}\vert = 0.3\%$ and noise level $q = 0.1\%$.}
\end{figure}

Table~\ref{tab:most_global} documents the size ($\vert \text{jump size}\vert=0\%$) and power ($\vert \text{jump size}\vert>0\%$ ) of the global test for different noise levels $q$. We observe that the test keeps its level. The power of the test decreases slightly for rising noise level and a given jump size. Here, jumps of small size are more likely to be detected in case of low noise, while jumps of larger size can be detected for all noise levels considered. For $\vert \text{jump size}\vert = 0.3\%$ and $q = 0.1\%$, the performance of the test is illustrated in Figure~\ref{Fig:density} which shows kernel density estimates of the standardized version of the test statistic $T^{BHR}$ under the null and under the alternative. For comparison, in Figure~\ref{Fig:density}, we also show the density function of the standard Gumbel distribution. Even though there are minor deviations in the center of the distribution of the standardized test statistics compared to the standard Gumbel distribution, the right tail of the distribution of the test statistics under the null is quite accurately approximated. In the right panel of Figure~\ref{Fig:density} we plot the true jump times against the inferred jump times associated with the interval maximizing the test statistic $T^{BHR}$ for $\vert\text{jump size}\vert = 0.3\%$. The times are thus determined by the estimator from Proposition \ref{localize}. As most of the points are on the main diagonal, we conclude that the jump times are correctly estimated in most cases. As more points are off the diagonal in the morning than around lunch time, we further conclude that the test performs slightly worse in intraday periods, where the volatility tends to be higher. This is expected, since higher volatility makes jump detection more difficult.

For comparison purposes, the size and power of the global test from Theorem~\ref{thmgumbel} are documented under i.i.d.\ exponentially distributed noise and equispaced observations in the \ref{S1}.

\subsection[A comparison of jump detection under MMN and LOMN]{A comparison of jump detection under MMN and LOMN\label{sec:3.2}}
We compare jump detection based on our methods under one-sided noise in simulations with jump detection in the standard MMN-model. As an established jump test under MMN we use the one by \citet{lm12}. That both methods use maximum differences of local statistics and asymptotic Gumbel quantiles yields an informative comparison. This is understood to be a comparison of LOMN- vs.\ MMN-model and not of the different tests, which both attain desirable asymptotic properties in the models they are built for. In practice, limit order book data allows to reconstruct different time series as mid quotes, best ask and best bid quotes modelled with MMN and one-sided noise, respectively. We apply the test by \citet{lm12} only to simulated observations with MMN and our test only to observations with LOMN. The local statistics are local averages under MMN, local minima under LOMN and local maxima for upper-bounded, one-sided noise.

In a {\textbf{first scenario}}, we compare the baseline models with i.i.d.\ noise and equispaced observations for equal sample sizes. We generate observations of both models by
\begin{align}
	Y_{i}^{A} &= X_{i/n} + (1 - 2 \pi^{-1})^{-1/2} q \vert\varepsilon_{i}\vert \quad\text{with}\quad\varepsilon_{i}\overset{iid}{\sim}\mathcal{N}(0, 1),\label{eq.sim1}\\
	Z_{i} &= X_{i/n} + q\varepsilon_{i}\label{eq.sim2}
\end{align}
where $Y_{i}^{A}$ are the simulated ask quotes based on the LOMN-model and $Z_{i}$ are the observations under MMN, $i=0,\ldots,n$. Note that $(1 - 2 \pi^{-1})^{-1/2} q \vert\varepsilon_{i}\vert \sim HN(0, q^2)$, such that the noise variances are identical. It can be easily verified that half-normally distributed noise satisfies \eqref{noise_dist}. In both models, we have the same underlying efficient log-prices $(X_{i/n})_{0\le i\le n}$. 

In a {\textbf{second scenario}}, we generate observations of a model that incorporates both additive MMN and rounding errors, following \citet{li2018}, by
\begin{align}
	Z_{i} &= \log \left( \exp \left(X_{i/n} + \epsilon_{i}\right)^{(s)} \right)\quad \text{with}\quad\epsilon_{i}\overset{iid}{\sim}\mathcal{N}(0, q^2),\label{eq.sim1alt}
\end{align}
with $i=0,\ldots,n$ and $x^{(s)}:=s \vee ([x/s]s)$ for any $x>0$ (i.e., $x$ is rounded to the nearest multiples of tick size $s$ which is set to $s=0.01$ here). Whether the MMN or the rounding error is dominating depends on the price level. In case of a low price level (say $Z_i \leq \log 10$) the rounding error dominates, while in case of high prices (say $Z_i \geq \log 50$) the MMN dominates. As in Section \ref{sec:sim_global}, samples of ask and bid quotes are generated implicitly by using $Y_{i}^{A}= Z_{i}$, whenever $Z_{i} > X_{i/n}$, and $Y_{i}^{B}= Z_{i}$, whenever $ Z_{i} < X_{i/n}$. Such a model with misspecification and endogenous noise is important and might be realistic in view of the discrete fine structure of the data shown in Figure \ref{Fig:lob}.

For the additive MMN-model, the global test for price jumps uses the maximum statistic
\begin{align}\label{lm}T^{LM}:=\max_{k=2,\ldots,h_n^{-1} - 1}\left|\frac{(n h_n)^{-1} \sum_{i = n h_n k}^{n h_n (k + 1) - 1} Z_{i} - (n h_n)^{-1} \sum_{i = n h_n (k-1)}^{n h_n k - 1} Z_{i}}{\sqrt{\frac{2}{3}\sigma_{kh_n}^2C^2 + 2 q^2}}\right|\,,\end{align}
based on differences between local averages, where we assume that $nh_n$ is an integer with $h_n = C n^{-1/2}$, and $C$ is a constant that is documented for different noise levels in \citet{lm12}. The statistic $T^{LM}$ builds on differences of local averages (while $T^{BHR}$ uses local minima) and the asymptotic standard deviation of these differences of local averages (the denominator) depends on the variance of the noise $q^2$. \citet{lm12} show that $T^{LM}$ converges after appropriate standardization to a standard Gumbel distribution that permits testing hypotheses similarly to Theorem~\ref{thmgumbel}.

\begin{table}[t]
\centering
	\begin{tabular}{ccc|cccccc}
		\hline\hline
			&&& \multicolumn{6}{c}{$\vert \text{jump size}\vert$} \\
		\hline
		$q$ & $nh_n$ & test & 0.000\% &  $0.100\%$ & $0.125\%$ & $0.150\%$ & $0.175\%$ & $0.200\%$\\
		\hline
		\multirow{4}{*}{0.05\%}		& \multirow{2}{*}{11} 	& BHR &0.05 & 0.31 & 0.63 & 0.85 & 0.94 & 0.98 \\
							&  				& LM   &0.04 & 0.14 & 0.29 & 0.50 & 0.69 & 0.83 \\
							&\multirow{2}{*}{15}	& BHR &0.05 & 0.29 & 0.59 & 0.83 & 0.94 & 0.98 \\
							& 				& LM   &0.05 & 0.16 & 0.33 & 0.54 & 0.72 & 0.85 \\	
\hline
		\multirow{4}{*}{0.10\%}		& \multirow{2}{*}{20} 	& BHR &0.05 & 0.14 & 0.32 & 0.57 & 0.77 & 0.89 \\
							& 				& LM   &0.05 & 0.07 & 0.10 & 0.19 & 0.32 & 0.47 \\
							& \multirow{2}{*}{34} 	& BHR &0.05 & 0.12 & 0.23 & 0.43 & 0.64 & 0.80 \\
							&  				& LM   &0.05 & 0.08 & 0.14 & 0.24 & 0.37 & 0.52 \\
		\hline\hline
	\end{tabular}
	\caption{Simulation results for the global tests in Scenario 1 on a significance level of $\alpha=5\%$. The column $\vert \text{jump size}\vert=0.000\%$ gives the estimated size and the columns with $\vert \text{jump size}\vert>0\%$ give the estimated power for the corresponding jump sizes. The value $nh_n$ is the number of noisy observations per interval.}
	\label{tab:global_comparison}
\end{table}

It should be noted that the improved power under LOMN is due to the (asymptotically) smaller blocks of order close to $n^{-2/3}$, instead of $n^{-1/2}$ under MMN. When choosing a small constant factor $C$ to determine the block lengths for the MMN-model and a much larger proportionality constant under LOMN, the test can -- in finite samples -- perform better in terms of power than the test for the LOMN-model. This finite-sample phenomenon is in contrast to the asymptotic considerations but can arise in situations when the specific tuning of $h_n$ results in shorter intervals in the MMN-model than in the LOMN-model. In particular, the sample size is in Scenario 2 smaller for LOMN than for MMN. We compare results for different block lengths including values which approximately keep the size and optimize the power of the methods based on a grid search.

To produce comparable results, we employ a simple bootstrap. Following Proposition~2 of \citet{lm12}, an estimator $\hat{q}_n$ of the noise level $q$ is given by
\begin{align}\label{noise_est}
	\hat{q}_n=\Big((2n)^{-1}\sum_{i = 1}^{n} (Z_{i} - Z_{{i-1}})^2\Big)^{1/2},
	\end{align}
that is consistent under MMN in Scenario 1, but might become inconsistent in Scenario 2 including rounding errors. Based on this estimate, we generate $m = 5{,}000$ bootstrap samples 
\begin{align}
	Z_{i, j}^* &= X_{i/n, j}^* + \epsilon^*_{i, j} \quad\text{with}\quad\epsilon_{i, j}^*\overset{iid}{\sim}\mathcal{N}(0, \hat{q}_n^{2}),
\end{align}
for $j = 1,\ldots,m$. In Scenario 2, simulated ask $Y_{i,j}^{A,*}$ and bid quotes $Y_{i,j}^{B,*}$ are generated implicitly again based on $Z_{i, j}^*$. In Scenario 1, we generate additionally bootstrap samples
\begin{align}\label{adboot}
	Y_{i, j}^{A,*} &= X_{i/n, j}^* + (1 - 2 \pi^{-1})^{-1/2} \hat{q}_n \vert\varepsilon_{i, j}^*\vert \quad\text{with}\quad\varepsilon_{i, j}^*\overset{iid}{\sim}\mathcal{N}(0, 1),
\end{align}
with $\hat{q}_n$ as in \eqref{noise_est} using the data $Y_i^{A}$ instead of $Z_i$, as an elementary proof shows that the same noise level estimator is valid. Critical values under the null hypothesis are determined based on the empirical quantiles of $\{T^{BHR^*}_j\}_{j=1}^m$ and $\{T^{LM^*}_j\}_{j=1}^m$. In Scenario 2 the statistics $\{T^{BHR^*}_j\}_{j=1}^m$ based on observations $Y_{i,j}^{A,*}$ are evaluated as well as analogous variants using local maxima of the observations $Y_{i,j}^{B,*}$.

Similar Monte Carlo or wild bootstrap procedures are used and known to perform well for high-frequency statistics, in particular for extreme value statistics, see, for instance, \cite{boot} and \cite{dovonon2019bootstrapping}. Only in case of dominating rounding errors in Scenario 2, the performance of the procedure turns out to be sub-optimal. In practice, this bootstrap relies on an estimate of the stochastic volatility process $v_t \sigma_t$, $t\in[0,1]$, in order to generate $(X_{i/n, j}^*)$ based on an Euler-Maruyama scheme where we set the drift to zero and with the Gaussian increments of a Brownian motion. For this comparison, however, we use the simulated \emph{true} volatility. Otherwise we would require different spot volatility estimators for the two different models, which would complicate the comparison of the combined methods. Using the true volatility for both methods sheds light on their different capabilities to detect jumps.

\begin{table}[t]
\centering
	\begin{tabular}{cccc|cccccc}
		\hline\hline
			&& \multicolumn{6}{c}{$\vert \text{jump size}\vert$} \\
		\hline
		$q$ & $X_0$ & $nh_n$ & data & 0.000\% &  $0.100\%$ & $0.125\%$ & $0.150\%$ & $0.175\%$ & $0.200\%$\\
		\hline
		\multirow{6}{*}{$0.05\%$} & \multirow{6}{*}{$\log 10$}	& \multirow{3}{*}{10} 								& $Y_i^A$	&0.05 & 0.12 & 0.26 & 0.45 & 0.66 & 0.84\\
							&& 				& $Y_i^B$ 	&0.04 & 0.12 & 0.25 & 0.46 & 0.66 & 0.85\\
							&& 				& $Z_i$   	&0.06 & 0.11 & 0.20 & 0.38 & 0.55 & 0.74\\
							&& \multirow{3}{*}{11} 	& $Y_i^A$ 	&0.07 & 0.16 & 0.30 & 0.47 & 0.67 & 0.83\\
							&&				& $Y_i^B$ 	&0.07 & 0.16 & 0.28 & 0.46 & 0.68 & 0.84\\
							&&  				& $Z_i$   	&0.05 & 0.11 & 0.22 & 0.42 & 0.60 & 0.76\\
		\hline
		\multirow{6}{*}{$0.05\%$} & \multirow{6}{*}{$\log 50$}	& \multirow{3}{*}{5} 								& $Y_i^A$ &0.07 & 0.27 & 0.59 & 0.82 & 0.94 & 0.98\\
		 					&& 				& $Y_i^B$ &0.05 & 0.26 & 0.56 & 0.82 & 0.94 & 0.98\\
							&& 				& $Z_i$  &0.04       & 0.06 & 0.11 & 0.23 & 0.38 & 0.55\\
							&& \multirow{3}{*}{12} 	& $Y_i^A$ &0.06 & 0.17 & 0.35 & 0.61 & 0.80 & 0.92\\
							&&				& $Y_i^B$ &0.05 & 0.16 & 0.35 & 0.63 & 0.83 & 0.93\\
							&&  				& $Z_i$   &0.04      & 0.14 & 0.32 & 0.53 & 0.72 & 0.85\\
		\hline\hline
	\end{tabular}
	\caption{Simulation results for the global tests in Scenario 2 on a significance level of $\alpha=5\%$ and with $q=0.05\%$. The column $\vert \text{jump size}\vert=0.000\%$ gives the estimated size and the columns with $\vert \text{jump size}\vert>0\%$ give the estimated power for the corresponding jump sizes. The value $nh_n$ is the number of noisy observations per interval.}
	\label{tab:global_comparison2}
\end{table}

Table~\ref{tab:global_comparison} reports the results of the simulations for Scenario 1 based on 5{,}000 replications with noise levels $q=0.05\%$ and $q=0.10\%$. BHR refers to our test based on observations $Y_i^{A}$ with one-sided noise and LM to the test by \cite{lm12} based on observations $Z_i$ with MMN. We consider different numbers of observations per interval and choose $nh_n$ such that the power of the corresponding test for $q$ is maximized over a large grid of values (the results of the respective other test are presented for comparison). We observe that both tests keep the size and correctly reject the null hypothesis with approximate probability $\alpha$. The power of both tests crucially depends on the noise level $q$ and the $\vert\text{jump size}\vert$. In general, the power of both tests increases with the $\vert\text{jump size}\vert$ and decreases for higher noise levels. In comparison, the power of the test in the LOMN-model always outperforms the test in the MMN-model even with same block lengths, although the magnitude depends admittedly on the specific setting. 

Table~\ref{tab:global_comparison2} reports the results of the simulations for Scenario 2 based on 1{,}000 replications with noise level $q=0.05\%$ and different numbers of observations per interval. The size and power is compared for the tests by \cite{lm12}, based on all observations $Z_i$, our test based on simulated ask quotes $Y_i^{A}$, and the analogous one based on local maxima of simulated bid quotes $Y_i^{B}$. We choose $nh_n$ such that the tests approximately keep the size and maximize the power over a grid of values for $nh_n$ (the results of the other tests are presented for comparison). The tests keep the size for suitable values of $nh_n$, which determine the block lengths, and correctly reject the null hypothesis with approximate probability $\alpha$. Despite the smaller effective sample size of roughly $n/2$ for ask and bid quotes, these tests show a slightly better performance in terms of power for $X_0=\log 10$. The LOMN-based test and its variant for bid quotes even show a significantly better performance for $X_0=\log 50$. The power of the tests crucially depends on the impact of the rounding error compared to the noise level $q$, and again on the $\vert\text{jump size}\vert$. The power of the tests increases with the $\vert\text{jump size}\vert$, as well as with the price level, when the effect of the rounding errors decreases. In comparison, the power of the test based on the LOMN-model frequently outperforms the test in the MMN-model despite the effectively reduced sample size. Figure~\ref{Fig:dens_compare} illustrates the difference in the power of the tests for the special case of $\vert \text{jump size}\vert = 0.125\%$, $q = 0.05\%$ and $X_0=\log 50$. Even though both plots use a different scaling, i.e., the non-standardized test statistics are not directly comparable, the better performance of the test in the LOMN-framework is evident. A similar plot for Scenario 1 is provided in the \ref{S1}. In \ref{S1} we also report with a discussion that for rounding errors too large compared to the noise level $q$, the performance of the test by \cite{lm12} deteriorates, such that the information from a comparison for such a setup without adjustments would be limited.

\begin{figure}[t]
	\begin{minipage}{1\textwidth}
     		\centering
        		\includegraphics[width=6.5cm]{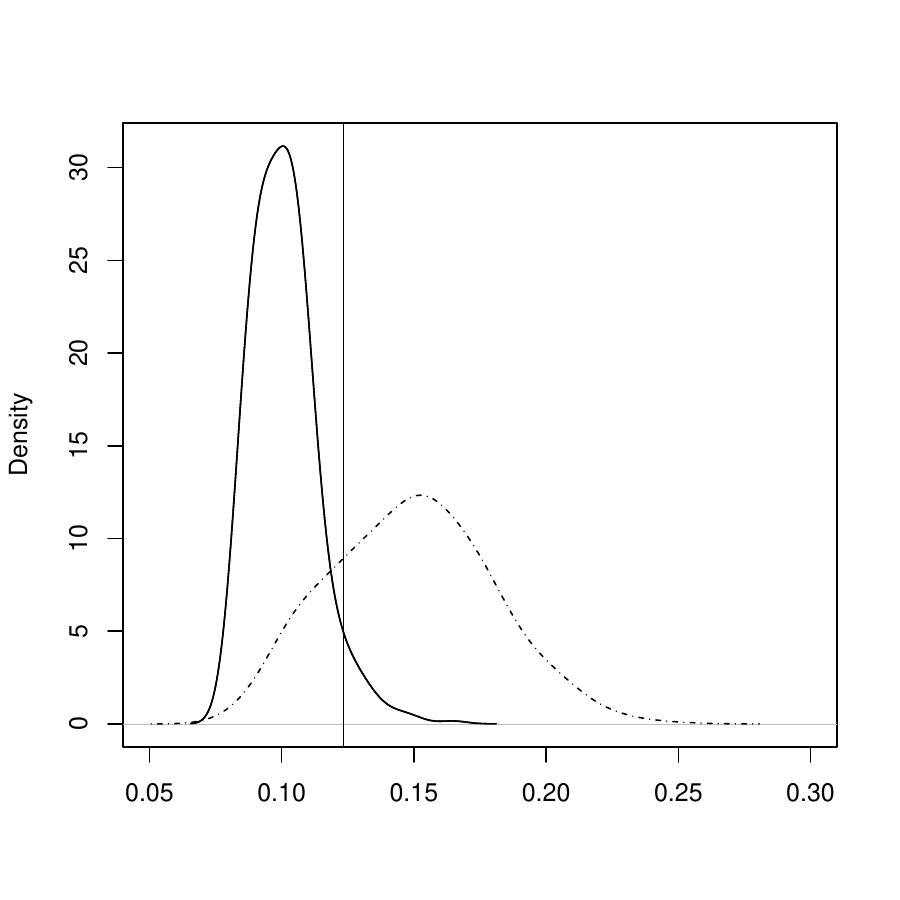}
		\includegraphics[width=6.5cm]{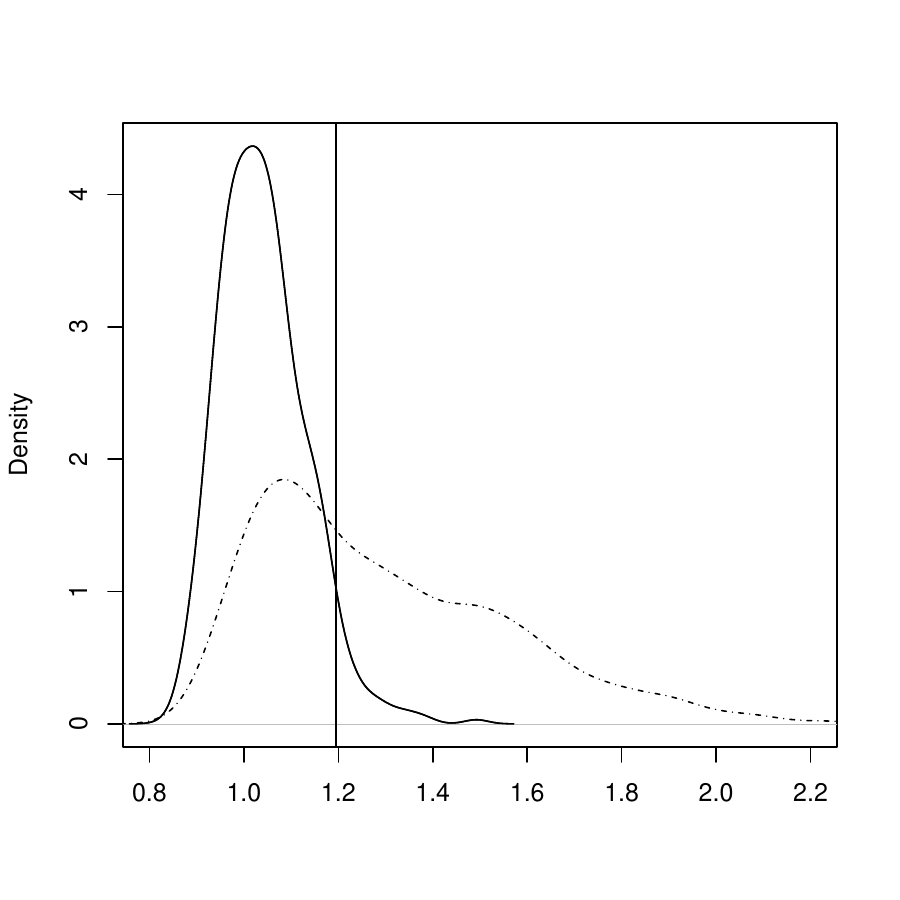}
    	\end{minipage}
	\caption{\label{Fig:dens_compare}Left panel: Kernel density estimates of the test statistic $T^{BHR}$ in Scenario 2 under the null (black solid line) and under the alternative (dashed-dotted line) for $\vert \text{jump size}\vert = 0.125\%$, noise level $q = 0.05\%$ and $X_0 =\log 50$ for the optimal $nh_n=5$. Right panel: Kernel density estimates of the test statistic $T^{LM}$ in Scenario 2 under the null (black solid line) and under the alternative (dashed-dotted line) for $\vert \text{jump size}\vert = 0.125\%$, noise level $q = 0.05\%$ and $X_0 =\log 50$ for the optimal $nh_n=12$. The vertical lines refer to the $95\%$-quantile of the respective test statistic under the null.}
\end{figure}

In summary, our findings suggest that jumps are more likely to be detected under LOMN than under MMN, and that \emph{jumps of smaller size can be detected} in the LOMN-framework. Compared to the results of Section~\ref{sec:sim_global}, we observe that the number of observations per block, $nh_n$, can be chosen smaller in the test setting relying on the bootstrap. These smaller blocks translate into a more precise localization of jumps and the possibility to detect smaller jumps. In the test setting relying on the asymptotic Gumbel distribution of the standardized version of the maximum statistic $T^{BHR}$, the length of the time blocks $h_n$ was chosen larger than $n^{-2/3}$ to be in line with the assumptions of the asymptotic theory. When comparing  the power of the test for the bootstrap-based setting with the setting relying on the asymptotic distribution for specific choices of $q$ and jump sizes (given the same $nh_n$), we did not find noteworthy differences. In this sense, in our experiments, the distribution of the noise has no impact on the power of the test.

\begin{table}[t]
\centering
	\begin{tabular}{ccc|ccccccc}
		\hline\hline
			&&& \multicolumn{7}{c}{$\vert \text{jump size}\vert$} \\
		\hline
		$q$ & $\tau$ & test & 0.000\% &  $0.050\%$ & $0.075\%$ & $0.100\%$ & $0.125\%$ & $0.150\%$ & $0.200\%$\\
		\hline
		\multirow{6}{*}{0.05\%}		& \multirow{2}{*}{bef-} 	& BHR &0.05 & 0.42 & 0.71 & 0.88 & 0.95 & 0.98 & 0.99\\
							&  				& LM   &0.05 & 0.18 & 0.34 & 0.47 & 0.58 & 0.66 & 0.75\\
							&\multirow{2}{*}{at}	& BHR &0.05 & 0.58 & 0.91 & 0.99 & 1.00 & 1.00 & 1.00\\
							& 				& LM   &0.04 & 0.45 & 0.79 & 0.96 & 1.00 & 1.00 & 1.00\\		
							&\multirow{2}{*}{aft+}	& BHR &0.05 & 0.41 & 0.71 & 0.88 & 0.94 & 0.97 & 0.99\\
							& 				& LM   &0.05 & 0.19 & 0.32 & 0.47 & 0.58 & 0.65 & 0.75\\	
\hline
		\multirow{6}{*}{0.1\%}		& \multirow{2}{*}{bef-} 	& BHR &0.06 & 0.26 & 0.49 & 0.70 & 0.84 & 0.91 & 0.96\\
							& 				& LM   &0.05 & 0.12 & 0.20 & 0.31 & 0.42 & 0.51 & 0.64\\
							& \multirow{2}{*}{at} 	& BHR &0.05 & 0.34 & 0.67 & 0.89 & 0.97 & 0.99 & 1.00\\
							&  				& LM   &0.05 & 0.27 & 0.51 & 0.75 & 0.90 & 0.98 & 1.00\\
							&\multirow{2}{*}{aft+}	& BHR &0.06 & 0.26 & 0.49 & 0.70 & 0.84 & 0.91 & 0.96\\
							& 				& LM   &0.05 & 0.12 & 0.20 & 0.32 & 0.42 & 0.51 & 0.64\\
		\hline\hline
	\end{tabular}
	\caption{Simulation results for the local test with a significance level of $\alpha=5\%$. The column $\vert \text{jump size}\vert=0.000\%$ gives the estimated size and the columns with $\vert \text{jump size}\vert>0\%$ give the estimated power for the corresponding jump sizes. The column $\tau$ indicates whether the test time $\tau\in(0,1)$ is either randomly before a negative jump (bef-), at the same time as a positive or negative jump, or randomly after a positive jump (aft+). For $q = 0.05\%$, the number of observations within $h_n$ is $nh_n = 12$ and for $q = 0.1\%$ it is $nh_n = 26$. The distance between the random test time and the jump is at most $(nh_n-1)$ observations.}
	\label{tab:local_comparison}
\end{table}

The origin of the outperformance of jump tests under LOMN compared to MMN in terms of power is due to the faster convergence rate, but also due to the pulverization of jumps by pre-averages in the MMN-framework. To quantify the effect of this pulverization on the power of the local test statistics, we generate again $5{,}000$ Monte Carlo samples of \eqref{eq.sim1} and \eqref{eq.sim2}. Under the alternative, these samples are augmented by analogue jumps as used above. In contrast to the global test, the local test is performed at a pre-specified time, for which we use the following three scenarios: Firstly, the test is performed at the exact time of the jump. Secondly, the test is performed at a random time before the jump but the distance between the jump time and the test time is at most the time between the jump and the $(nh_n-1)$th observation before the jump. Thirdly, the test is performed at a random time after the jump but the distance between the jump time and the test time is at most the time between the jump and the $(nh_n-1)$th observation after the jump. Similarly to the global test, we use a bootstrap-based inference for the local test statistics to obtain comparable results. For $q = 0.05\%$ the number of observations within an interval is $nh_n = 12$, and for $q = 0.1\%$ it is $nh_n = 26$.

Table~\ref{tab:local_comparison} provides the corresponding results. The local test under MMN based on the method by \cite{lm12} is detailed in Sec.\ 3.1.1 of \cite{bibwinkellev}. In analogy to the results of the global test above, both tests keep the size and the power of both tests decreases in the noise level $q$. In the situation when the jump time and the test time coincide (i.e., there is no pulverization of jumps by pre-averages), the power of the test based on the LOMN-model is slightly better for $q=0.05\%$, while for $q=0.10\%$ it is significantly better. In the more realistic scenarios with misspecification, when the jump time and the test time do not exactly coincide, a severe drop in the power can be observed for the test in the MMN-framework compared to the situation when jump time and test time coincide. In the LOMN-framework though, there is only a moderate drop in the power. Due to the specific time shift in the two scenarios with misspecification, this comparable better performance based on local minima of simulated ask quotes materializes if the local test is performed before the jump \emph{and} the jump direction is negative, or the local test is performed after the jump \emph{and} the jump direction is positive. This effect directly relates to the advantage of speed illustrated in Figure \ref{Fig:locmin} and \ref{S2}, which is sensitive to the sign of jumps. Looking at local maxima of best bid quotes, however, this effect reverses, and having both at the same time thus facilitates a good performance in all cases. The connection between the direction of the jump and the test time is not important for the global test in the LOMN-model.

\section[Empirical illustration for JPM stock quotes]{Empirical example for JPM stock quotes\label{sec:4}}
We apply the procedures discussed above to data from actual quotes of JPMorgan Chase \& Co.\ (with symbol JPM). The sample period is from July 2007 to September 2009 covering the most turbulent time of the subprime mortgage crisis, where we expect many large changes in equity prices and corresponding quotes. Thus, the data set is appropriate for a comparative empirical study of jump tests under MMN and LOMN, respectively.

\subsection[Data]{Data}
We use first-level limit order book data of ask and bid quotes at the highest possible frequency from the LOBSTER database\footnote{\url{https://lobsterdata.com/}}, which provides access to reconstructed limit order book data for NASDAQ traded stocks. First-level means that bid and ask quote refer to the \emph{best} bid and \emph{best} ask in the sequel. LOBSTER data has been used in several recent research papers, e.g., \cite{andersen2022local}. The data has at least millisecond precision, which generally permits an analysis at the highest time resolution possible. In a non-negligible proportion of cases, there is no change between subsequent quoted bid or ask prices (and corresponding mid quotes). This is in conflict with both, the LOMN-model and the MMN-model. We therefore select only those observations where the bid or ask quote changes. Moreover, all quotes before 9:30am or after 4:00pm are discarded. We also exclude the data during 9:30am and 9:35am for each trading day in order to avoid peculiarities during the opening period and in view of the reduced reliability of jump detection close to boundaries. No further data cleaning procedures are performed before we apply the statistics.

\begin{table}[t]
\centering
	\begin{tabular}{cl|ccccccc}
		\hline\hline
			\multicolumn{2}{r|}{trading hour} & 09:35-10 & 10-11 & 11-12 & 12-13 & 13-14 & 14-15 & 15-16\\
		\hline
			& $n$ &3081  & 4032   & 2554   & 2027 & 2107  & 2646 & 3412\\ 
	ask		& $100\cdot\hat q$ & 0.0224 & 0.0197 & 0.0195 & 0.0194 & 0.0199 & 0.0197 & 0.0205\\
			& $100\cdot\hat \sigma^2$ & 0.0252 & 0.0287 & 0.0162 & 0.0132 & 0.0138 & 0.0191 & 0.0282\\
		\hline
			& $n$ & 3083  & 4060  & 2561    & 2015 & 2112 & 2649 & 3375\\
	bid 		& $100\cdot\hat q$ &0.0222 & 0.0199 & 0.0196 & 0.0195 & 0.0201 & 0.0196 & 0.0204\\
			& $100\cdot\hat \sigma^2$ & 0.0253 & 0.0286 & 0.0164 & 0.0131 & 0.0139 & 0.0188 & 0.0276\\
		\hline
			& $n$ &6162  & 8092  & 5115    & 4042 & 4218  & 5295 &  6787\\ 
	mid 		& $100\cdot\hat q$ &	0.0112 & 0.0099 & 0.0098 & 0.0097 & 0.0100 & 0.0098 & 0.0102\\
			& $100\cdot\hat \sigma^2$ & 0.0244 & 0.0290 & 0.0167 & 0.0130 &  0.0141 & 0.0193 & 0.0257\\
		\hline\hline
	\end{tabular}
	\caption{Averages of the sample size $n$ of ask, bid and mid quotes, the estimated noise level $\hat q$ and the estimated variance $\hat \sigma^2$ for the in total 3,974 time intervals.}
	\label{tab:summary}
\end{table}

As in \cite{lm12} each considered trading day is split into seven time intervals. Table~\ref{tab:summary} reports the averages of the corresponding sample size $n$ per interval, the estimated noise level $\hat q$ and the estimated variance $\hat \sigma^2$, which is approximated constant within each out of the 3{,}974 intervals considered in total. All quantities ($n$, $\hat q$ and $\hat \sigma^2$) exhibit the expected U-shaped intra-daily seasonal pattern across ask, bid and mid quotes. In line with current research we calibrate the MMN-model to mid quotes having available data from a limit order book. Although \cite{lm12} used trade prices for their data analysis, an analysis based on mid quotes is beneficial given the much larger sample size of available mid quotes compared to trades. A comparison with the MMN-model calibrated to trade prices reconstructed from the limit order book could yield different results. Such a comparison of the LOMN and MMN models would be less informative, however, favouring our approach by providing more information. Using mid, ask and bid quotes instead, we exploit the information from the limit order book efficiently for all models. The noise levels are estimated using \eqref{noise_est}. For the LOMN-model, we employ the same volatility estimator (with same tuning) as in our simulations, while we use the spectral approach from \cite{BHMR} and \cite{BHMR2} for volatility estimation in the MMN-model. The estimated volatilities are rather similar across ask, bid and mid quotes, which is coherent with the idea of the same underlying efficient log-prices in the models, and should hence not affect the comparison between the jump tests considerably. While differences in the sample sizes between ask and bid quotes are not worth mentioning, there are major differences in the sample sizes $n$ between ask and mid quotes, as well as bid and mid quotes, respectively. Since by construction, changes in ask or bid quotes imply changes of mid quotes, the sample size of the mid quotes is (approximately) the sum of the sample sizes of ask and bid quotes. Combining bid and ask quotes would hence result in equal sample sizes for the LOMN- and MMN-model. A further difference between ask and mid quotes, as well as bid and mid quotes, is the estimated noise level $\hat q$, which is just half size for mid quotes compared to ask and bid quotes. This is due to the rather mechanical effect that the tick size for mid quotes is just half the tick size of ask or bid quotes. In other words, if the best ask or bid changes by one tick (e.g., 0.01\$), the mid quote changes by just half a tick (e.g., 0.005\$). As smaller noise levels improve the power of the tests in our simulations, the lower noise level should favor an analysis based on mid quotes with the MMN model.

\subsection{Summary of empirical results}
We perform the global test once for each time interval such that the total number of intervals is equivalent to the number of performed tests. This is similar to \cite{lm12}.  
In contrast to the asymptotic theory, the finite-sample time blocks are chosen on average a bit smaller for the computation of local averages of mid quotes than for taking local minima of ask quotes or maxima of bid quotes, respectively. In light of the data characteristics shown in Table~\ref{tab:summary} and in line with the small proportionality constant for the MMN-model suggested in \cite{lm12}, i.e., $nh_n = \frac{1}{19} \sqrt{n}$ for $q = 0.01\%$, this is not that surprising, however. Thus, we fix block lengths in the same way as in our simulations. In most of the 3{,}974 cases, there are three observations $nh_n$ within one time block for mid quotes and four observations for ask and bid quotes. 

Table~\ref{tab:results} presents the testing outcomes in terms of relative rejection frequencies of the null hypothesis of the global test (no jump) on a significance level of 5\%. We observe that more jumps are detected after opening and before closing. Since more jumps are detected based on ask or bid quotes than based on mid quotes in total, one expects that particularly small-sized jumps are not (easily) detected based on mid quotes. However, estimated sizes of detected jumps are on average slightly smaller for mid quotes compared to ask and bid quotes. While this seems to be surprising (in view of the asymptotic theory), this results from the finite-sample comparison due to the finer approximation of the non-observable $X_t$ based on mid quotes for which time blocks are smaller. Note that also the effect of pulverization of jumps by pre-averages as illustrated in Section \ref{sec:2.1} contributes to this effect. This is for example the case, if a large-sized jump is split in two adjacent jumps of smaller sizes by local averages as illustrated in Figure \ref{Fig:locmin}. 

\begin{table}[t]
\centering
		\begin{tabular}{cc|ccccccc|c}
		\hline\hline
			& trading hour & 09:35-10 & 10-11 & 11-12 & 12-13 & 13-14 & 14-15 & 15-16 & total\\
		\hline
			&ask &13.33 & 4.74 & 5.79 & 6.49 & 6.01 & 7.09 & 10.11 & 7.65 \\ 
		freq.  &bid & 13.68 & 5.44 & 4.56 & 7.02 & 8.30 & 9.04 & 9.04 & 8.15 \\
			&mid & 18.07 & 5.79 & 2.81 & 2.46 & 2.12 & 4.96 & 11.70 & 6.84 \\ 

		\hline
		 		\multicolumn{2}{l|}{time advantage}  & 0.30 & 0.06 & 0.39 & 0.00 & 0.15 & 0.01 & 1.61 & 0.32 \\
		\hline\hline
	\end{tabular}
	\caption{Rejection frequencies of the 3{,}974 performed global tests in $\%$. The bottom row shows medians of the advantage of speed in seconds, i.e.\ the times jumps are earlier detected under one-sided noise compared to MMN. The column total in this last row refers to the overall median.}
	\label{tab:results}
\end{table}

There are, moreover, interesting differences in the results stemming from both methods. During 09:35am-10am and 15am-16am, more jumps are detected by the global test for the MMN-model, while the tests for the LOMN-models reject the null hypothesis more frequently during the day. These differences might put in question our paradigm assuming the same underlying efficient price with the same jumps for the three different time series. Results being fully coherent with this idea should rather yield the same detected jumps of larger absolute sizes and some smaller jumps based on either LOMN- or MMN-data. In many of the incoherent cases though, the detected jumps are large absolute log-returns, which do not fully reflect the stylized picture of large directional jumps. Thus, categorizing these cases into small-sized jumps or false alarms is challenging, which is also true for differences between jumps inferred from bid and ask quotes. We shed light on these examples in the next subsection. 

Overall, the null hypothesis is more frequently rejected by the tests based on LOMN-data compared to MMN-data. One possible way of combining ask and bid quotes is to reject the null hypothesis of the global test (no jump), when at least one of the tests based on local minima of ask quotes, or local maxima of bid quotes, rejects. This results in a rejection frequency which is considerably larger than the rejection frequency based on mid quotes.  
Considering rejections based on bid quotes only yields 8.15\%, and on ask quotes only 7.65\%, such that these values are already larger compared to mid quotes (6.84\%). The results thus support our theoretical finding that {\emph{more jumps can be detected in the LOMN-model}}. In this finite-sample data example, this is not only due to different convergence rates, but also due to the improved robustness of local order statistics compared to local averages, including the pulverization of jumps by pre-averages.

Still, the presented results are not fully coherent in the sense that the tests for the LOMN-model do not always reject the null hypothesis when the test for the MMN-model does. In fact, in 2.29\% of the time intervals we detect jumps only based on mid quotes. Many of these events that systematically induce some incoherence, however, are due to specific bounce-back movements of prices which we discuss in the following subsection. Recall that we did not filter or remove certain ``putative'' jump events, as sometimes suggested in the literature, see, e.g., Sec.\ 6.1 of \cite{jumpfactor}.

The bottom row of Table~\ref{tab:results} presents the median time advantage of online tests for jumps which are coherently detected based on all three considered time series (mid, ask and bid quotes). The time shows how many seconds earlier these jumps are detected under one-sided noise by local (running) minima and maxima of ask and bid quotes, compared to MMN with local averages of mid quotes. Despite the larger sample size of mid quotes compared to ask quotes only and bid quotes only, there is a median advantage in terms of speed of approx.\ 0.32 seconds of the online test from Corollary~\ref{onlinecor} for one-sided noise. Even though this might seem short in absolute time, it is an evident result against the background of the short blocks, which are here on average shorter than 2 seconds. This underpins, moreover, the economic relevance of this aspect, since high-frequency trading algorithms are able to exploit speed advantages in news arrivals which are (only) fractions of a second. Recall that the speed advantage under one-sided noise stems from the monotonicity of maxima and minima with respect to subset relations, whereas no such property holds for averages. As illustrated in \ref{S2}, the positions of blocks relative to jump times influence the severity of the pulverization effect, yet the speed advantage is preserved in all cases. We therefore conclude that there are no adjustments in this direction -- such as rolling blocks -- that can improve detection speed under MMN.
 
\subsection{A closer look at examples\label{sec:4.3}}
There are several situations when all of the three tests detect a jump in the same time interval, with the corresponding statistics identifying almost the same time point. Two typical examples illustrating how stylized large-sized jumps look like are presented in Figure~\ref{Fig:coh}, where the left panel shows a negative jump and the right panel shows a positive jump in the log mid quote. The gray areas in Figure~\ref{Fig:coh} provide a rough orientation of the time of the jump. Interestingly, both jumps are followed by an immediate increase in volatility suggesting that the efficient log-price process and the volatility process jump simultaneously. Such examples of instantaneous, large price adjustments, which are clearly in line with the notion of a jump of $(X_t)$, are coherently found by all considered methods.

\begin{figure}[t]
	\begin{minipage}{1\textwidth}
     		\centering
        		\includegraphics[width=6.5cm]{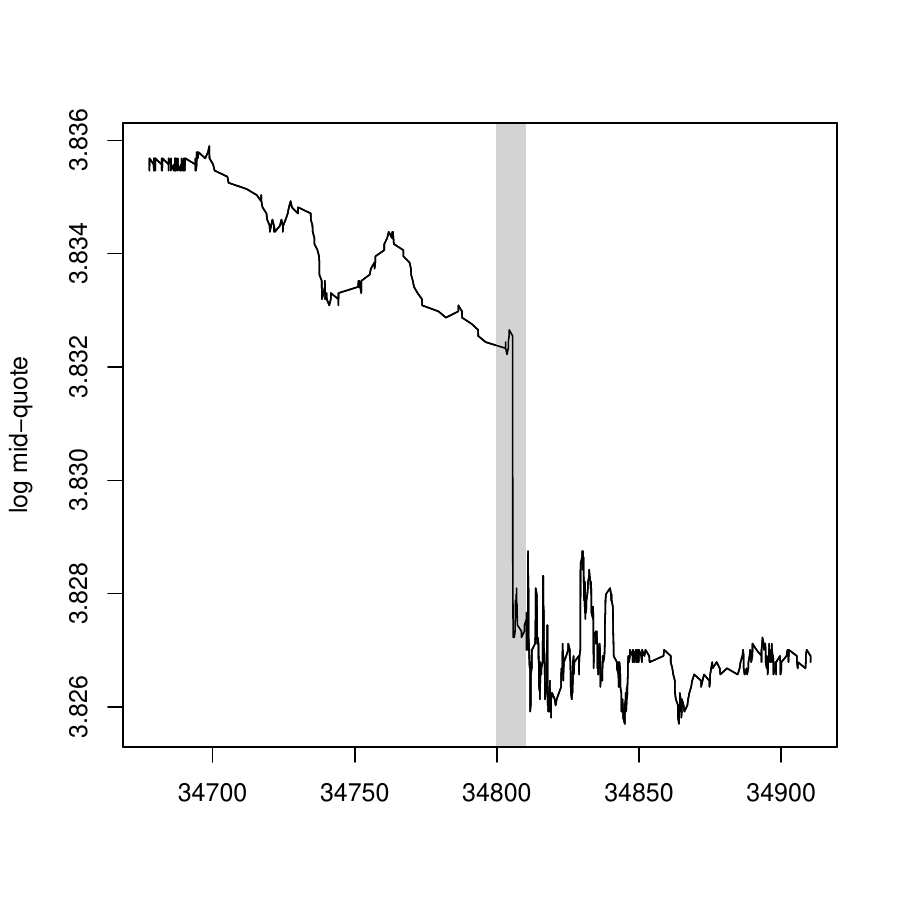}
		\includegraphics[width=6.5cm]{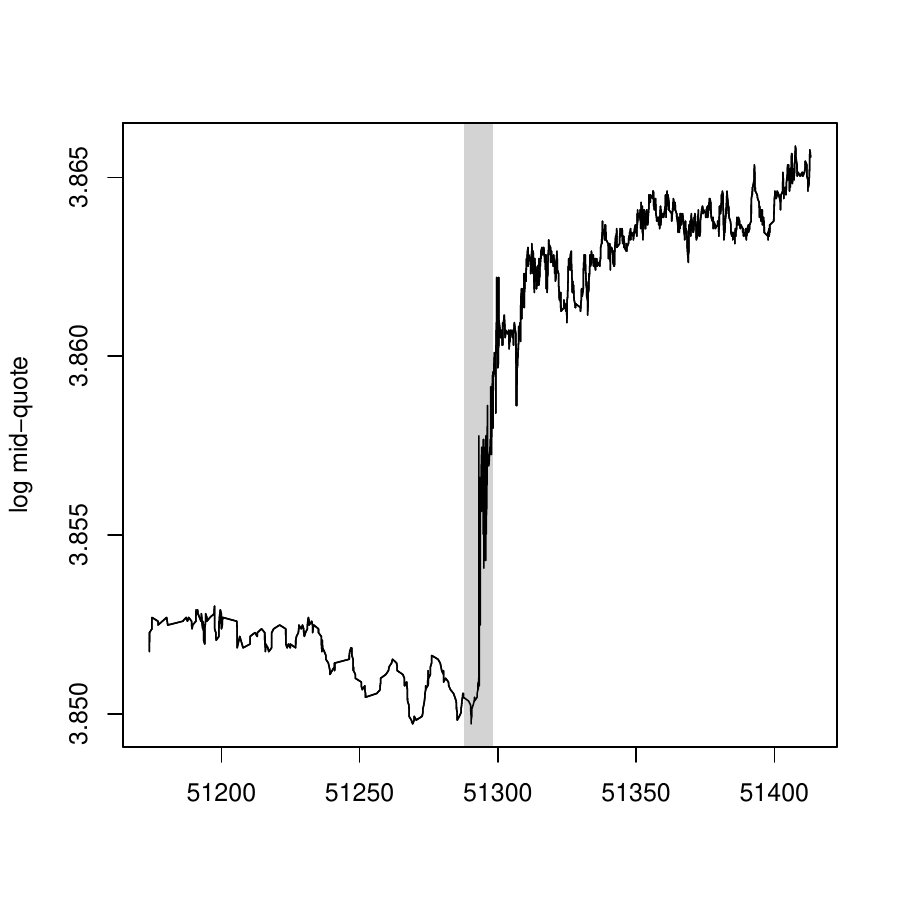}
    	\end{minipage}
	\caption{\label{Fig:coh} Log mid quotes over time provided in seconds after midnight. Areas highlighted in gray contain the detected jumps. Left: negative jump on 24th July 2007 (34800 is 9:40am) detected by all three tests. Right: positive jump on 30th Jan 2008 (51300 is 2:15pm) detected by all three tests.}
\end{figure}

\begin{figure}[t]
	\begin{minipage}{1\textwidth}
     		\centering
        		\includegraphics[width=6.5cm]{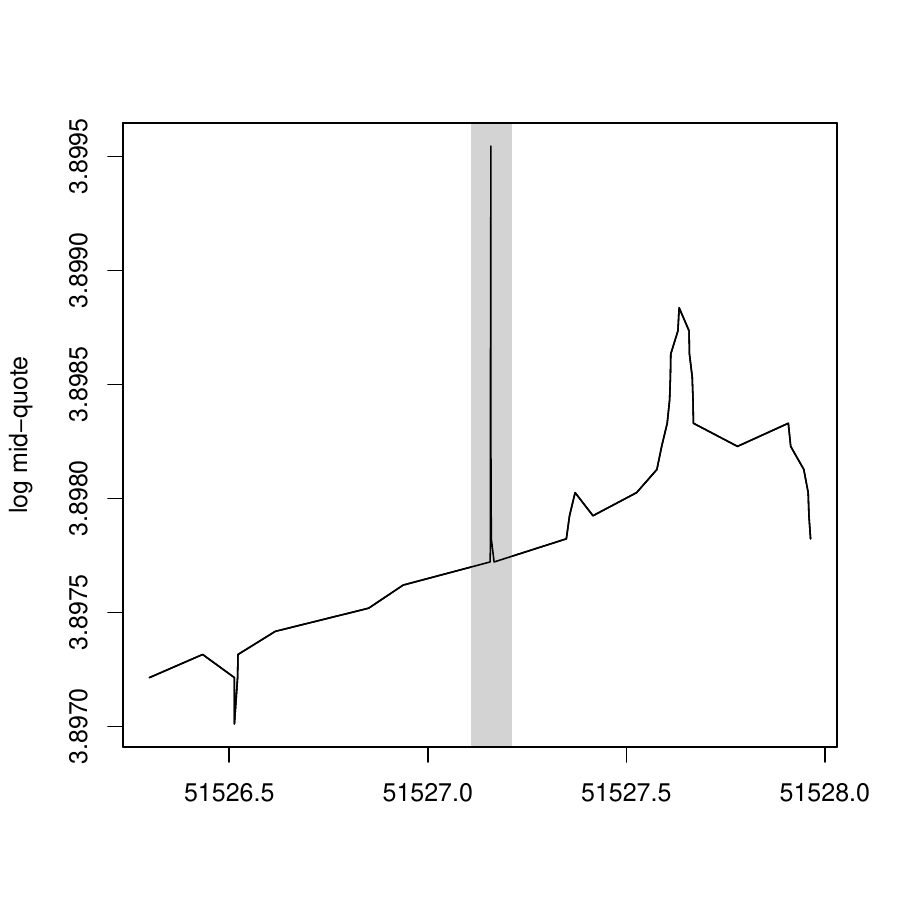}
		\includegraphics[width=6.5cm]{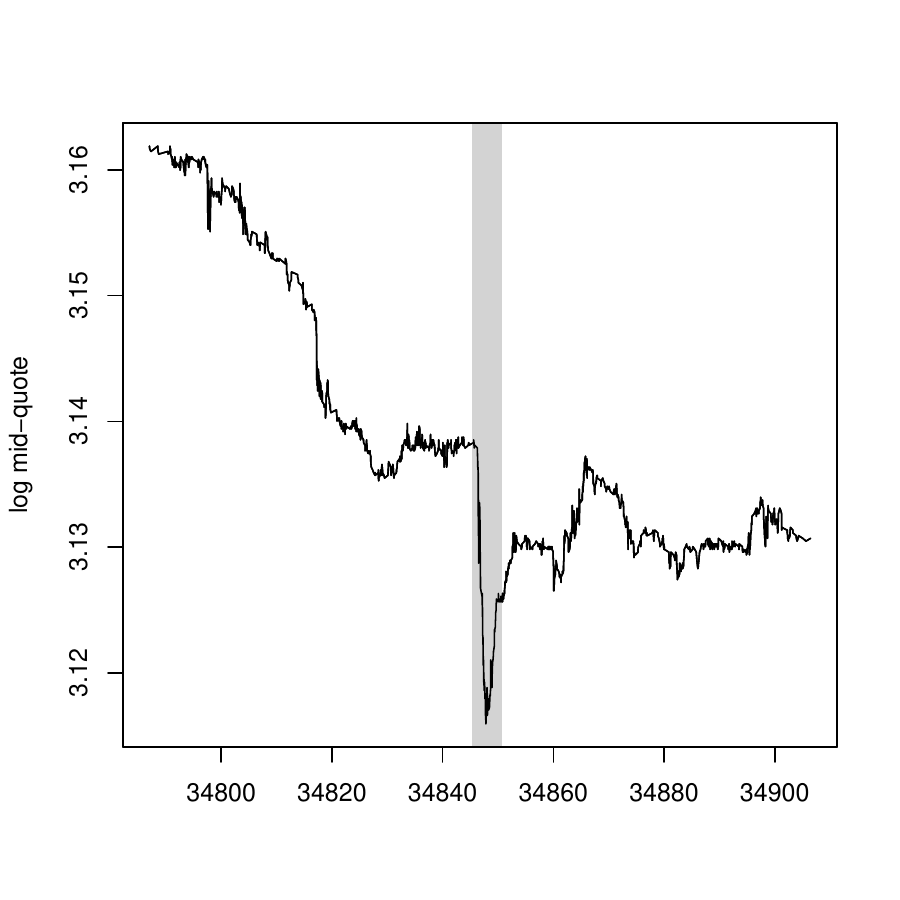}
    	\end{minipage}
	\caption{\label{Fig:incoh} Log mid quotes over time provided in seconds after midnight. Areas highlighted in gray contain the detected jumps. Left: a situation on 28th June 2007 (51527 is 14:18:47pm) when the test for the MMN-model detects a jump, while the test for the LOMN-model does not detect a jump. Right: a situation on 24th Nov 2008 (34850 is 9:40:50am) when the tests for one-sided noise detect a jump, which the test for the MMN-model does not detect.}
\end{figure}

We analyze cases with incoherent test decisions next. These are cases where the LOMN-methods indicate no rejection of the null hypothesis, i.e., they do not detect jumps, while the test based on the MMN-model points in the opposite direction. This could be seen as incoherent with the idea of the same underlying efficient price. However, we find that this discrepancy is frequently explained by rapid bounce-back movements of the observable price of a few tick sizes magnitude. One extreme example is illustrated in Figure~\ref{Fig:incoh} (left panel), where the bounce-back movement of the price occurs within only the hundredth of a second. This mid quote movement originated from the almost simultaneous cancellation of 24 ask orders comprising 5900 shares in total leading to a spontaneous increase of the best ask quote from 49.39\$ to 49.48\$, followed by an immediate bounce back to 49.35\$. It is intuitive that local averages of mid quotes detect a jump. Conversely, local minima of ask quotes are more robust to isolated outliers than local averages and thus do not notify a jump in this example. At the same time, however, similar events occasionally further result in differences between tests based on local maxima of bid quotes and local minima of ask quotes. Moreover, for this example, the null hypothesis is also rejected on a smaller significance level of 1\% for the considered small number of mid quotes per block $nh_n=3$, while no jump is identified as soon as the length of time blocks is increased. In our data, there are several patterns similar to this example which might have been discarded by applying a data cleaning procedure. However, there is no doubt that the cancellations of the ask orders are real and thus, they should not be treated as recording errors that have to be removed from the data. Since the long-term price movement is not affected by this bounce-back effect within a very short time period, we think that this microstructural effect should not be regarded as a jump of $(X_t)$. Identifying a jump in such an event could be prevented by combining bid and ask quotes in a way that we only reject the null if \emph{both} tests reject. However, the question whether such price dynamics should be modeled as jumps, or should be categorized in an alternative way, might depend on the application and concrete research question. This should be thought of with the statistical model and inference in the context of one another.

Finally, the right panel of Figure~\ref{Fig:incoh} highlights an interesting incoherent test decision. This example addresses the sensitivity of the considered global jump tests with respect to the lengths of time blocks. Here, both tests based on one-sided noise models detect a jump, which, however, the MMN-model cannot identify. This is true for the automatically chosen number of observations per time block $nh_n = 8$ for ask quotes, $nh_n = 10$ for bid quotes and $nh_n = 7$ for mid quotes. However, in case of different choices of the number of mid quotes per time block, i.e., $nh_n \in \{5, 6, 9,\ldots, 16, 18, 19, 20\}$, the picture reverses and the MMN test also identifies a jump. Hence, in this example, the tests based on one-sided noise and bid and ask quotes prove to be more robust with respect to changes of time blocks.

\subsection{Insights from the empirical analysis}
Overall, we find that the empirical results mostly support our idea to model mid quotes by the sum of an efficient log-price and MMN, while single ask and bid quotes are modeled by the sum of the same efficient log-price and one-sided noise. We find, however, that some subtle microstructural effects present in the data can result in minor inconsistencies with this idea. The coherence of results can surely be increased by reducing the significance level or by increasing the lengths of the blocks, but the question how to handle these examples should be relevant for all jump tests based on ultra high-frequency data.

The empirical analysis shows that despite a much faster convergence rate for the detection of jumps under LOMN, the performance based on mid quotes and the MMN-model is competitive in finite samples due to the large number of available mid quotes and the tuning with very small time blocks. Nevertheless, we point out several advantages of using our new statistical methods with local minima of ask and local maxima of bid quotes. Our new jump test turns out be more sensitive in practice than the classical one based on mid quotes. Moreover, local order statistics avoid the pulverization effect that manipulates jump detection based on local averages of mid quotes. They are shown to be more robust to bounce-back effects and varying block sizes. In any case, using LOMN-based jump tests additionally to classical jump tests for mid quotes provides more information and a more complete picture of intra-day price jumps. Online jump detection is demonstrated to be considerably faster based on our new approach, what can be useful for high-frequency trading algorithms.

\section{Conclusion\label{sec:5}}
In line with the literature on high-frequency statistics, this work develops methods to infer jumps in a semimartingale efficient price with locally bounded drift and volatility. The methods answer the question if a \emph{jump or no jump} occurs on some time interval under investigation. The main insight of this work is that under one-sided noise (LOMN), smaller jumps in the efficient price can be detected compared to regular market microstructure noise (MMN). We develop methods to infer jumps of absolute size larger than order $n^{-1/3}$, based on $n$ observed best ask quotes. For a fixed jump size, we attain higher power than tests based on observations with MMN. Moreover, the intricate effect of pulverization of jumps by pre-averages vanishes using block-wise local minima and the online jump detection has an advantage of speed. Even for uniformly consistent spot volatility estimation, we do not require existence of moments of the noise distribution. 

Our finite-sample analysis demonstrates that both models (MMN and LOMN) and corresponding inference methods can be applied to different time series from the same limit order book data. Modeling mid quotes with MMN, best ask quotes with one-sided, lower-bounded noise and best bid quotes with one-sided, upper-bounded noise yields overall coherent empirical results. 
We find more jumps in the data study based on local minima of ask and local maxima of bid quotes. The empirical analysis demonstrates the advantage of speed in online jump detection of our approach and that jump detection is not impaired by a pulverization effect (as it is under MMN). We highlight, moreover, some sound finite-sample properties of local maxima of bid and local minima of ask quotes compared to local averages of mid quotes, while it nevertheless appears difficult to conclude that one test approach always outperforms the other. 

Occasional incoherent test decisions based on the different time series are shown to be driven by stylized facts which question the standard semimartingale testing problem. Real high-frequency data from a limit order book shows, beyond instantaneous price jumps in one direction, moreover, occasionally more complex patterns with ``jump-like events''. These include ``bounce-back movements'', also described as ``fast mean-reverting spikes'' (\cite{deschatre2020}), outliers, spoofing effects, (mini) flash crashes and ``gradual jumps'', recently modeled by drift bursts, see \cite{driftburst} and \cite{driftburst2}. Differences in the dynamics of local maxima of best bid quotes, and local minima of best ask quotes, and local averages of mid quotes, on small time blocks related to these events as illustrated in Section \ref{sec:4.3} are in line with the discussed models. They are potentially useful to better identify such events than based on the classical MMN approach only, and to categorize ``jump-like events'' when the question \emph{jump or no jump} is broadened or followed-up by \emph{what type of jump-like event} occurs on some interval. This potential will be explored in future research, e.g., investigating how the methods interact with possible drift bursts. 
 
Extending the theory to more general noise distributions in future research is certainly of interest. Instead of \eqref{noise_dist} we could allow for a general extreme value index at the minimum and develop methods to estimate it for different assets. In particular, we conjecture that this extreme value index influences convergence rates and the minimum jump sizes which can be detected. It might be interesting to investigate if the estimated index is different over different assets, how large its deviation from the standard value -1 is, and if it is constant over different time periods. Furthermore, relaxing the exogeneity assumption on the noise to include, e.g., rounding errors appears relevant. Working with order statistics, however, such extensions are completely different than for local averages in case of regular noise and require new concepts.

\section{Proofs\label{sec:6}}
\subsection{Crucial lemmas on the asymptotic distribution of local minima\label{subsec:pre}}
In the sequel, we write 
\[\mathcal{I}_{\tau}^n=\{\lfloor n\tau\rfloor+1,\ldots,\lfloor n\tau\rfloor+nh_n\}, ~\text{and}~\mathcal{I}_{\tau-}^n=\{\lfloor n\tau\rfloor-nh_n+1,\ldots,\lfloor n\tau\rfloor\}\,.\]
For sufficiently large $n$, it holds for all $\tau\in(0,1)$ that \(nh_n\le \lfloor n\tau\rfloor\le n-nh_n\). The standard localization procedure in high-frequency statistics allows us to assume that there exists a (global) constant $K$, such that
 \begin{align*}\max{\{|a_s(\omega)|,|\sigma_s(\omega)|,|X_s(\omega)|,|\delta_{\omega}(s,x)|/\gamma(x)\}}\le K\,,\end{align*}
for all $(\omega,s,x)\in(\Omega,\mathbb{R}_+,\mathbb{R})$. We refer to \cite{JP}, Sec.\ 4.4.1, for a proof.
\begin{lem}\label{lem1}For any $\tau$, $0\le\tau\le 1-h_n$, we have that
\[\min_{i\in \mathcal{I}_{\tau}^n} \big(Y_{i}-X_{\tau}\big)=\min_{i\in\mathcal{I}_{\tau}^n} \big(M_{t_i^n}+\epsilon_{i}-X_{\tau}\big)+\KLEINO_{\P}\big(h_n^{1/2}\big)\,,\]
where $M_t=X_{\tau}+\int_{\tau}^{t}\sigma_{\tau}\,dW_s$, $t\ge \tau$.
\end{lem}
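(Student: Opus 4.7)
The plan is to reduce the minimum comparison to a pointwise uniform bound, and then to control that bound via a standard semimartingale decomposition of $X$ near $\tau$.

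First, I would use the elementary Lipschitz property of the minimum, $|\min_i(a_i+b_i)-\min_i a_i|\le \max_i |b_i|$, applied to $a_i = \sigma_\tau(W_{t_i^n}-W_\tau)+\epsilon_i = M_{t_i^n}+\epsilon_i - X_\tau$ and $a_i+b_i = (X_{t_i^n}-X_\tau)+\epsilon_i = Y_i-X_\tau$. This reduces the lemma to showing
\[R_n \:= \max_{i\in\mathcal{I}_\tau^n}\big|X_{t_i^n}-X_\tau-\sigma_\tau(W_{t_i^n}-W_\tau)\big| = \KLEINO_\P\big(h_n^{1/2}\big).\]
The noise $\epsilon_i$ drops out of the analysis entirely; this is the key simplification that lets us avoid any noise moment assumption.

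Next, I would insert the It\^o decomposition \eqref{sm} to obtain
\[X_{t_i^n}-X_\tau-\sigma_\tau(W_{t_i^n}-W_\tau) = \int_\tau^{t_i^n}\!a_s\,ds + \int_\tau^{t_i^n}(\sigma_s-\sigma_\tau)\,dW_s + (J_{t_i^n}-J_\tau),\]
and bound each term uniformly in $i\in\mathcal{I}_\tau^n$, i.e.\ on the interval $[\tau,\tau+h_n+1/n]$. The drift term is uniformly $\mathcal{O}(h_n)$ by the localization bound $|a_s|\le K$. For the volatility-corrected Brownian integral, Burkholder-Davis-Gundy combined with Assumption \ref{sigma} gives
\[\E\Big[\sup_{t\in[\tau,\tau+h_n]}\Big|\int_\tau^t(\sigma_s-\sigma_\tau)\,dW_s\Big|^2\Big] \le C\!\int_\tau^{\tau+h_n}\!\!\E[(\sigma_s-\sigma_\tau)^2]\,ds \le C\,h_n^{2\alpha+1},\]
hence its supremum is $\mathcal{O}_\P(h_n^{\alpha+1/2})=\KLEINO_\P(h_n^{1/2})$ since $\alpha>0$.

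The main obstacle is the jump contribution $\sup_{t\in[\tau,\tau+h_n]}|J_t-J_\tau|$ under the general activity $r<2$, because Assumption \ref{jumps} only gives moment control through the BG condition \eqref{BG}. I would split $J$ at the threshold $1$. The large-jump part is compound Poisson with finite intensity $\lambda(\{|\gamma|>1\})<\infty$, so the probability of any large jump in $[\tau,\tau+h_n]$ is $\mathcal{O}(h_n)$, hence the increment is $\KLEINO_\P(h_n^{1/2})$. For the compensated small-jump martingale $N_t=\int_0^t\!\int\delta\,\1_{|\delta|\le 1}(\mu-\nu)(ds,dz)$, I pick $p\in(r\vee 1,2)$ and use the Kunita/BDG inequality together with the subadditivity $(\sum x_s^2)^{p/2}\le \sum x_s^p$ for $p\le 2$:
\[\E\Big[\sup_{s\in[\tau,\tau+h_n]}|N_s-N_\tau|^p\Big] \le C\,\E\Big[\sum_{\tau<s\le\tau+h_n}|\Delta N_s|^p\Big] \le C\,h_n\!\int(\gamma^p\wedge 1)\,\lambda(dz),\]
and the integral is finite because $p>r$ together with \eqref{BG}. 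This gives a supremum of order $\mathcal{O}_\P(h_n^{1/p})$, and since $1/p>1/2$ we obtain $\KLEINO_\P(h_n^{1/2})$. Combining the three decomposition terms yields $R_n=\KLEINO_\P(h_n^{1/2})$ and the lemma follows.
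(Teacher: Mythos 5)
Your proof is correct and follows the same core reduction as the paper: bound the difference of minima by $\max_{i}|X_{t_i^n}-M_{t_i^n}|\le\sup_{t\in[\tau,\tau+h_n]}|X_t-M_t|$, then decompose into drift, a compensated Brownian integral, and a jump part, with the Brownian term handled by It\^o isometry/BDG plus Fubini and Assumption~\ref{sigma}. The only real difference is the jump term. The paper applies Kunita's inequality plus Jensen to get $\E[|J_t-J_\tau|]\le C_J|t-\tau|^{1/r}$ for $r\ge 1$, then Markov, and states the supremum bound rather tersely. You instead split $J$ at the fixed threshold $1$: the large jumps are finite-activity, so the probability of any occurring in $[\tau,\tau+h_n]$ is $\mathcal{O}(h_n)$ and their contribution is $\KLEINO_\P(h_n^{1/2})$ trivially; the compensated small-jump martingale is controlled explicitly via the maximal Kunita/BDG inequality at exponent $p\in(r\vee 1,2)$, the subadditivity $(\sum x_s^2)^{p/2}\le\sum x_s^p$, and $\int(\gamma^p\wedge1)\lambda\le\int(\gamma^r\wedge1)\lambda<\infty$, giving $\mathcal{O}_\P(h_n^{1/p})=\KLEINO_\P(h_n^{1/2})$ since $1/p>1/2$. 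Your version is slightly more careful about obtaining a genuine supremum bound (the paper's Markov step, as written, is pointwise in $t$), and it isolates cleanly why no moment conditions on $\epsilon_i$ are needed: the noise cancels entirely in the reduction $|\min_i(a_i+b_i)-\min_ia_i|\le\max_i|b_i|$. Both routes yield the required $\KLEINO_\P(h_n^{1/2})$ rate.
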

\begin{proof}
It holds for all $i$ that
\[\big(Y_{i}-X_{\tau}\big)-\big(X_{t_i^n}-M_{t_i^n}\big)=\big(M_{t_i^n}+\epsilon_{i}-X_{\tau}\big)\,.\]
In particular, we conclude that
\[\min_{i\in\mathcal{I}_{\tau}^n}\big(Y_{i}-X_{\tau}\big)-\max_{i\in\mathcal{I}_{\tau}^n}\big(X_{t_i^n}-M_{t_i^n}\big)\le \min_{i\in\mathcal{I}_{\tau}^n}\big(M_{t_i^n}+\epsilon_{i}-X_{\tau}\big)\,.\]
Changing the roles of $\big(Y_{i}-X_{\tau}\big)$ and $\big(M_{t_i^n}+\epsilon_{i}-X_{\tau}\big)$, we obtain by the two analogous bounds and the triangle inequality that
\begin{align*}&\Big|\min_{i\in\mathcal{I}_{\tau}^n}\big(Y_{i}-X_{\tau}\big)-\min_{i\in\mathcal{I}_{\tau}^n}\big(M_{t_i^n}+\epsilon_{i}-X_{\tau}\big)\Big|\le \max_{i\in\mathcal{I}_{\tau}^n}\big|X_{t_i^n}-M_{t_i^n}\big|\\
&\quad \le \sup_{t\in[\tau,\tau+h_n]}\big|X_t-M_t\big|\le \sup_{t\in[\tau,\tau+h_n]}\Big|C_t-C_{\tau}-\smallint_{\tau}^t\sigma_{\tau}\,dW_s\Big|+\sup_{t\in[\tau,\tau+h_n]}\big|J_t-J_{\tau}\big|\,.
\end{align*}
We are left to prove that
\begin{subequations}
\begin{align}\label{help1}\sup_{t\in[\tau,\tau+h_n]}\Big|C_t-C_{\tau}-\smallint_{\tau}^t\sigma_{\tau}\,dW_s\Big|=\KLEINO_{\P}(h_n^{1/2})\,,\end{align}
\begin{align}\label{help2}\sup_{t\in[\tau,\tau+h_n]}\big|J_t-J_{\tau}\big|=\KLEINO_{\P}(h_n^{1/2})\,.\end{align}
\end{subequations}
We begin with the first term and decompose
\begin{align*}\sup_{t\in[\tau,\tau+h_n]}\Big|C_t-C_{\tau}-\smallint_{\tau}^t\sigma_{\tau}\,dW_s\Big|\le \sup_{t\in[\tau,\tau+h_n]}\Big|\smallint_{\tau}^t(\sigma_s-\sigma_{\tau})\,dW_s\Big|+\sup_{t\in[\tau,\tau+h_n]}\int_{\tau}^t |a_s|ds\,.\end{align*}
By It\^{o}'s isometry and Fubini's theorem we obtain under Assumption \ref{sigma} that
\begin{align*}
\E\Big[\Big(\int_{\tau}^t(\sigma_s-\sigma_{\tau})\,dW_s\Big)^2\Big]&=\int_{\tau}^t\E\big[(\sigma_s-\sigma_{\tau})^2\big]\,ds\\
&=\mathcal{O}\Big(\int_{\tau}^t(s-\tau)^{2\alpha}\,ds\Big)=\mathcal{O}\big((t-\tau)^{2\alpha+1}\big)\,.
\end{align*}
Applying Doob's martingale maximal inequality and using that $\sup_{t\in[\tau,\tau+h_n]}\int_{\tau}^t |a_s|ds=\mathcal{O}_{\P}(h_n)$, yields
\[\sup_{t\in[\tau,\tau+h_n]}\Big|C_t-C_{\tau}-\smallint_{\tau}^t\sigma_{\tau}\,dW_s\Big|=\mathcal{O}_{\P}\big(h_n^{(1/2+\alpha)\wedge 1}\big)\,,\]
such that \eqref{help1} holds, since $\alpha>0$. Next, consider the jump term. Under Assumption \ref{jumps} with $r\ge 1$, we obtain for all $t\in[\tau,\tau+h_n]$, with some constant $C_{\negthinspace J}$, the bound
 \begin{align*}
\E\big[\big|J_{t}-J_{\tau}\big|\big]&\le C_{\negthinspace J}\,\Big(\int_{\tau}^{t}\int_{\R}(\gamma^r(x)\wedge 1)\lambda(dx)ds\Big)^{1/r}\\ &  \le C_{\negthinspace J} |t-\tau|^{1/r}\,. \end{align*}
We used Jensen's inequality. Markov's inequality yields
\[\P\Big(\big|J_{t}-J_{\tau}\big|\ge C_{\delta} h_n^{1/r}\Big)\le C_{\delta}^{-1}h_n^{-1/r}\E\big[\big|J_{t}-J_{\tau}\big|\big]\le C_{\delta}^{-1}C_{\negthinspace J}\,,\]
which is bounded from above by $\delta$, if $C_{\delta}=(\delta C_{\negthinspace J})^{-1}$. This shows that 
\[\sup_{t\in[\tau,\tau+h_n]}\big|J_{t}-J_{\tau}\big|=\mathcal{O_\P}\big(h_n^{\max(1/r ,1)}\big)\,\]
and thus \eqref{help2}, since $r<2$.
\end{proof}
Denote by $f_+$ the positive part and by $f_-$ the negative part of some real-valued function $f$. We use the following lemma from \cite{bibinger2022}, Lemma 6.1, on an expansion of the cdf of the integrated negative part of a Brownian motion close to zero.
\begin{lem}\label{lem3}For a standard Brownian motion $(W_t)_{t\ge 0}$, it holds that
\[\P\Big(\int_0^1(W_t)_-\,dt\le x\Big)=\mathcal{O}(x^{1/3}), ~x\to 0\,.\]
\end{lem}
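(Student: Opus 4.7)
The plan is to match the rate $x^{1/3}$ with the local behaviour of Brownian motion near a low level: reaching depth $m$ forces the path to spend time of order $m^2$ nearby, producing an area contribution of order $m^3$. Set $m=x^{1/3}$ and split $\P(V\le x)$, with $V=\int_0^1(W_t)_-\,dt$, according to whether $\min_{t\in[0,1]}W_t$ is above or below $-m$.

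On $\{\min_{t\in[0,1]}W_t>-m\}$ the reflection principle gives probability $2\Phi(m)-1=\mathcal{O}(m)=\mathcal{O}(x^{1/3})$, which already has the correct order. It remains to bound the contribution of $\{\min W\le -m\}$, on which the first hitting time $\tau:=\inf\{t\ge 0:W_t=-m\}$ is at most $1$. I split further into $\{\tau\le 1-m^2\}$ and $\{\tau\in(1-m^2,1]\}$. For the latter, the explicit density $f_\tau(t)=m(2\pi t^3)^{-1/2}e^{-m^2/(2t)}$ is of order $m$ uniformly for $t$ near $1$, so the event has probability $\mathcal{O}(m\cdot m^2)=\mathcal{O}(x)$, negligible relative to $x^{1/3}$.

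For the remaining piece $\{\tau\le 1-m^2\}$ I use the strong Markov property at $\tau$: $B_s:=W_{\tau+s}+m$ is a Brownian motion independent of $\mathcal{F}_\tau$, and after the Brownian rescaling $\tilde B_u:=B_{m^2 u}/m$,
\[\int_\tau^1(W_t)_-\,dt=\int_0^{1-\tau}(m-B_s)_+\,ds=m^3\int_0^{(1-\tau)/m^2}(1-\tilde B_u)_+\,du.\]
Hence $\{V\le x\}$ forces $\int_0^S(1-\tilde B_u)_+\,du\le 1$ with $S:=(1-\tau)/m^2\ge 1$. Since $(1-\tilde B_u)_+\ge\1_{\{\tilde B_u\le 0\}}$, the arcsine law for the occupation time of the negative half-line by a Brownian motion gives
\[\P\Big(\int_0^S(1-\tilde B_u)_+\,du\le 1\Big)=\mathcal{O}(S^{-1/2})=\mathcal{O}\big(m/\sqrt{1-\tau}\big).\]
Integrating this bound against $f_\tau$ on $(0,1-m^2]$ and splitting at $t=1/2$ yields the required order: on $[1/2,1-m^2]$ the integrand is dominated by $C m^2(1-t)^{-1/2}$, contributing $\mathcal{O}(m^2)$; on $(0,1/2]$ the substitution $u=m^2/(2t)$ turns $\int t^{-3/2}e^{-m^2/(2t)}\,dt$ into $\mathcal{O}(m^{-1})$, contributing $\mathcal{O}(m)$.

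The main obstacle is this last integration step: the singularity $t^{-3/2}$ in the hitting-time density must be absorbed by the Gaussian factor $e^{-m^2/(2t)}$ via the change of variables above, and the arcsine-based tail bound must be sharp enough to survive integration against $f_\tau$. Once this is done, summing the three contributions delivers $\P(V\le x)=\mathcal{O}(m)=\mathcal{O}(x^{1/3})$.
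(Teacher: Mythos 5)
Your proof is correct. Note that the paper itself does not prove this lemma: it is imported verbatim from Bibinger et al.\ (2022), Lemma 6.1, so there is no in-text argument to compare against; what you have supplied is a valid self-contained derivation. The three-way decomposition is sound: the event $\{\min_{[0,1]}W>-m\}$ with $m=x^{1/3}$ has probability $2\Phi(m)-1=\mathcal{O}(m)$ by reflection, which already sets the order; the boundary piece $\{\tau\in(1-m^2,1]\}$ is $\mathcal{O}(m^3)$ since $f_\tau(t)\le m(2\pi(1/2)^3)^{-1/2}$ there; and on $\{\tau\le 1-m^2\}$ the strong Markov property plus Brownian scaling correctly turns $\{V\le m^3\}$ into $\{\int_0^S(1-\tilde B_u)_+\,du\le 1\}$ with $S=(1-\tau)/m^2\ge 1$ and $\tilde B$ independent of $\mathcal F_\tau$. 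The arcsine bound $\P(\int_0^S\1_{\{\tilde B_u\le 0\}}\,du\le 1)=\tfrac{2}{\pi}\arcsin(S^{-1/2})\le S^{-1/2}$ (by convexity of $\arcsin$ on $[0,1]$) is legitimate, and the final integration against $f_\tau$ works exactly as you describe: the substitution $u=m^2/(2t)$ gives $\int_0^{1/2}t^{-3/2}e^{-m^2/(2t)}\,dt=\mathcal{O}(m^{-1})$, so that piece contributes $\mathcal{O}(m)$, while the piece on $[1/2,1-m^2]$ contributes $\mathcal{O}(m^2)$. Summing gives $\mathcal{O}(m)=\mathcal{O}(x^{1/3})$ as required. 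The only cosmetic remark is that your occupation-time lower bound $(1-\tilde B_u)_+\ge\1_{\{\tilde B_u\le 0\}}$ discards a factor that would be needed for a matching lower bound, but the lemma only asserts an upper bound, so this is fine.
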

The next result is the key lemma for the proofs of the main results. For better readability, we establish it first for i.i.d.\ noise and equispaced observations. The two subsequent lemmas generalize the same result and its proof to serially correlated noise and non-equispaced observations. Both is combined to establish the results under Assumption \ref{noise}.
\begin{lem}\label{lem2}For any $\tau$, $0\le\tau\le 1-h_n$, we have under Assumptions \ref{sigma}, \ref{jumps} and \eqref{noise_dist} with i.i.d.\ noise and $t_i^n=i/n$, $i=0,\ldots,n$, conditional on $\sigma_{\tau}$ that
\begin{align}\label{lem3eq}-h_n^{-1/2}\min_{i\in\mathcal{I}_{\tau}^n} \big(M_{t_i^n}+\epsilon_{i}-X_{\tau}\big)\stackrel{d}{\longrightarrow} \hmn(0,\sigma_{\tau}^2)\,,\end{align}
with $M_t$ defined in Lemma \ref{lem1}.
\end{lem}
\begin{proof}
We prove pointwise convergence of the survival functions which implies the claimed convergence in distribution. We begin with similar transformations as in the proof of Proposition 3.2 of \cite{BJR}. Conditional on $\sigma_{\tau}$ means that we can treat $\sigma_{\tau}$ as a constant here. For $x\in\R$, we have that
\begin{align*}
&\P\Big(h_n^{-1/2}\min_{i\in\mathcal{I}_{\tau}^n} \big(M_{t_i^n}+\epsilon_{i}-X_{\tau}\big)>x\sigma_{\tau}\Big)=\P\Big(h_n^{-1/2}\min_{i\in\mathcal{I}_{\tau}^n} \big(\sigma_{\tau}\big(W_{t_i^n}-W_{\tau}\big)+\epsilon_{i}\big)>x\sigma_{\tau}\Big)\\
~&=\P\Big(\min_{i\in\mathcal{I}_{\tau}^n} \big(h_n^{-1/2}\big(W_{t_i^n}-W_{\tau}\big)+h_n^{-1/2}\sigma_{\tau}^{-1}\epsilon_{i}\big)>x\Big)\\
~&=\E\bigg[\prod_{i=\lfloor n\tau\rfloor+1}^{\lfloor n\tau\rfloor+nh_n}\P\Big(\epsilon_{i}>h_n^{1/2}\sigma_{\tau}\big(x-h_n^{-1/2}(W_{t_i^n}-W_{\tau})\big)|\mathcal{F}^X\Big)\bigg]\\
~&=\E\bigg[\exp\Big(\sum_{i=\lfloor n\tau\rfloor+1}^{\lfloor n\tau\rfloor+nh_n}\log\big(1-F_{\eta}\big(h_n^{1/2}\sigma_{\tau}\big(x-h_n^{-1/2}(W_{t_i^n}-W_{\tau})\big)\big)\big)\Big)\bigg]
\end{align*}
where we have used the tower rule for conditional expectations, and that $\epsilon_{i}\stackrel{iid}{\sim}F_{\eta}$. We use the illustration
\begin{align*}W_{t_i^n}-W_{\tau}=\sum_{j=1}^{i-\lfloor n\tau\rfloor}\tilde U_j,~&\tilde U_j\stackrel{iid}{\sim}\mathcal{N}(0,n^{-1}),j\ge 2,\tilde U_1\sim\mathcal{N}\big(0,t^n_{\lfloor n\tau\rfloor +1}-\tau\big)\,,\\
&U_j=h_n^{-1/2}\tilde U_j\stackrel{iid}{\sim}\mathcal{N}\big(0,(nh_n)^{-1}\big),j\ge 2,U_1\sim\mathcal{N}\big(0,h_n^{-1}\big(t^n_{\lfloor n\tau\rfloor +1}-\tau\big)\big)\,,\end{align*}
and a Riemann sum approximation with a standard Brownian motion $(B_t)$. We obtain with \eqref{noise_dist}, a first-order Taylor expansion of $z\mapsto \log(1-z)$, and dominated convergence that
\begin{align*}
&\P\Big(h_n^{-1/2}\min_{i\in\mathcal{I}_{\tau}^n} \big(M_{t_i^n}+\epsilon_{i}-X_{\tau}\big)>x\sigma_{\tau}\Big)=\\
~&=\E\Big[\exp\Big(-h_n^{1/2}\sigma_{\tau}\eta\sum_{i=\lfloor n\tau\rfloor+1}^{\lfloor n\tau\rfloor+nh_n}\Big(x-\sum_{j=1}^{i-\lfloor n\tau\rfloor}U_j\Big)_{+}(1+\KLEINO(1))\Big)\Big]\\
~&=\E\Big[\exp\Big(-h_n^{1/2}nh_n\sigma_{\tau}\eta\int_0^1(B_t-x)_{-}\,dt\,(1+\KLEINO(1))\Big)\Big]\,.
\end{align*}
Instead of setting $h_n\propto n^{-2/3}$ as in \cite{BJR}, and trying to deal with the very involved distribution in this case, observe that
\begin{align}
&\notag\P\Big(h_n^{-1/2}\min_{i\in\mathcal{I}_{\tau}^n} \big(M_{t_i^n}+\epsilon_{i}-X_{\tau}\big)>x\sigma_{\tau}\Big)=\\
~&\notag\P\big(\inf_{0\le t\le 1} B_t\ge x\big)+\E\Big[\1\big(\inf_{0\le t\le 1} B_t< x\big)\exp\Big(-h_n^{1/2}nh_n\sigma_{\tau}\eta\int_0^1(B_t-x)_{-}\,dt\,(1+\KLEINO(1))\Big)\Big]\\
~&=\P\big(\inf_{0\le t\le 1} B_t\ge x\big)+\KLEINO(1)\,,\label{crucial}
\end{align}
when $nh_n^{3/2}\to \infty$. The leading term becomes simpler in this case when the minimum of the Brownian motion over the interval dominates the noise compared to a choice of $h_n\propto n^{-2/3}$. However, since we do not have a lower bound for $\int_0^1(B_t-x)_{-}\,dt$, we need a careful estimate to show that the remainder term indeed tends to zero. Using that the first entry time $T_x$ of $(B_t)$ in $x$, conditional on $\{\inf_{0\le t\le 1} B_t< x \}$, has a bounded, continuous conditional density $f(t|T_x<1)$, we use Lemma \ref{lem3} and properties of the Brownian motion what yields for any $\delta>0$ that
\begin{align*}
&\E\Big[\1\big(\inf_{0\le t\le 1} B_t< x\big)\exp\Big(-h_n^{3/2}n\sigma_{\tau}\eta\int_0^1(B_t-x)_{-}\,dt\Big]\\
&\le \exp\big(-\big(h_n^{3/2}n\big)^{\delta}\sigma_{\tau}\eta\big)\P(\inf_{0\le t\le 1} B_t< x)+\P\Big(\inf_{0\le t\le 1} B_t< x,\int_0^1(B_t-x)_{-}\,dt\le \big(h_n^{3/2}n\big)^{-1+\delta}\Big)\\
&\le \bigg(\exp\big(-\big(h_n^{3/2}n\big)^{\delta}\sigma_{\tau}\eta\big)+\int_0^1 \P\Big(\int_s^1(B_t)_{-}\,dt\le \big(h_n^{3/2}n\big)^{-1+\delta}\Big) f(s|T_x<1)\,ds\bigg)\P(\inf_{0\le t\le 1} B_t< x)\\
&\le \bigg(\exp\big(-\big(h_n^{3/2}n\big)^{\delta}\sigma_{\tau}\eta\big)+\int_0^1 \P\Big((1-s)\int_0^1(B_t)_{-}\,dt\le \big(h_n^{3/2}n\big)^{-1+\delta}\Big) f(s|T_x<1)\,ds\bigg)\\
&\hspace*{11cm}\times\P(\inf_{0\le t\le 1} B_t< x)\,.
\end{align*}
We use that
\begin{align*}
&\int_0^1 \P\Big((1-s)\int_0^1(B_t)_{-}\,dt\le \big(h_n^{3/2}n\big)^{-1+\delta}\Big)\,ds
\\ &\le \int_0^{1-b_n} \P\Big((1-s)\int_0^1(B_t)_{-}\,dt\le \big(h_n^{3/2}n\big)^{-1+\delta}\Big)\,ds+\int_{1-b_n}^1 \,ds\\
&\le \P\Big(b_n\int_0^1(B_t)_{-}\,dt\le \big(h_n^{3/2}n\big)^{-1+\delta}\Big)+b_n=\mathcal{O}\Big(\big(h_n^{3/2}nb_n^{-1}\big)^{-\frac{1+\delta}{3}}+b_n\Big)\,,
\end{align*}
holds true with any sequence $(b_n)$, $b_n\in(0,1)$. We apply Lemma \ref{lem3} in the last step. Choosing $b_n$ minimal yields that 
\begin{align*}
&\E\Big[\1\big(\inf_{0\le t\le 1} B_t< x\big)\exp\Big(-h_n^{3/2}n\sigma_{(k-1)h_n}\eta\int_0^1(B_t-x)_{-}\,dt\Big)\Big]=\mathcal{O}\Big( \big(h_n^{3/2}n\big)^{-\frac{1+\delta}{4}}\Big),
\end{align*}
almost surely. From the unconditional Lévy distribution of $T_x$, $f(s|T_x<1)$ is explicit, but its precise form does not influence the asymptotic order. We have verified \eqref{crucial}.

It is well known that by the reflection principle it holds that
\[\P\big(-\inf_{0\le t\le 1} B_t\ge x\big)=\P\big(\sup_{0\le t\le 1} B_t\ge x\big)=2\P\big(B_1\ge x\big)=\P\big(|B_1|\ge x\big)\,,\]
for $x\ge 0$, and since $|B_1|\sim \hn(0,1)$, we conclude the result.
\end{proof}

\begin{lem}\label{lem4}More generally than in Lemma \ref{lem2}, \eqref{lem3eq} holds under Assumptions \ref{sigma}, \ref{jumps} and for serially correlated noise under Assumption \ref{noise} with $t_i^n=i/n$, $i=0,\ldots,n$.
\end{lem}
\begin{proof}Non-negativity of the noise, $\epsilon_i\ge 0$, for all $i\in\{0,\ldots,n\}$, yields that
\begin{align}\label{noise_lowbound}\min_{i\in\mathcal{I}_{\tau}^n}\big(M_{t_i^n}-X_{\tau}+\epsilon_{i}\big)\ge  \min_{i\in\mathcal{I}_{\tau}^n}\big(M_{t_i^n}-X_{\tau}\big)+\min_{i\in\mathcal{I}_{\tau}^n}\epsilon_{i}\ge \min_{i\in\mathcal{I}_{\tau}^n}\big(M_{t_i^n}-X_{\tau}\big)\,.\end{align}
Based on \eqref{noise_lowbound}, we obtain that
\begin{align*}
\P\Big(h_n^{-1/2}\min_{i\in\mathcal{I}_{\tau}^n} \big(M_{t_i^n}+\epsilon_{i}-X_{\tau}\big)>x\sigma_{\tau}\Big)&\ge \P\Big(h_n^{-1/2}\min_{i\in\mathcal{I}_{\tau}^n} \big(M_{t_i^n}-X_{\tau}\big)>x\sigma_{\tau}\Big)\\
&=\P\Big(\min_{i\in\mathcal{I}_{\tau}^n}h_n^{-1/2} \big(W_{t_i^n}-W_{\tau}\big)>x\Big)\\
&=\P\big(\inf_{0\le t\le 1} B_t\ge x\big)+\KLEINO(1)\,.
\end{align*}
In general, irrespective of the serial correlation structure of $(\epsilon_i)$, the leading term in \eqref{crucial} thus provides an asymptotic lower bound for the tail function of local minima under non-negative noise. Therefore, an asymptotic upper bound inequality under $m_n$-dependence is sufficient to conclude. We use the elementary fact that for any subsets $\mathcal{S}_2\subseteq\mathcal{S}_1\subseteq\{0,\ldots,n\}$, and any $a_i\in\R$, it holds that
\begin{align}\min_{i\in\mathcal{S}_{2}}a_i\le \min_{i\in\mathcal{S}_{1}} a_i\,.\label{subsetineq}\end{align}
Setting $\mathcal{I}_{\tau}^{m_n}=\{\lfloor n\tau\rfloor+1,\lfloor n\tau\rfloor+m_n+1,\lfloor n\tau\rfloor+2m_n+1\ldots,\lfloor n\tau\rfloor+\lfloor nh_nm_n^{-1}\rfloor m_n+1\}$, it is under $m_n$-dependence possible to use factorization of the conditional tail function of $\min_{i\in\mathcal{I}_{\tau}^{m_n}}\epsilon_i$ in an analogous way as for $\min_{i\in\mathcal{I}_{\tau}^{n}}\epsilon_i$ in the proof of Lemma \ref{lem2}. This and \eqref{subsetineq} yield
\begin{align*}
&\P\Big(h_n^{-1/2}\min_{i\in\mathcal{I}_{\tau}^n} \big(M_{t_i^n}+\epsilon_{i}-X_{\tau}\big)>x\sigma_{\tau}\Big)\le 
\P\Big(h_n^{-1/2}\min_{i\in\mathcal{I}_{\tau}^{m_n}} \big(M_{t_i^n}+\epsilon_{i}-X_{\tau}\big)>x\sigma_{\tau}\Big)\\
~&=\E\bigg[\exp\Big(\sum_{i\in\mathcal{I}_{\tau}^{m_n}}\log\big(1-F_{\eta}\big(h_n^{1/2}\sigma_{\tau}\big(x-h_n^{-1/2}(W_{t_i^n}-W_{\tau})\big)\big)\big)\Big)\bigg]\,.
\end{align*}
The same steps as in the proof of Lemma \ref{lem2} are applied to this upper bound under $m_n$-dependence. However, the Riemann sum is by the subset $\mathcal{I}_{\tau}^{m_n}\subset \mathcal{I}_{\tau}^{n}$ with distances $m_n(nh_n)^{-1}$ instead of $(nh_n)^{-1}$, such that
\begin{align*}
\P\Big(h_n^{-1/2}\min_{i\in\mathcal{I}_{\tau}^n} \big(M_{t_i^n}+\epsilon_{i}-X_{\tau}\big)>x\sigma_{\tau}\Big)\le
\E\Big[\hspace*{-.05cm}\exp\hspace*{-.05cm}\Big(\hspace*{-.1cm}-\hspace*{-.05cm}h_n^{3/2}nm_n^{-1}\sigma_{\tau}\eta\hspace*{-.05cm}\int_0^1\hspace*{-.05cm}(B_t-x)_{-}\,dt\,(1+\KLEINO(1)\hspace*{-.02cm})\hspace*{-.05cm}\Big)\hspace*{-.05cm}\Big].
\end{align*}
If $nh_n^{3/2}m_n^{-1}\to\infty$ under Assumption \ref{noise}, the same steps as in the proof of Lemma \ref{lem2} then yield
\begin{align*}
&\P\Big(h_n^{-1/2}\min_{i\in\mathcal{I}_{\tau}^n} \big(M_{t_i^n}+\epsilon_{i}-X_{\tau}\big)>x\sigma_{\tau}\Big)\le\P\big(\inf_{0\le t\le 1} B_t\ge x\big)+\KLEINO(1)\,.
\end{align*}
Combining the asymptotic lower and upper bound inequalities (sandwich theorem) finishes the proof.
\end{proof}
\begin{lem}\label{lem5}More generally than in Lemma \ref{lem2}, \eqref{lem3eq} holds with non-regularly spaced observations times under Assumptions \ref{noise}, \ref{sigma} and \ref{jumps}.
\end{lem}
\begin{proof}
An analogous approximation as left-hand side in \eqref{crucial} is established, such that the remainder of the proof of Lemma \ref{lem2} applies in the same way. 
With a standard Brownian motion $(B_t)$, we obtain that
\begin{align*}
&\P\Big(h_n^{-1/2}\min_{i|t_i^n\in[\tau,\tau+h_n]} \big(M_{t_i^n}+\epsilon_{i}-X_{\tau}\big)>x\sigma_{\tau}\Big)=\\
~&=\E\bigg[\exp\Big(-h_n^{1/2}\sigma_{\tau}\eta \sum_{i|t_i^n\in[\tau,\tau+h_n]}\big(h_n^{-1/2}(W_{t_i^n}-W_{\tau})-x\big)_-\Big)\bigg]\\
~&=\E\bigg[\exp\Big(-h_n^{1/2}\sigma_{\tau}\eta n \int_{\tau}^{\tau+h_n}\big(h_n^{-1/2}(W_{t}-W_{\tau})-x\big)_-\,d\mu_n^S(t)\Big)\bigg]\\
~&=\E\Big[\exp\Big(-h_n^{1/2}n\sigma_{\tau}\eta\int_0^{h_n}(h_n^{-1/2}B_t-x)_{-}\,f^S(t+\tau)\,dt\,(1+\KLEINO(1))\Big)\Big]\\
~&=\E\Big[\exp\Big(-h_n^{3/2}n\sigma_{\tau}f^S(\tau)\eta\int_0^1(B_s-x)_{-}\,ds\,(1+\KLEINO(1))\Big)\Big]\,.
\end{align*}
The first equality is taken from the proof of Lemma \ref{lem2}. The second equality follows from the definition of the empirical measure $\mu_n^S$ in Assumption \ref{noise}. Since the complement of the intersection of $\Omega_0^X=\{\omega\in\Omega^X|\mu_n^S(\omega)\stackrel{w}{\rightarrow}\mu^S(\omega)\}$ and $\{\omega\in\Omega^X|\omega\mapsto\big(h_n^{-1/2}(W_{t}-W_{\tau})-x\big)_-~\text{is continuous}\}$ is a $\P^X$ null set, and the continuous function is moreover bounded on the considered compact time interval, the almost sure weak convergence with limit density $f^S$ from Assumption \ref{noise} guarantees the convergence of the integrals in the third equality, also under dependence between $(W_t)$ and $\mu_n^S$. Since the integrals converge and the expression inside the expectation is bounded by 1, dominated convergence yields convergence of the expectations. We use the shift and the self-similarity properties of the Brownian motion $(B_t)$, $t\in[0,1]$, with a change of variables $s=h_n^{-1}\cdot t$ in the last equality. We then use the imposed continuity of $f^S$ to approximate it locally constant over the block. With the additional factor $f^{S}(\tau)$, using the imposed condition $f^S(\tau)>0$, for any $\tau$, the proof is completed along the lines of the proof of Lemma \ref{lem2} below \eqref{crucial}.
\end{proof}

\subsection{Uniformly consistent spot volatility estimation}
In the sequel, we write $A_n\lesssim B_n$ for two real sequences, if there exists some $n_0\in\N$ and a constant $K$, such that $A_n\le K B_n$, for all $n\ge n_0$. With the estimate from Lemma \ref{lem1} and using \eqref{help1} under the stronger condition that $(\sigma_t)$ is Hölder continuous with regularity $\alpha$, we obtain with the process 
\[M_t=X_{\lfloor th_n^{-1}\rfloor h_n}+\int_{\lfloor th_n^{-1}\rfloor h_n}^t \sigma_{\lfloor th_n^{-1}\rfloor h_n}\,dW_s\,,\]
that
\begin{align*}
&\max_{k=0,\ldots,h_n^{-1}-1}\Big|\min_{i\in \mathcal{I}_k^n}\big(Y_i-X_{kh_n}\big)-\min_{i\in \mathcal{I}_k^n}\big(\epsilon_i+\sigma_{kh_n}(W_{t_i^n}-W_{kh_n})\big)\Big|\\
&\le \max_{k=0,\ldots,h_n^{-1}-1}\sup_{t\in[kh_n,(k+1)h_n]}|X_t-M_t|\le\sup_{t\in[0,1]}|X_t-M_t|\\
&\le \sup_{t\in[0,1]}\int_{\lfloor th_n^{-1}\rfloor h_n}^t|a_s|\,ds+\sup_{t\in[0,1]}\int_{\lfloor th_n^{-1}\rfloor h_n}^t\big(\sigma_t-\sigma_{\lfloor th_n^{-1}\rfloor h_n}\big)\,dW_s\\
&=\mathcal{O}_{\P}\big(h_n^{(1/2 +\alpha)\wedge 1 }\big)\,,
\end{align*}
with $\alpha$ from \eqref{vola}. Subtracting $X_{rh_n}$ from $m_{r,n}$ and $m_{r-1,n}$ in differences $m_{r,n}-m_{r-1,n}$, we obtain with this error bound that
\begin{align*}\max_{k=1,\ldots,h_n^{-1}-1}\bigg|\hat\sigma^2_{kh_n-}-\frac{\pi}{2(\pi-2)K_n}\sum_{r=(k-K_n)\wedge 1}^{k-1}h_n^{-1}\big(\tilde m_{r,n}-\tilde m^*_{r-1,n})^2\bigg|=\mathcal{O}_{\P}\big(h_n^{\alpha\wedge 1/2}\big)\,,
\end{align*}
with 
\begin{align*}
\tilde m_{r,n}&=\min_{i\in\mathcal{I}_r^{n}}\big(\epsilon_i+\sigma_{(r-1)h_n}(W_{t_i^n}-W_{rh_n})\big)\,,~\text{and}\\
\tilde m^*_{r-1,n}&=\min_{i\in\mathcal{I}_{r-1}^{n}}\big(\epsilon_i-\sigma_{(r-1)h_n}(W_{rh_n}-W_{t_i^n})\big)\,.
\end{align*}
Denote by $\E_{\sigma_{(k-1)h_n}}$ expectations with respect to conditional probability measures given $\sigma_{(k-1)h_n}$. The remainder of the proof relies on a maximum and a moment inequality, for which we use that the conditional moments of $\tilde m_{r,n}$ satisfy
\begin{align*}\E_{\sigma_{(r-1)h_n}}\big[\big|h_n^{-1/2}\tilde m_{r,n}\big|^p\big]&=p\int_0^{\infty}x^p\,\P_{\sigma_{(r-1)h_n}}\big(|h_n^{-1/2}\tilde m_{r,n}|>x\big)\,dx\\
&=p\int_0^{\infty}x^p\,\P_{\sigma_{(r-1)h_n}}\big(\sigma_{(r-1)h_n}\sup_{0\le t\le 1} B_t>x\big)\,dx+\KLEINO_{\P}(1)\\
&=p\int_0^{\infty}x^p\,\P_{\sigma_{(r-1)h_n}}\big(\sigma_{(r-1)h_n}|B_1|>x\big)\,dx+\KLEINO_{\P}(1)\\
&=\sigma_{(r-1)h_n}^p M_p+\KLEINO_{\P}(1)\,,
\end{align*}
with $M_p$ the moments of the standard half-normal distribution. We have used Lemma \ref{lem2}. We conclude the existence of all moments of $\tilde m_{r,n}$, and analogously for $\tilde m_{r,n}^*$.
We use a generalization of Rosenthal's inequality which states for i.i.d.\ random variables $Y_1,\ldots, Y_N$, with zero mean and $\E[|Y_1|^p]<\infty$, $p\in\N$, that
\[\E\Big[\Big|\sum_{i=1}^NY_i\Big|^p\Big]\le C_p\,\max\Big(\sum_{i=1}^N\E[|Y_i|^p]\,,\,\Big(\sum_{i=1}^N\E[Y_i^2]\Big)^{p/2}\Big)\,,\]
with some constant $C_p$ depending on $p$, such that for $p>2$, it holds that
\[\E\Big[\Big|\frac{1}{N}\sum_{i=1}^NY_i\Big|^p\Big]\lesssim N^{-p/2}\,.\]
By \cite{Burkholder} the inequality extends to martingale increments. Note that while $\tilde m_{r,n}$ and $\tilde m_{r,n}^*$ are correlated for the same $r$, $(\tilde m_{r,n})$ is a sequence of uncorrelated random variables and we conclude that
\begin{align*}
\E\bigg[\Big|\frac{1}{K_n}\sum_{r=1}^{K_n}\Big(\big(h_n^{-1/2}\tilde m_{r,n}\big)^2-\E_{\sigma_{(r-1)h_n}}\big[\big(h_n^{-1/2}\tilde m_{r,n}\big)^2\big]\Big)\Big|^p\bigg]\lesssim K_n^{-p/2}\,,
\end{align*}
and analogous bounds when replacing $\tilde m_{r,n}$ by $\tilde m_{r-1,n}^*$, or $\tilde m_{r,n}\tilde m_{r-1,n}^*$. By the tower rule for conditional expectations, the considered random variables in the sum have mean zero. With the Markov inequality we hence obtain for all $\epsilon>0$ that
\begin{align*}
&\P\bigg(K_n^{\gamma}\max_{k=1,\ldots,h_n^{-1}-1}\Big|\frac{1}{K_n}\sum_{r=(k-K_n)\wedge 1}^{k-1}\Big(\frac{h_n^{-1}\pi}{2(\pi-2)}\big(\tilde m_{r,n}-\tilde m_{r-1,n}^*\big)^2\Big)-\Psi_n(\sigma^2_{(r-1)h_n})\Big|>\epsilon\bigg)\\
&\le h_n^{-1}\bigg(\P\bigg(K_n^{\gamma-1}\Big|\sum_{r=1}^{K_n}\Big(\big(h_n^{-1/2}\tilde m_{r,n}\big)^2-\E_{\sigma_{(r-1)h_n}}\big[\big(h_n^{-1/2}\tilde m_{r,n}\big)^2\big]\Big)\Big|>\frac{\epsilon}{3}\bigg)\\
&\quad + \bigg(\P\bigg(K_n^{\gamma-1}\Big|\sum_{r=1}^{K_n}\Big(\big(h_n^{-1/2}\tilde m_{r-1,n}^*\big)^2-\E_{\sigma_{(r-1)h_n}}\big[\big(h_n^{-1/2}\tilde m_{r-1,n}^*\big)^2\big]\Big)\Big|>\frac{\epsilon}{3}\bigg)\\
&\quad + \bigg(\P\bigg(K_n^{\gamma-1}\Big|\sum_{r=1}^{K_n}\Big(2h_n^{-1/2}\tilde m_{r,n}\tilde m_{r-1,n}^*-\E_{\sigma_{(r-1)h_n}}\big[2h_n^{-1/2}\tilde m_{r,n}\tilde m_{r-1,n}^*\big]\Big)\Big|>\frac{\epsilon}{3}\bigg)\bigg)\\
&\lesssim h_n^{-1} \bigg(\E\bigg[\Big|K_n^{\gamma-1}\sum_{r=1}^{K_n}\Big(\big(h_n^{-1/2}\tilde m_{r,n}\big)^2-\E_{\sigma_{(r-1)h_n}}\big[\big(h_n^{-1/2}\tilde m_{r,n}\big)^2\big]\Big)\Big|^p\bigg]\\
&\quad + \E\bigg[\Big|K_n^{\gamma-1}\sum_{r=1}^{K_n}\Big(\big(h_n^{-1/2}\tilde m_{r-1,n}^*\big)^2-\E_{\sigma_{(r-1)h_n}}\big[\big(h_n^{-1/2}\tilde m_{r-1,n}^*\big)^2\big]\Big)\Big|^p\bigg]\\
&\quad + \E\bigg[\Big|K_n^{\gamma-1}\sum_{r=1}^{K_n}\Big(2h_n^{-1/2}\tilde m_{r,n}\tilde m_{r-1,n}^*-\E_{\sigma_{(r-1)h_n}}\big[2h_n^{-1/2}\tilde m_{r,n}\tilde m_{r-1,n}^*\big]\Big)\Big|^p\bigg]\\
&\quad \lesssim K_n^{(\gamma-1/2)p} h_n^{-1}\to 0\,.
\end{align*}
Choosing $p$ sufficiently large, the term converges to zero as $n\to\infty$. For $(\sigma_t)$ being Hölder continuous with regularity $\alpha$, we have that
\begin{align*}&\max_{k=1,\ldots,h_n^{-1}-1}\Big|\frac{1}{K_n}\sum_{r=(k-K_n)\wedge 1}^{k-1}\sigma^2_{(r-1)h_n}-\sigma^2_{(k-1)h_n}\Big|\\
&=\max_{k=1,\ldots,h_n^{-1}-1}\Big|\frac{1}{K_n}\sum_{r=(k-K_n)\wedge 1}^{k-1}\big(\sigma^2_{(r-1)h_n}-\sigma^2_{(k-1)h_n}\big)\Big|\\
&=\mathcal{O}_{\P}\Big(K_n^{-1}\sum_{j=1}^{K_n}(j h_n)^{\alpha}\Big)=\mathcal{O}_{\P}\big((K_n h_n)^{\alpha}\big)=\KLEINO_{\P}\big(K_n^{-1/2}\big)\,.
\end{align*}
By the differentiability of $\Psi_n(\,\cdot\,)$, based on Eq.\ (A.35) from \cite{BJR}, a local approximation of the volatility in the argument of $\Psi_n$ is asymptotically negligible as well. This finishes the proof of Proposition \ref{uniform}.

Since by \eqref{psiapprox} it holds that $\Psi_n(x)\to x$, for all $x>0$, we obtain that $|{{\Psi_n}}\big(\sigma_{kh_n}^2\big)-\sigma_{kh_n}^2|\to 0$, uniformly over all $k$, such that Corollary \ref{uniformcor} readily follows with the triangle inequality.

\subsection{Asymptotic distribution of jump estimates}
By Lemmas \ref{lem1}, \ref{lem2}, \ref{lem4} and \ref{lem5}, we obtain for any $\tau$, $0\le\tau\le 1-h_n$, conditional on $\sigma_{\tau}$, that 
\begin{subequations}
\begin{align}\label{locconv1}
-h_n^{-1/2} \Big(\min_{i\in \mathcal{I}_{\tau}^n}Y_{i}-X_{\tau}\Big)\,\stackrel{d}{\longrightarrow} \hmn(0,\sigma_{\tau}^2)\,.
\end{align}
With $M_t=X_{\tau-}-\int_{t}^{\tau-}\sigma_{\tau-}\,dW_s$, completely analogous proofs as for Lemmas \ref{lem1}, \ref{lem2}, \ref{lem4} and \ref{lem5}, using that $(-W_t)$ is as well a Brownian motion, shows that for any $\tau$, $h_n\le\tau\le 1$, conditional on $\sigma_{\tau}$, it holds true that
\begin{align}\label{locconv2}
-h_n^{-1/2}\Big(\min_{i\in \mathcal{I}_{\tau-}^n}Y_{i}-X_{\tau-}\Big)\,\stackrel{d}{\longrightarrow} \hmn(0,\sigma_{\tau-}^2)\,.
\end{align}
\end{subequations}
Moreover, by the strong Markov property of $(W_s)$ and since the $(\epsilon_i)$ are $m_n$-dependent, covariances of the statistics in \eqref{locconv1} and \eqref{locconv2} for $h_n\le\tau\le 1-h_n$ vanish, such that we deduce joint weak convergence. For any $\tau\in(0,1)$, $h_n\le\tau\le 1-h_n$ holds true for sufficiently large $n$. Continuous mapping readily yields \eqref{stablecltbhr}.\\
We show that, when not conditioning on $\sigma_{\tau}$, the convergences in \eqref{locconv1} and \eqref{locconv2} are stable in law with respect to the $\sigma$-field $\mathcal{F}^X$. The proof is analogous for both sequences, and we restrict to the first one. The stable convergence is equivalent to the joint weak convergence of $V_n=-h_n^{-1/2}\min_{i\in \mathcal{I}_{\tau}^n} (Y_{i}-X_{\tau})$ with any $\mathcal{F}^X$-measurable, bounded random variable $Z$. That is,
\begin{align}\label{stable2}\E\left[Z g(V_n)\right]\rightarrow \E\left[Z \cdot g(\sigma_{\tau} |U|)\right]\, \end{align}
as $n\to\infty$, for any continuous bounded function $g$, and with $U\sim\mathcal{N}(0,1)$ being independent of $\mathcal{F}^X$. By Lemma 1 it suffices to prove this for $\tilde V_n=-h_n^{-1/2}\min_{i\in\mathcal{I}_{\tau}^n} \big(M_{t_i^n}+\epsilon_{i}-X_{\tau}\big)$, and $Z$ measurable w.r.t.\ $\sigma(\int_0^t\sigma_s\,dW_s,0\le t\le 1)$. Define the sequence of intervals $A_n=(\tau, (\tau+h_n)\wedge 1]$, and consider the sequences of decompositions 
\begin{align*}\tilde C(n)_t=\int_0^t\1_{A_n}\negthinspace(s)\,\sigma_{s}\,dW_s\,,\,\bar C(n)_t=\int_0^t\sigma_{s}\,dW_s-\tilde C(n)_t\,,\end{align*}
of $(\int_0^t\sigma_{s}\,dW_s)_{t\ge 0}$. If $\mathcal{H}_n$ denotes the $\sigma$-field generated by $\bar C(n)_t$ and $\mathcal{F}_0$, then $\big(\mathcal{H}_n\big)_n$ is an isotonic sequence with $\bigvee_n \mathcal{H}_n=\sigma(\int_0^t\sigma_s\,dW_s,0\le t\le 1)$. Since $\E[Z|\mathcal{H}_n]\rightarrow Z$ in $L^1(\P)$, it suffices to show that
\begin{align}\E[Z g (\tilde V_n)]\rightarrow \E[Z\cdot g(\sigma_{\tau} |U|)]\,,\end{align}
for $Z$ being $\mathcal{H}_q$-measurable for some $q$. Note that $\sigma_{\tau}$ is $\mathcal{H}_q$-measurable for any $q$. Since, for all $n\ge q$, conditional on $\mathcal{H}_q$, $\tilde V_n$ has a law independent of $\bar C(n)_t$, we obtain with the tower rule of conditional expectations:
\begin{align*}
\lim_{n\to\infty}\E[Z g (\tilde{V}_n)]&=\lim_{n\to\infty}\E\big[\E[Z g (\tilde{V}_n)\vert \mathcal{H}_q]\big]\\
&=\lim_{n\to\infty}\E\big[Z\E[g (\tilde V_n)\vert \mathcal{H}_q]\big]=\E[Z\cdot g(\sigma_{\tau} |U|)]\,,
\end{align*}
for $Z$ being $\mathcal{H}_q$-measurable where we can use Lemma \ref{lem2} and dominated convergence in the last step. The stability allows to conclude \eqref{stablecltf} from \eqref{spotclttr} and the analogous consistency of the truncated version of \eqref{simpleestimator2} with \eqref{locconv1} and \eqref{locconv2}. This completes the proof of Theorem \ref{thmloc}.
\subsection{Asymptotic distribution under the null hypothesis of the global test}
To prove the asymptotic result for the global test, we establish the extreme value convergence for the maximum of i.i.d.\ random variables distributed as the absolute difference of two independent, standard half-normally distributed random variables in the next proposition. This is based on classical extreme value theory and an expansion of convolution tails. 
\begin{prop}\label{gumbelprop}
Let $(V_1,\ldots ,V_n,\tilde V_1,\ldots, \tilde V_n)$ be a $2n$-dimensional vector of i.i.d.\ standard normally distributed random variables. It holds true that
\begin{align}\label{gumbelh}
\frac{\max_{1\le i\le n}\big||V_i|-|\tilde V_i|\big|-b_n}{a_n}\stackrel{d}{\longrightarrow} \Lambda\,,
\end{align}
where $\Lambda$ denotes the standard Gumbel distribution, with the sequences
\begin{align}\label{g6}
a_n=\frac{1}{\sqrt{2\log (2n)}}~, ~\mbox{and}~~b_n=\sqrt{2\log (2n)}+\delta_n~, ~\mbox{with}~~\delta_n=-\frac{\log(\pi\log (2n))}{\sqrt{2\log (2n)}}\,.
\end{align}
\end{prop}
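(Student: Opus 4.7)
The plan is to reduce the statement to classical extreme value theory for the i.i.d.\ sequence $X_i:=\big||V_i|-|\tilde V_i|\big|$, $i=1,\ldots,n$, by establishing a sharp two-term asymptotic for the right tail of $X_1$. Write $\bar F(x):=\P(X_1>x)$. Since $|V_1|-|\tilde V_1|$ is symmetric, $\bar F(x)=2\,\P(|V_1|-|\tilde V_1|>x)$ for $x>0$, and conditioning on $|\tilde V_1|$ with its standard half-normal density gives the convolution
\begin{align*}
\P\big(|V_1|-|\tilde V_1|>x\big)=\int_0^{\infty}\big(1-\Phi(x+s)\big)\sqrt{2/\pi}\,e^{-s^2/2}\,ds\,.
\end{align*}
Inserting Mills' ratio $1-\Phi(y)=(\sqrt{2\pi}\,y)^{-1}e^{-y^2/2}(1+\mathcal{O}(y^{-2}))$, completing the square $(x+s)^2/2+s^2/2=x^2/2+xs+s^2$, and then applying Laplace's method (substitute $w=xs$ so that the mass concentrates near $s=0$ as $x\to\infty$) yields the target asymptotic
\[\bar F(x)=\frac{4}{\pi x^2}\,e^{-x^2/2}\big(1+\KLEINO(1)\big)\,,\quad x\to\infty\,.\]

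Once this tail is in hand, the second step is to verify that $\bar F$ lies in the Gumbel maximum domain of attraction with the claimed normalizing sequences. I would do this directly, by checking the pointwise limit $n\bar F(b_n+a_n x)\to e^{-x}$ for every $x\in\R$, which, together with $\P(\max_{i\le n}X_i\le y)=(1-\bar F(y))^n=\exp(-n\bar F(y)(1+\KLEINO(1)))$ and continuity of $z\mapsto e^{-z}$, gives \eqref{gumbelh}. Alternatively one can check the von Mises condition $\tfrac{d}{dx}(\bar F(x)/f(x))\to 0$ with $f=-\bar F'\sim (4/(\pi x))e^{-x^2/2}$, yielding hazard rate $\psi(x)\sim x$ and hence the correct scaling $a_n=1/\psi(b_n)\sim 1/\sqrt{2\log(2n)}$.

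The pointwise verification is then a plug-in calculation. Expanding $(b_n+a_n x)^2/2=b_n^2/2+b_na_n x+\tfrac12 (a_n x)^2$ and noting that $b_na_n\to 1$ and $a_n\to 0$ gives
\[n\bar F(b_n+a_n x)=\frac{4n}{\pi b_n^2}\,e^{-b_n^2/2}\,e^{-x}\big(1+\KLEINO(1)\big)\,.\]
The definition of $b_n,\delta_n$ in \eqref{g6} implies $b_n^2=2\log(2n)-2\log(\pi\log(2n))+\KLEINO(1)$, hence $e^{-b_n^2/2}=(\pi\log(2n))/(2n)\cdot e^{\KLEINO(1)}$ and $b_n^2=2\log(2n)(1+\KLEINO(1))$. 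Substituting gives $(4n/(\pi b_n^2))e^{-b_n^2/2}\to 1$, which is exactly what is needed.

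I expect the main obstacle to be step one, the two-term tail asymptotic with the polynomial prefactor $4/(\pi x^2)$ rather than merely the leading exponential order $e^{-x^2/2}$. A one-term Mills-ratio bound is not enough: the logarithmic correction $\delta_n$ inside $b_n$ and the scaling $a_n=1/b_n$ both hinge on getting the polynomial factor exactly right, so the convolution integral must be controlled with enough uniformity in $x$ to validate the substitution $w=xs$ and to absorb the half-normal factor $e^{-s^2/2}$ into a harmless error. After that, the Fisher--Tippett--Gnedenko step is essentially mechanical.
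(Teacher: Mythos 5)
Your proposal follows essentially the same route as the paper: both hinge on the two-term tail asymptotic $\bar F(x)\sim \tfrac{4}{\pi x^2}e^{-x^2/2}$ for the absolute difference of two independent half-normals and then apply classical extreme value theory (the paper computes the density exactly via $\operatorname{erfc}$ and l'H\^opital and handles the outer absolute value at the end by a symmetry lemma replacing $n$ by $2n$, whereas you fold the symmetry in at the start and verify $n\bar F(a_nx+b_n)\to e^{-x}$ directly, which is equivalent to the paper's de Haan criterion with auxiliary function $f(t)=t^{-1}$). One small correction: in your convolution formula the integrand should be $\P(|V_1|>x+s)=2\big(1-\Phi(x+s)\big)$ rather than $1-\Phi(x+s)$; taken literally it would yield $\tfrac{2}{\pi x^2}e^{-x^2/2}$ for $\bar F$ and hence the wrong norming constants, although the tail you actually state, and everything downstream of it, is correct.
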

\begin{proof}
Denote with $g$ the density of $|V_1|-|\tilde V_1|$ on the positive real line and $G$ and $\bar G=1-G$ the associated cdf and survival function, respectively. For $g(x)$, $x>0$, we compute
\begin{align*}g(x)&=\frac{2}{\pi}\int_0^{\infty}e^{-u^2/2}\,e^{-(x+u)^2/2}\,du\,\\
&=\frac{\sqrt{2}}{\pi}\,e^{-x^2/4}\int_{x/\sqrt{2}}^{\infty}e^{-v^2/2}\,dv\,\\
&=\frac{2}{\sqrt{\pi}}\,e^{-x^2/4}\Big(1-\Phi\big(x/\sqrt{2}\big)\Big)\,\\
&=\sqrt{\frac{1}{\pi}}\,e^{-x^2/4}\,\text{erfc}(x/2)\,,
\end{align*}
with $\Phi$ the cdf of the standard normal distribution and 
\[\Phi(x)=\frac{1+\text{erf}(x/\sqrt{2})}{2}~,~\text{erfc}(x)=1-\text{erf}(x)~.\] 
For asymptotic equivalence of two positive functions $f$ and $g$, we write $f\asymp g$, which means that
\[\lim_{x\to\infty}\frac{f(x)}{g(x)}=1\,.\]
Based on l'H\^{o}pital's rule we obtain that
\[ \text{erfc}(x)\asymp \frac{e^{-x^2}}{\sqrt{\pi} x}\,,\]
and conclude that
\begin{align*}g(x)\asymp \frac{2}{\pi}\frac{e^{-x^2/2}}{x}\,.\end{align*}
Based on l'H\^{o}pital's rule, we obtain that the associated survival function $\bar G$ satisfies
\[\bar G(x)=\int_x^{\infty} g(t)\,d t \asymp \frac{2}{\pi}\,\frac{e^{-x^2/2}}{x^2}\,.\]
By Equation (1.2.4) of \cite{haan}, 
\begin{align}\label{gumbel2}
\frac{\max_{1\le i\le n}\big(|V_i|-|\tilde V_i|\big)-b_n}{a_n}\stackrel{d}{\longrightarrow} \Lambda\,,
\end{align}
is satisfied with some sequences $(a_n)$ and $(b_n)$, if there exists a function $f$, such that for all $x\in\R$, the survival function $\bar G$ satisfies
\begin{align}\,\lim_{t\uparrow x^*}\frac{\bar G(t+xf(t))}{\bar G(t)}=e^{-x}~.\label{crit}\end{align}
$x^*$ is the right end-point of the distribution which is $x^*=+\infty$ here. In this case, \eqref{crit} is satisfied with $f(t)=t^{-1}$, since
\[\lim_{t\uparrow\infty}\frac{\bar G\Big(t+\tfrac{x}{t}\Big)}{\bar G(t)}=\lim_{t\uparrow\infty}\frac{\Big(t+\tfrac{x}{t}\Big)^{-2}\exp{\big(-t^2/2-x-x^2/(2t^2)\big)}}{t^{-2}e^{-t^2/2}}=e^{-x}\,,\,\forall\,x\in\R\,.\]
We show that \eqref{gumbel2} applies with 
\begin{align}
a_n=\frac{1}{\sqrt{2\log (n)}}~, ~\mbox{and}~~b_n=\sqrt{2\log (n)}+\delta_n~, ~\mbox{with}~~\delta_n=-\frac{\log(\pi\log (n))}{\sqrt{2\log (n)}}\,.
\end{align} 
We can determine $(a_n)$ and $(b_n)$ based on
\begin{align}\label{seq}\lim_{n\to\infty}n\,\bar G(a_n t+ b_n)=-\log(\Lambda(t))=e^{-t}\,,\end{align}
or use $b_n=U(n)$, with $U$ the general notation for the left-continuous generalized inverse of $1/(1-G)$, see Remark 1.1.9 in \cite{haan}. Setting $b_n=\sqrt{2\log(n)}+\delta_n$, with a null sequence $\delta_n$, yields that
\[n\asymp\frac{\pi}{2}\,b_n^2\,e^{b_n^2/2}\asymp \pi\log(n)\exp\big(\log(n)+\sqrt{2\log(n)}\delta_n\big)\,,\]
and we find that the identity holds true for
\[\delta_n=-\frac{\log(\pi\log (n))}{\sqrt{2\log (n)}}\,.\]
Computing $n U^{\prime}(n)$, starting with $U(n)=b_n$, gives for the sequence $(a_n)$ that 
\[a_n=(2\log(n))^{-1/2}~.\] 
We exploit the symmetry of the distribution to conclude \eqref{gumbelh} readily from \eqref{gumbel2}, see Lemma 1 in \cite{arxiv2} for a more detailed argument.
\end{proof}
Let us point out that the sequences $a_n$ and $b_n$ are different compared to the Gumbel convergence in the standard normal case, see \cite{arxiv2} for a comparison and discussion. These differences are of course crucial.\\

The triangle inequality yields for all $k=1,\ldots,h_n^{-1}-1$, that
\begin{align*}\big|m_{k,n}-m_{k-1,n}\big|\le \big| m_{k,n}-\tilde m_{k,n}\big|+\big|\tilde m_{k-1,n}^*- m_{k-1,n}\big|+\big|\tilde m_{k,n}-\tilde m_{k-1,n}^*\big|.
\end{align*}
By an analogous bound starting with $\big|\tilde m_{k,n}-\tilde m_{k-1,n}^*\big|$, and elementary transformations, we obtain that
\begin{align*}&\bigg|\max_{k=1,\ldots,h_n^{-1}-1}\big|m_{k,n}-m_{k-1,n}\big|-\max_{k=1,\ldots,h_n^{-1}-1}\big|\tilde m_{k,n}-\tilde m_{k-1,n}^*\big|\bigg|\\
&\le \max_{k=1,\ldots,h_n^{-1}-1}\big|m_{k,n}-\tilde m_{k,n}\big|+\max_{k=1,\ldots,h_n^{-1}-1}\big|m_{k,n}-\tilde m_{k,n}^*\big|\\
&\le 2\sup_{t,s:\,|t-s|\le h_n}\Big|\int_t^s (\sigma_u-\sigma_s)\,dW_u\Big|+2\sup_{t,s:\,|t-s|\le h_n}\int_t^s|a_u|\,du\\
&=\mathcal{O}_{\P}\big(h_n^{1/2+\alpha}\log(h_n^{-1})\big)=\KLEINO_{\P}\Big(h_n^{1/2}\Big(\log(2h_n^{-1})\Big)^{-1/2}\Big)\,,\end{align*}
using that $(\sigma_t)$ is Hölder continuous with exponent $\alpha$. For the difference between statistics with estimated and true volatilities we use the estimate
\begin{align*}&\bigg|\max_{k=1,\ldots,h_n^{-1}-1}\frac{\big|\tilde m_{k,n}-\tilde m_{k-1,n}^*\big|}{\hat\sigma_{kh_n}}-\max_{k=1,\ldots,h_n^{-1}-1}\frac{\big|\tilde m_{k,n}-\tilde m_{k-1,n}^*\big|}{\sigma_{kh_n}}\bigg|\\
&\le \max_{k=1,\ldots,h_n^{-1}-1}\frac{\big|\tilde m_{k,n}-\tilde m_{k-1,n}^*\big|}{\sigma_{kh_n}}\max_{k=1,\ldots,h_n^{-1}-1}\Big|\frac{\sigma_{kh_n}}{\hat\sigma_{kh_n}}-1\Big|\,.
\end{align*}
The uniform consistency of the spot volatility estimator yields that
\begin{align*}\max_{k=1,\ldots,h_n^{-1}-1}\frac{\big|\tilde m_{k,n}-\tilde m_{k-1,n}^*\big|}{\hat\sigma_{kh_n}}=\max_{k=1,\ldots,h_n^{-1}-1}\frac{\big|\tilde m_{k,n}-\tilde m_{k-1,n}^*\big|}{\sigma_{kh_n}}+\KLEINO_{\P}\Big(\big(\log(2h_n^{-1})\big)^{-1/2}\Big)\,.
\end{align*}
Based on these two preliminary approximation steps, we obtain that
\begin{align*}&n^{1/3}\max_{k=1,\ldots,h_n^{-1}-1}\frac{\big|m_{k,n}-m_{k-1,n}\big|}{\hat\sigma_{kh_n}}=h_n^{-1/2}\sqrt{2\log(2h_n^{-1}-2)}\max_{k=1,\ldots,h_n^{-1}-1}\frac{\big| m_{k,n}- m_{k-1,n}\big|}{\hat\sigma_{kh_n}}\\
&\quad =h_n^{-1/2}\sqrt{2\log(2h_n^{-1}-2)}\max_{k=1,\ldots,h_n^{-1}-1}\frac{\big|\tilde m_{k,n}-\tilde m_{k-1,n}^*\big|}{\hat\sigma_{kh_n}}+\KLEINO_{\P}(1)\\
&\quad =\sqrt{2\log(2h_n^{-1}-2)}\max_{k=1,\ldots,h_n^{-1}-1}\frac{\big|h_n^{-1/2}\tilde m_{k,n}-h_n^{-1/2}\tilde m_{k-1,n}^*\big|}{\sigma_{kh_n}}+\KLEINO_{\P}(1)\,.
\end{align*}

For any fix $K$, by the independence between statistics on different blocks and Lemma \ref{lem2}, the vector $h_n^{-1/2}(\tilde m_{k,n}-\tilde m_{k-1,n}^*)_{1\le k\le K}$ converges in distribution to a vector $(|U_k|-|U_{k-1}^*|)_{1\le k\le K}$, with two sequences of independent normally distributed random variables $(U_k)$ and $(U_{k}^*)$, with $U_k$ and $U_{k}^*$ correlated only for the same index $k$. 

Using continuous mapping and the Skorokhod representation we derive that
\begin{align*}n^{1/3}\,T^{BHR}(Y_0,Y_1\ldots,Y_n)&=n^{1/3}\,\max_{k=1,\ldots,h_n^{-1}-1}\Big|\frac{m_{k,n}-m_{k-1,n}}{\hat\sigma_{kh_n}}\Big|\\
&=\sqrt{2\log(2h_n^{-1}-2)} \max_{k=1,\ldots,h_n^{-1}-1}\big||U_k|-|U_{k-1}^*|\big|+\KLEINO_{\P}(1)\,,\end{align*}
where $(U_k)_{k=1,\ldots,h_n^{-1}-1}$ and $(U_k^*)_{k=1,\ldots,h_n^{-1}-1}$ are i.i.d.\ sequences of standard normally distributed random variables, with $U_k$ and $U_{j}^*$ independent for $k\ne j$. The Gumbel convergence shown in Proposition \ref{gumbelprop} generalizes from an i.i.d.\ to a 1-dependent non-i.i.d.\ sequence as shown in \cite{watson1954}. As $h_n^{-1}\to\infty$, we can thus apply Proposition \ref{gumbelprop} replacing $n$ by the number of differences between blocks, $h_n^{-1}-1$. This proves \eqref{gumbel}.

\subsection{Proofs of consistency of the tests\label{sec:proofcons}}
We are left to prove the consistency of the tests, \eqref{consist} and \eqref{consistloc}. Under the alternative hypothesis, there is some $k^*\in\{1,\ldots,h_n^{-1}-2\}$ with $\theta\in (k^*h_n,(k^*+1)h_n)$, such that $X_t$ admits a jump at time $\theta\in(0,1)$: $|\Delta X_{\theta}|=|\Delta J_{\theta}|=|J_{\theta}-J_{\theta-}|>0$. By standard bounds for the jump component, we have that
\begin{align*}
\min_{i\in\mathcal{I}_{k^*}^n} Y_i&=\min_{i\in\mathcal{I}_{k^*}^n} \big(J_{t_i^n}+C_{t_i^n}+\epsilon_i\big)\\
&=\min\Big(\min_{i:t_i^n\in (k^*h_n,\theta)} \big(J_{t_i^n}+C_{t_i^n}+\epsilon_i\big),\min_{i:t_i^n\in [\theta,(k^*+1)h_n)} \big(J_{t_i^n}+C_{t_i^n}+\epsilon_i\big)\Big)\\
&=\min\Big(\Big(J_{\theta-}+\min_{i:t_i^n\in (k^*h_n,\theta)} \big(C_{t_i^n}+\epsilon_i\big)\Big),\Big(J_{\theta}+\min_{i:t_i^n\in [\theta,(k^*+1)h_n)} \big(C_{t_i^n}+\epsilon_i\big)\Big)\Big)+\KLEINO_{\P}\big(h_n^{1/2}\big)\,,
\end{align*}
where the remainder is due to possible additional jumps on $(k^*h_n,(k^*+1)h_n)$. We obtain the elementary lower bound 
\begin{align*}
\min_{i\in\mathcal{I}_{k^*}^n} Y_i\ge \min\big(J_{\theta-},J_{\theta}\big)+\min_{i\in\mathcal{I}_{k^*}^n}\big(C_{t_i^n}+\epsilon_i\big)+\KLEINO_{\P}\big(h_n^{1/2}\big)\,,
\end{align*}
and the upper bound
\begin{align*}
\min_{i\in\mathcal{I}_{k^*}^n} Y_i&\le \min\big(J_{\theta-},J_{\theta}\big)+\max\Big(\min_{i:t_i^n\in (k^*h_n,\theta)} \big(C_{t_i^n}+\epsilon_i\big),\min_{i:t_i^n\in [\theta,(k^*+1)h_n)} \big(C_{t_i^n}+\epsilon_i\big)\Big)+\KLEINO_{\P}\big(h_n^{1/2}\big)\\
&=\min\big(J_{\theta-},J_{\theta}\big)+\min_{i\in\mathcal{I}_{k^*}^n}\big(C_{t_i^n}+\epsilon_i\big)+\mathcal{O}_{\P}\big(h_n^{1/2}\big)\,,
\end{align*}
since we know that the difference  between the two minima in the maximum is $\mathcal{O}_{\P}\big(h_n^{1/2}\big)$. In case that $\Delta J_{\theta}>0$, we obtain that
\begin{align*}\min_{i\in\mathcal{I}_{k^*+1}^n} Y_i-\min_{i\in\mathcal{I}_{k^*}^n} Y_i&=\Delta J_{\theta}+\min_{i\in\mathcal{I}_{k^*+1}^n}\big(C_{t_i^n}+\epsilon_i\big)-\min_{i\in\mathcal{I}_{k^*}^n}\big(C_{t_i^n}+\epsilon_i\big)+\mathcal{O}_{\P}\big(h_n^{1/2}\big)\\
&=\Delta J_{\theta}+\mathcal{O}_{\P}\big(h_n^{1/2}\big)\,,
\end{align*}
while for $\Delta J_{\theta}<0$, we obtain that 
\begin{align*}\min_{i\in\mathcal{I}_{k^*}^n} Y_i-\min_{i\in\mathcal{I}_{k^*-1}^n} Y_i&=\Delta J_{\theta}+\min_{i\in\mathcal{I}_{k^*}^n}\big(C_{t_i^n}+\epsilon_i\big)-\min_{i\in\mathcal{I}_{k^*-1}^n}\big(C_{t_i^n}+\epsilon_i\big)+\mathcal{O}_{\P}\big(h_n^{1/2}\big)\\
&=\Delta J_{\theta}+\mathcal{O}_{\P}\big(h_n^{1/2}\big)\,.
\end{align*}
Under the alternative hypothesis in Theorem \ref{thmgumbel}, we thus have that
\begin{align*}
n^{1/3}\max_{k=1,\ldots,h_n^{-1}-1}\Big|\frac{m_{k,n}-m_{k-1,n}}{\hat\sigma_{kh_n}}\Big|&\ge n^{1/3} \frac{|\Delta X_{\theta}|}{\sigma_{\theta}}\big(1+\KLEINO_{\P}(1)\big)+\mathcal{O}_{\P}\big(n^{1/3}h_n^{1/2}\big)\,,
\end{align*}
with $\theta\in(0,1)$, for which $\liminf_{n\to\infty} n^{\beta}|\Delta X_{\theta}|>0$, for some $\beta<1/3$. The consistency of the global test, \eqref{consist}, now follows from
\begin{align*}
n^{1/3}\max_{k=1,\ldots,h_n^{-1}-1}\Big|\frac{m_{k,n}-m_{k-1,n}}{\hat\sigma_{kh_n}}\Big|&\ge n^{1/3} \frac{|\Delta X_{\theta}|}{\sup_{t\in[0,1]}\sigma_{t}}\big(1+\KLEINO_{\P}(1)\big)+\mathcal{O}_{\P}\Big(\sqrt{\log(2h_n^{-1})}\Big)\\
&= n^{1/3-\beta} \frac{n^{\beta}|\Delta X_{\theta}|}{\sup_{t\in[0,1]}\sigma_{t}}\big(1+\KLEINO_{\P}(1)\big)+\mathcal{O}_{\P}\Big(\sqrt{\log(2h_n^{-1})}\Big)\\
&\stackrel{\P}{\longrightarrow} \infty\,,
\end{align*}
since $1/3-\beta>0$ and $\liminf_{n\to\infty} n^{\beta}|\Delta X_{\theta}|>0$. This completes the proof of Theorem \ref{thmgumbel}.

The consistency of the local test, \eqref{consistloc}, follows with similar considerations:
\begin{align*}
h_n^{-1/2}\Big|\frac{\hat X_{\tau}}{\hat\sigma_{\tau+}}-\frac{\hat X_{\tau-}}{\hat\sigma_{\tau-}}\Big|&=h_n^{-1/2}\Big|\frac{J_{\tau}}{\hat\sigma_{\tau+}}-\frac{J_{\tau-}}{\hat\sigma_{\tau-}}\Big|-\mathcal{O}_{\P}(1)\\
&=h_n^{-1/2}\Big(\Big|\frac{J_{\tau}\sigma_{\tau-}-J_{\tau-}\sigma_{\tau}}{\sigma_{\tau}\sigma_{\tau-}}\Big|+\KLEINO_{\P}(1)\Big)-\mathcal{O}_{\P}(1)\\
&\stackrel{\P}{\longrightarrow} \infty\,,
\end{align*}
since $(\sigma_{t})$ is uniformly bounded and $|{J_{\tau}\sigma_{\tau-}-J_{\tau-}\sigma_{\tau}}|>0$ under the alternative hypothesis. Since the asymptotic level $\alpha$ of the test readily follows from Theorem \ref{thmloc}, Corollary \ref{corollary} is proved.
\subsection{Proof of consistent localization and online test}
\begin{proof}[Proof of Proposition \ref{localize}:]
Using analogous notation as in the proof of consistency of the test in Section \ref{sec:proofcons}, suppose the jump at time $\theta\in(k^*h_n,(k^*+1)h_n)$, with some $k^*\in\{1,\ldots,h_n^{-1}-2\}$. We bound the expected absolute deviation of 
\[\hat\theta_n=h_n\operatorname{argmax}_{1\le k\le h_n^{-1}-1}\frac{|m_{k,n}-m_{k-1,n}|}{\hat\sigma_{kh_n}}\]
by
\begin{align*}
\E\big[|\hat\theta_n-\theta|\big]&=\E\big[|\hat\theta_n-\theta|\1(h_n^{-1}\hat\theta_n\in\{k^*,k^*+1\})\big]+\E\big[|\hat\theta_n-\theta|\1(h_n^{-1}\hat\theta_n\notin\{k^*,k^*+1\})\big]\\
&\le 2h_n+2\P\big(h_n^{-1}\hat\theta_n\notin\{k^*,k^*+1\}\big)\,,
\end{align*}
where we use that $\theta,\hat\theta_n\in[0,1]$ with the triangle inequality. The uniform upper bound $|\sigma_s|\le K$, obtained at the beginning of Section \ref{subsec:pre}, with the uniform consistency of $(\hat\sigma_{kh_n})_{1\le k\le h_n^{-1}-1}$, yield an almost sure inequality
\[\max_{k\in\{k^*,k^*+1\}}\frac{|m_{k,n}-m_{k-1,n}|}{\hat\sigma_{kh_n}}\ge \frac{1}{2K}\max_{k\in\{k^*,k^*+1\}}|m_{k,n}-m_{k-1,n}|\,.\]
The factor 1/2 in the inequality can be replaced by any constant strictly smaller than one. The approximations in the proof of consistency of the test in Section \ref{sec:proofcons} show that
\[\max_{k\in\{k^*,k^{*}+1\}}|m_{k,n}-m_{k-1,n}|=|J_{\theta}-J_{\theta-}|+\mathcal{O}_{\P}(h_n^{1/2})\,.\]
The set on which $h_n^{-1}\hat\theta_n\notin\{k^*,k^*+1\}$ is hence almost surely a subset of $\{T^{H_0}\ge \delta\}$, with 
\[T^{H_0}=\max_{1\le k\le h_n^{-1}-1}\frac{|\min_{i\in\mathcal{I}_{k}^n}\big(C_{t_i^n}+\epsilon_i\big)-\min_{i\in\mathcal{I}_{k-1}^n}\big(C_{t_i^n}+\epsilon_i\big)|}{\hat\sigma_{kh_n}}\,,\]
and $\delta=|J_{\theta}-J_{\theta-}|/(3K)$, where the factor 1/3 can be replaced by any constant strictly smaller than the factor 1/2 from above. On this event, the maximum absolute difference of local minima, based on observations without jumps, exceeds a positive threshold that does not tend to zero as $n\to\infty$. This maximum, $T^{H_0}$, however, behaves asymptotically as our test statistic under the null hypothesis of no jumps. We proved in Theorem \ref{thmgumbel} that, $T^{H_0}=\mathcal{O}_{\P}\big(n^{-1/3}\log(h_n^{-1})\big)=\KLEINO_{\P}\big(n^{-1/3}\log(n)\big)$, and, moreover, the normalized tight sequence converges weakly to a Gumbel limit distribution. Since
\begin{align*}
\P\big(h_n^{-1}\hat\theta_n\notin\{k^*,k^*+1\}\big)\le \P\big(n^{1/3}T^{H_0}-B_n\ge n^{1/3}\delta-B_n\big)\,,
\end{align*}
and since Gumbel convergence is equivalent to an exponential decay of tail probabilities, see Eq.\ (7) of \cite{arxiv2}, the probability right-hand side tends to zero faster than any polynomial decay $n^{-s}$, $s>0$. Therefore, we conclude based on the above bound that
\[|\hat\theta_n-\theta|=\mathcal{O}_{\P}(h_n)\,.\]
\end{proof}
\begin{proof}[Proof of Corollary \ref{onlinecor}:]
The quantiles of the asymptotic Gumbel distribution of $T^{BHR}$ in Theorem \ref{thmgumbel} are readily determined from \eqref{gumbelex}: $q_{1-\alpha}^{\Lambda}=-\log(-\log(1-\alpha))$. The subset relation \eqref{subsetineq} for minima grants for $i\in\mathcal{I}_k^n$ that
\[Y_i\ge \min_{j\in\mathcal{I}_k^n,j\le i} Y_j\ge \min_{j\in\mathcal{I}_k^n}Y_j=m_{k,n}\,,\]
where the first inequality will be an equality to the running minimum on the $k$th block for an observation $Y_i$, which triggers an online jump detection (since previous observations have then not exceeded the threshold). Since the threshold criterion \eqref{onlineeq} yields that
\begin{align*}\frac{m_{k,n}-m_{k-1,n}}{\hat\sigma_{kh_n}}\le \frac{Y_i-m_{k-1,n}}{\hat\sigma_{kh_n}}&<n^{-1/3}\big(\log(-\log(1-\alpha))-B_n\big)\\ &=-n^{-1/3}(B_n+q_{1-\alpha}^{\Lambda})\,,
\end{align*}
by Theorem \ref{thmgumbel} the global test detects a negative jump. By the same considerations as in the previous proof of Proposition \ref{localize}, it is located to the interval $(kh_n,t_i^n)$.
\end{proof}
\section*{Acknowledgement}
The authors are grateful to two anonymous reviewers and editors for many helpful comments. Markus Bibinger is grateful to Adrian Grüber for helpful comments on some proofs.
\bibliographystyle{apalike}
\bibliography{literatur}
\addcontentsline{toc}{section}{References}
\clearpage
\appendix
\section{Additional simulation results\label{S1}}
\subsection{Performance of the test for i.i.d.\ noise and equispaced observations}
We present results on the size and power of the global test in a standard setup with one-sided i.i.d.\ noise and equispaced observations. As in Section \ref{sec:3}, we simulate the efficient log-price process $(X_t)$ based on the model
\begin{align*}
	dX_t &= v_t \sigma_t dW_t\\
	d\sigma_t^2 &= 0.0162 \cdot (0.8465 -\sigma_t^2)dt + 0.117\cdot \sigma_tdB_t\\
	v_t &= (1.2-0.2\cdot \sin(3/4 \pi t)) \cdot 0.01 \quad\text{with}\quad t\in[0,1].
\end{align*}
Under the alternative, a jump is added at a random time point, however neither close to the beginning nor to the end of the sampled trajectory. Jumps are with equal probabilities positive or negative and their absolute sizes are given in Table \ref{tab:most_globaln}. We simulate $n = 23{,}400$ discrete, equispaced, noisy observations generated by
\begin{align}
	Y_{i}&= X_{i/n} + q \varepsilon_{i}  \quad\text{with}\quad\varepsilon_{i}\overset{iid}{\sim}Exp(1),\quad i=0,\ldots,n,
\end{align}
where values of the noise level $q$ are shown in Table~\ref{tab:most_globaln}. Exponentially distributed noise is a standard choice for the one-sided noise model and has been considered in the simulations of \cite{bibinger2022} on spot volatility estimation also.

\begin{table}[h]
\centering
	\begin{tabular}{c|ccccccc}
		\hline\hline
			& & \multicolumn{6}{c}{$\vert \text{jump size}\vert$} \\
		\hline
		$q$	& $0.00\%$ & $0.10\%$ & $0.15\%$ & $0.20\%$& $0.25\%$ & $0.30\%$ & $0.50\%$\\
		\hline
		$ 0.010\%$ & 0.05 & 0.16 & 0.59 & 0.91 & 0.99 & 1.00 & 1.00\\
		$ 0.025\%$ & 0.05 & 0.15 & 0.56 & 0.90 & 0.99 & 1.00 & 1.00\\
		$ 0.050\%$ & 0.05 & 0.14 & 0.54 & 0.89 & 0.98 & 1.00 & 1.00\\
		$ 0.075\%$ & 0.05 & 0.13 & 0.52 & 0.87 & 0.98 & 1.00 & 1.00\\
		$ 0.100\%$ & 0.05 & 0.13 & 0.50 & 0.85 & 0.97 & 1.00 & 1.00 \\
		\hline\hline
	\end{tabular}
	\caption{Simulation results for the global test on a significance level of $\alpha=5\%$. The column $\vert \text{jump size}\vert=0.00\%$ gives the estimated size and the columns with $\vert \text{jump size}\vert>0\%$ give the estimated power for the corresponding jump sizes.}
	\label{tab:most_globaln}
\end{table}
The global test is based on the statistic $T^{BHR}$ from \eqref{bhrgl}, i.e.\ the maximal absolute difference between block-wise local minima normalized with the estimated spot volatility. The spot volatility estimator from \eqref{simpleestimator2} is computed with $K_n = 200$ and $n h_n = 30$. As in Section \ref{sec:3}, we account for the finite-sample bias of the spot volatility estimator for this tuning multiplying it with the correction factor $0{.}954$ based on Sec.\ 5.1 of \cite{bibinger2022}. The global Gumbel test is based on Theorem \ref{thmgumbel}.

\begin{figure}[t]
	\begin{minipage}{1\textwidth}
     		\centering
        		\includegraphics[width=6.5cm]{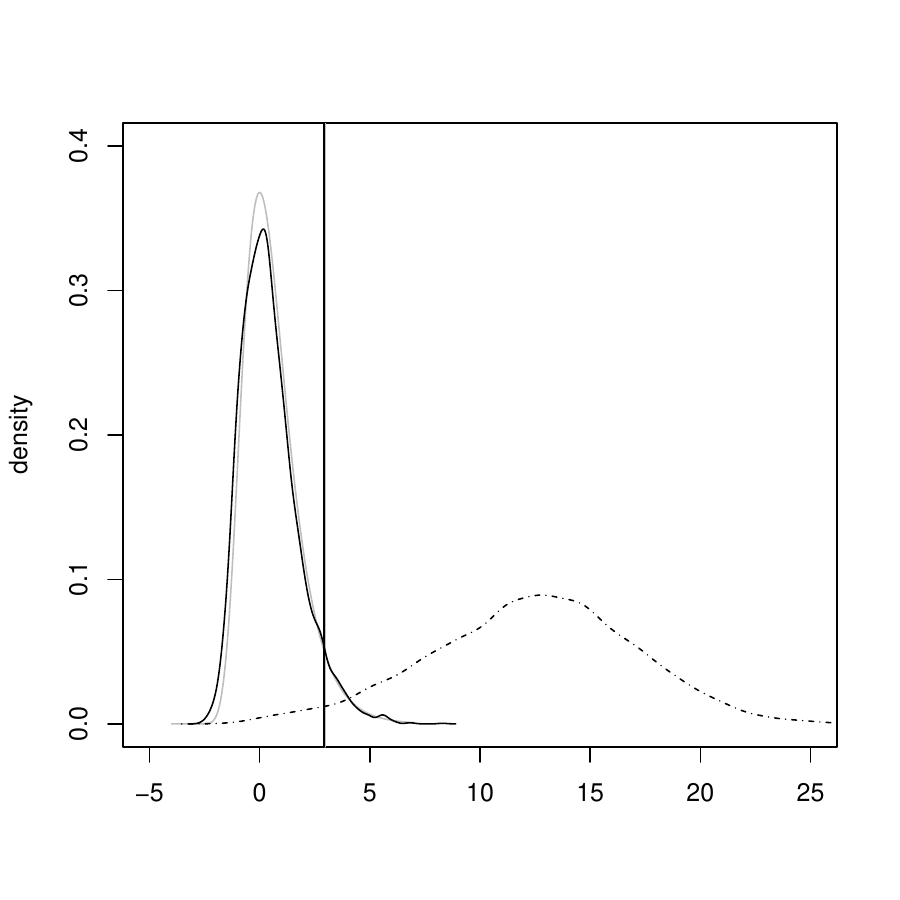}
		\includegraphics[width=6.5cm]{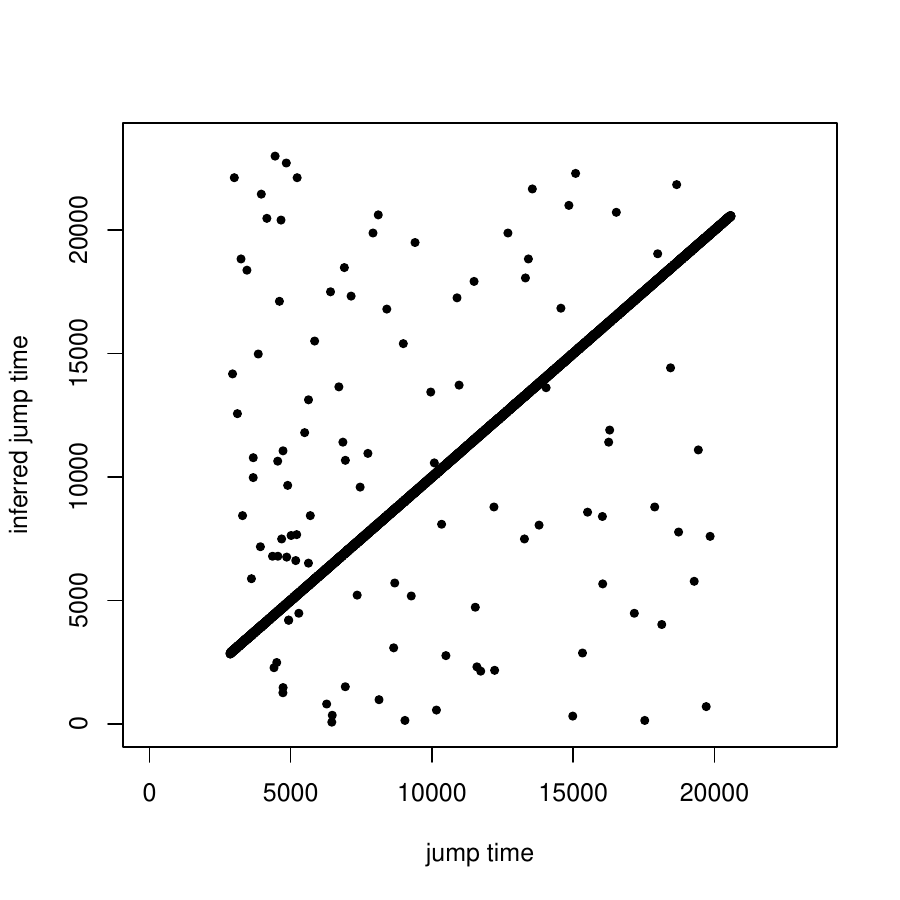}
    	\end{minipage}
	\caption{\label{Fig:densityn}Left panel: Kernel density estimates of the standardized version of the test statistic $T^{BHR}$ under the null (black solid line) and under the alternative (dashed-dotted line) for $\vert \text{jump size}\vert = 0.25\%$ and noise level $q = 0.1\%$. The gray solid line depicts the density function of the standard Gumbel distribution. The black and gray vertical lines (almost indistinguishable) are the $95\%$-quantile of the standardized version of the test statistic $T^{BHR}$ under the null and the $95\%$-quantile of the standard Gumbel distribution. Right panel: Jump times vs.\ inferred jump times for |jump size| = 0.25\% and noise level $q = 0.1\%$.}
\end{figure}

Table~\ref{tab:most_globaln} gives the size ($\vert \text{jump size}\vert=0\%$) and power ($\vert \text{jump size}\vert>0\%$ ) of the global test for different noise levels $q$. Under the null hypothesis, when there is no jump, the test keeps its level. The power of the test decreases for increasing noise level and some fix jump size. It increases for growing |jump size|. Compared to Table \ref{tab:most_global}, the power is higher for the same noise levels and jump sizes. This is, however, mainly due to the reduced effective sample sizes for the noise specification in Section \ref{sec:sim_global}. The additional noise autocorrelations in Section \ref{sec:sim_global} instead hardly reduce the power. For $\vert \text{jump size}\vert = 0.25\%$ and $q = 0.1\%$, the performance of the test is illustrated in Figure~\ref{Fig:densityn} which shows kernel density estimates of the standardized version of the statistic $T^{BHR}$ under the null and under the alternative. The density function of the standard Gumbel distribution is plotted to compare the empirical distribution under the null hypothesis to this theoretical limit distribution. There are minor deviations in the center of the two distributions, but the right tail of the distribution is quite accurately fitted. In the right panel of Figure~\ref{Fig:densityn}, the true jump times are plotted against the inferred jump times determined by the estimator from Proposition \ref{localize}. Since most of the points are on the main diagonal, the jumps are correctly located in most cases. That more points are off the diagonal in the morning than around lunch time can be explained by the higher volatility which makes jump detection more difficult. Hence, we further conclude that the localization performs slightly worse in intraday periods where the volatility is higher. The plots of Figure \ref{Fig:densityn} are analogous to those in Figure \ref{Fig:density}. Note that we use a slightly smaller jump size $\vert \text{jump size}\vert = 0.25\%$ instead of $\vert \text{jump size}\vert = 0.3\%$ here, with the same noise level, and attain a similar discriminatory power as left-hand side in Figure \ref{Fig:density}. This means that the performance is better in this setup with i.i.d.\ noise which is again mainly due to the larger effective sample size. The accuracy of the localization right-hand side in Figure \ref{Fig:densityn} is as well even better than the one in Figure \ref{Fig:density}.

\subsection{Comparison with standard market microstructure noise}
\begin{table}[t]
\centering
	\begin{tabular}{ccc|ccccccc}
		\hline\hline
			&&& \multicolumn{7}{c}{$\vert \text{jump size}\vert$} \\
		\hline
		$q$ & $nh_n$ & test & 0.000\% &  $0.100\%$ & $0.125\%$ & $0.150\%$ & $0.175\%$ & $0.200\%$ & $0.225\%$\\
		\hline
		\multirow{4}{*}{0.01\%}		& \multirow{2}{*}{3} 	& BHR &0.00 & 0.96 & 1.00 & 1.00 & 1.00 & 1.00 & 1.00\\
							& 				& LM   &0.00 & 0.74 & 0.97 & 1.00 & 1.00 & 1.00 & 1.00\\
							& \multirow{2}{*}{4} 	& BHR &0.01 & 0.96 & 1.00 & 1.00 & 1.00 & 1.00 & 1.00\\
							&  				& LM   &0.00 & 0.75 & 0.94 & 0.99 & 1.00 & 1.00 & 1.00\\
		\hline\hline
	\end{tabular}
	\caption{Simulation results for the global test in Scenario 1 on a significance level of $\alpha=5\%$. The column $\vert \text{jump size}\vert=0.000\%$ gives the estimated size and the columns with $\vert \text{jump size}\vert>0\%$ give the estimated power for the corresponding jump sizes. The value $nh_n$ is the number of noisy observations per interval.}
	\label{tab:global_comparisonn}
\end{table}
This section considers the same scenarios as Section \ref{sec:3.2} and provides additional material. {\textbf{Scenario 1}} compares the baseline models with i.i.d.\ noise and equispaced observations for equal sample sizes, using observations generated by
\begin{align}
	Y_{i}^{A} &= X_{i/n} + (1 - 2 \pi^{-1})^{-1/2} q \vert\varepsilon_{i}\vert \quad\text{with}\quad\varepsilon_{i}\overset{iid}{\sim}\mathcal{N}(0, 1),\label{eq.sim1n}\\
	Z_{i} &= X_{i/n} + q\varepsilon_{i}\label{eq.sim2n}
\end{align}
where $Y_{i}^{A}$ are the simulated ask quotes based on the LOMN-model and $Z_{i}$ are the observations under MMN, $i=0,\ldots,n$. 

{\textbf{Scenario 2}} generates observations from a model adopted from \citet{li2018}, that incorporates additive MMN and additionally rounding errors, with
\begin{align}
	Z_{i} &= \log \left( \exp \left(X_{i/n} + \epsilon_{i}\right)^{(s)} \right)\quad \text{with}\quad\epsilon_{i}\overset{iid}{\sim}\mathcal{N}(0, q^2),\label{eq.sim1altn}
\end{align}
for $i=0,\ldots,n$ and $x^{(s)}:=s \vee ([x/s]s)$ for any $x>0$, such that $x$ is rounded to the nearest multiples of tick size $s=0.01$. In case of lower price levels, e.g., $Z_i \leq \log (10)$, the rounding error is dominating, while in case of higher price levels, e.g., $Z_i \geq \log (50)$, the additive MMN dominates. Samples of simulated ask and bid quotes are generated implicitly setting $Y_{i}^{A}= Z_{i}$, whenever $Z_{i} > X_{i/n}$, and $Y_{i}^{B}= Z_{i}$, when $ Z_{i} < X_{i/n}$. 

The implementation of the test statistic $T^{LM}$ from \cite{lm12} for MMN and the bootstrap procedure follow Eqs.\ \eqref{lm}-\eqref{adboot}.

Table~\ref{tab:global_comparisonn} reports additional results to the ones given in Table \ref{tab:global_comparison} for Scenario 1 with small noise level $q=0.01\%$. Different to the larger noise levels, for $q=0.01\%$, the tests do not keep the size $\alpha$, but reject the null hypothesis (provided that the null is true) even with smaller probabilities. The size is underestimated because of the short block lengths which are again chosen such that the power of the corresponding test is maximized. It can be reported that the size is kept for longer blocks, such as $nh_n = 8$, what has been found to be appropriate for the test statistic $T^{LM}$ in \citet{lm12}. The power of both tests increases with the $\vert\text{jump size}\vert$. In comparison, the power of the test in the LOMN-model (BHR) outperforms the test in the MMN-model (LM).

Figure~\ref{Fig:dens_comparen} illustrates the difference in the power of the tests in Scenario 1 (similarly as Figure \ref{Fig:dens_compare} for Scenario 2) for the special case of $\vert \text{jump size}\vert = 0.20\%$ and $q = 0.1\%$. Even though both plots use different scalings, i.e., the non-standardized test statistics are not directly comparable, the better discriminatory power of the test in the LOMN-framework is evident.
\begin{figure}[t]
	\begin{minipage}{1\textwidth}
     		\centering
        		\includegraphics[width=6.5cm]{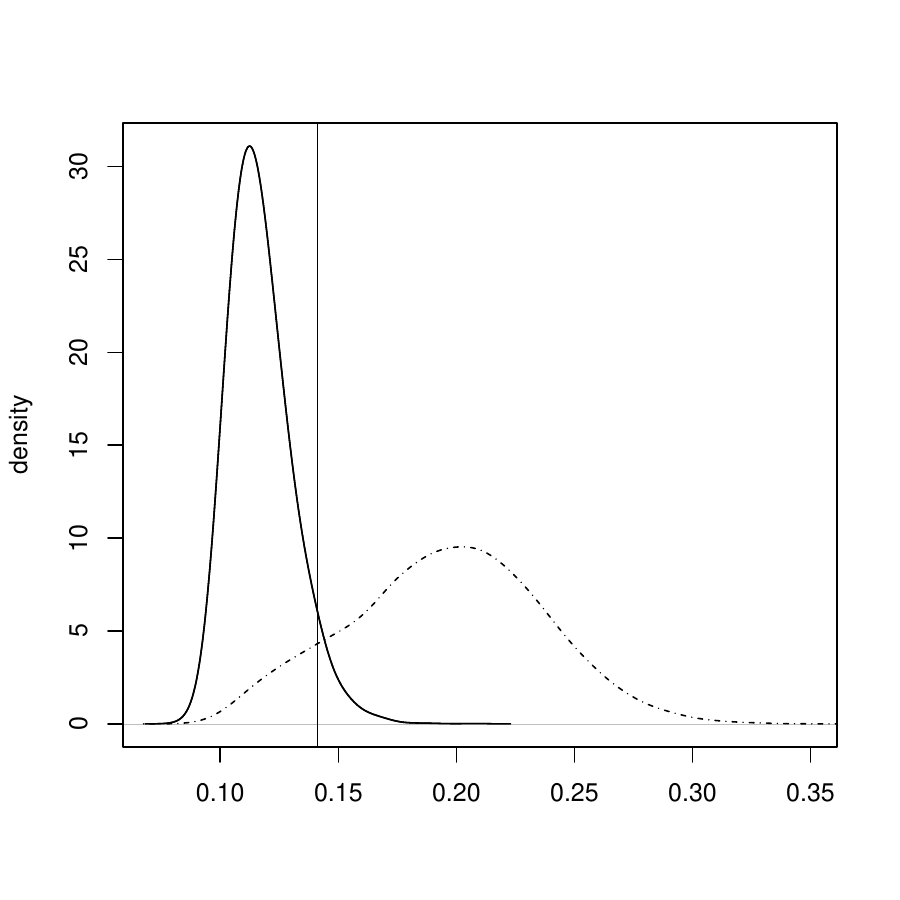}
		\includegraphics[width=6.5cm]{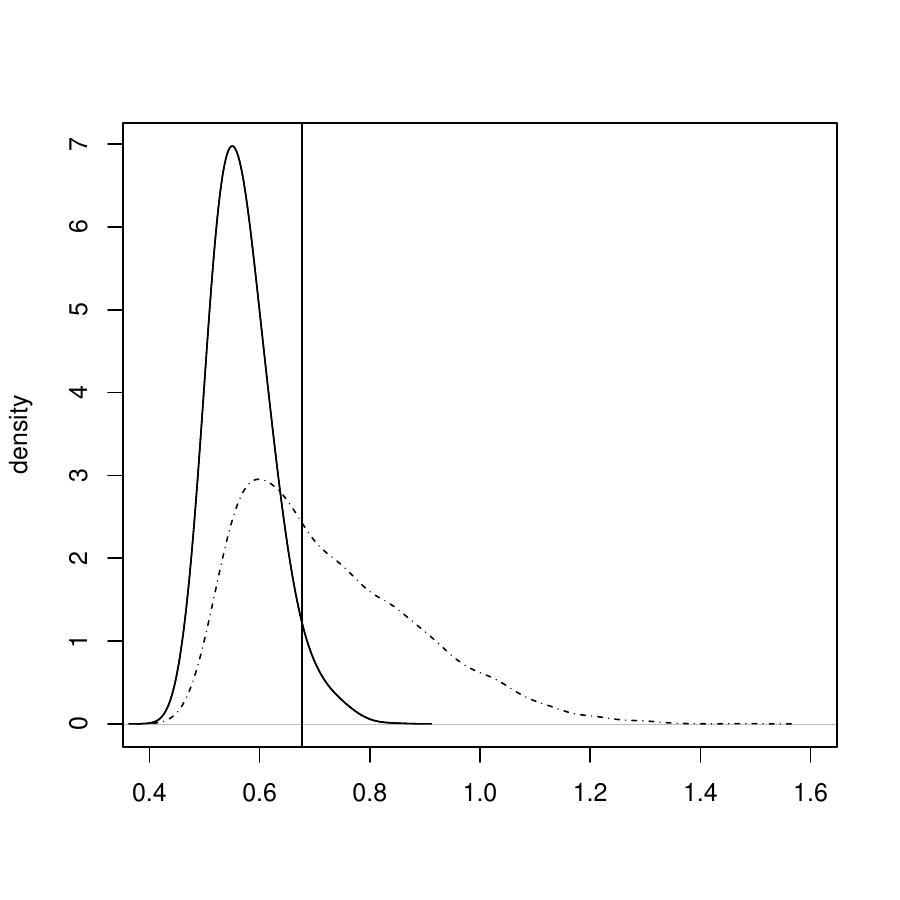}
    	\end{minipage}
	\caption{\label{Fig:dens_comparen}Left panel: Kernel density estimates of the test statistic $T^{BHR}$  in Scenario 1 under the null (black solid line) and under the alternative (dashed-dotted line) for $\vert \text{jump size}\vert = 0.20\%$ and noise level $q = 0.1\%$ for the optimal $nh_n=20$. Right panel: Kernel density estimates of the test statistic $T^{LM}$ under the null (black solid line) and under the alternative (dashed-dotted line) for $\vert \text{jump size}\vert = 0.20\%$ and noise level $q = 0.1\%$ for the optimal $nh_n=34$. The vertical lines refer to the $95\%$-quantile of the respective test statistic under the null. }
\end{figure}

Considering smaller noise levels $q$ in Scenario 2 than in Table \ref{tab:global_comparison2}, e.g., $q=0.01\%$ and $X_0=10$, tantamount to rounding errors which totally dominate the MMN, we report problems in the performance, in particular of the test based on $T^{LM}$ for MMN. While the test under LOMN keeps its size approximately for suitable block lengths, it turns out to be almost impossible to find adequate block lengths for the test based on $T^{LM}$, such that it approximately keeps its size. Because of this size distortion, a comparison in terms of power can be misleading. The root-cause for this low performance of the test based on $T^{LM}$ lies in the misspecified bootstrap that tries to approximate a model with additive noise and dominating rounding errors by a pure additive noise model. As illustrated in the boxplots in Figure~\ref{Fig:box_q}, the parameter $q$ is severely over-estimated in case of dominating rounding errors which partly compensates the ignorance of rounding errors. The positive bias is evidently much more pronounced for smaller price levels, i.e., $X_0 = 10$, than for higher price levels, i.e., $X_0 = 50$. However, local averages are not capable to transform the bootstrap-samples (generated with an over-estimated $q$) into test statistics $\{T^{LM^*}_j\}_{j=1}^m$ whose $95\%$-quantile is a reasonable critical value for the test based on $T^{LM}$. In more details, the smoothed log-returns of a sample stemming from the considered data generating process are more extreme than those generated by the bootstrap relying on additive noise, so that the critical values become too small. In contrast, the $95\%$-quantile of $\{T^{BHR^*}_j\}_{j=1}^m$ for the test under LOMN yields robust results. In a setting with $q=0.01\%$ and $X_0=50$ (when the rounding errors are less dominant), both tests yield reasonable results for $nh_n = 4$ under LOMN for ask and bid quotes and $nh_n = 8$ under MMN. The power is much higher than for $q=0.05\%$ in Table \ref{tab:global_comparison2}. Nevertheless, one conclusion from these simulations is that when rounding errors dominate other noise sources, the methods should be adapted to the different model and without doing so they yield less reliable results. This seems less plausible for best ask and best bid quotes, since it would dilute the \emph{one-sidedness} of the noise structure (seen in Figures \ref{Fig:lob} and \ref{plotlocreal} for data).
\begin{figure}[t]
	\begin{minipage}{1\textwidth}
     		\centering
        		\includegraphics[width=6.5cm]{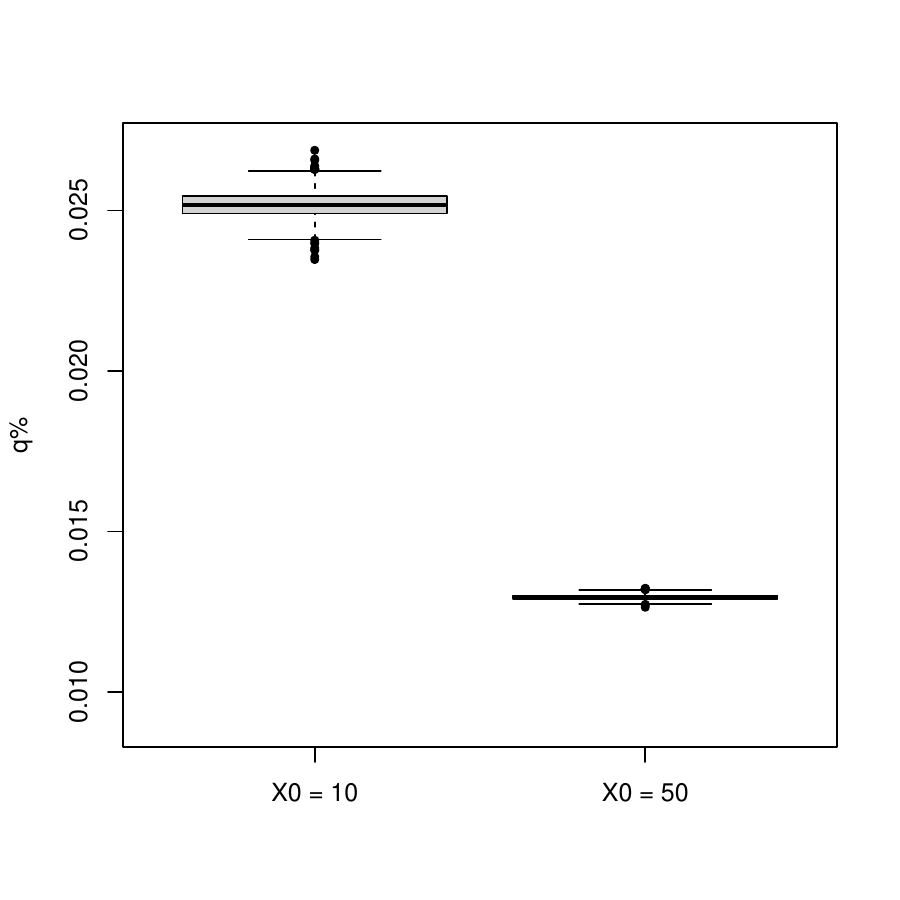}
		\includegraphics[width=6.5cm]{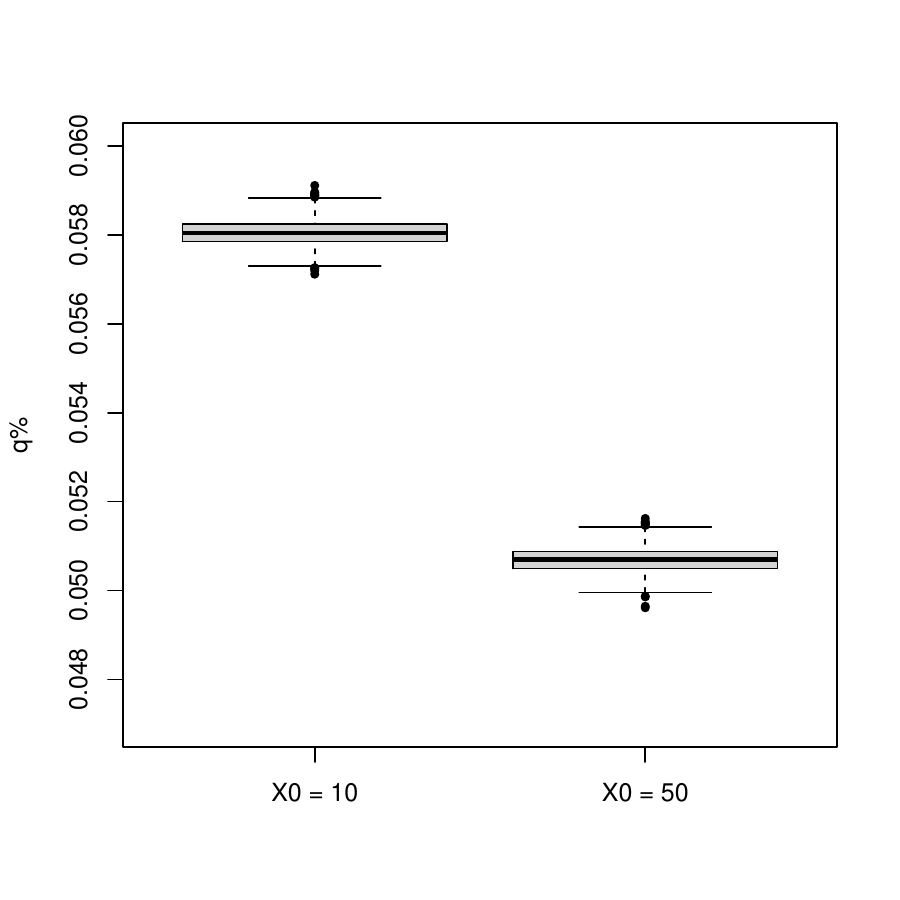}
    	\end{minipage}
	\caption{\label{Fig:box_q}Left panel: Boxplots in Scenario 2 of the estimated $\hat{q}_n$ for $q=0.01\%$. Right panel: Boxplots of the estimated $\hat{q}_n$ for $q=0.05\%$.}
\end{figure}
\clearpage

\section{Additional illustrations of local averages under MMN and local minima and maxima under one-sided noise\label{S2}}
The purpose of Figure \ref{Fig:locmin} with the discussion at the end of Section \ref{sec:2.2} was to illustrate 
\begin{enumerate}
\item the effect of ``pulverization of jumps by pre-averages'' for local averages under MMN;
\item that there is no such effect for local minima in the model with LOMN;
\item the advantage of speed of online detection of a negative jump based on local minima of observations with LOMN.
\end{enumerate}
Here, this discussion is extended to clarify that
\begin{enumerate}
\setcounter{enumi}{3} 
\item the sign of jumps do not affect testing for jumps based on local minima in the model with LOMN;
\item the advantage of speed of online detection of a positive jump holds likewise considering local maxima of observations with one-sided, upper-bounded noise.
\end{enumerate}
To convey the important insights 1.-3., without being diluted too much by finite-sample difficulties of jump detection, Figure \ref{Fig:locmin} showed one rather large (downward) jump, or a diffusive part of $(X_t)$ with rather small volatility, equivalently. In such plots the bars showing local averages of observations with MMN, and local minima of observations with LOMN, are nevertheless subject to some randomness of the diffusive price evolution. Therefore, for simplistic illustrations piecewise constant efficient log-prices are appropriate which are not subject to a random, diffusive behaviour of the efficient log-price. Figure \ref{plotR1} shows such an example. Similar to Figure \ref{Fig:locmin} in Section \ref{sec:2}, four blocks are presented and on the left-hand side observations with i.i.d.\ centered normally distributed MMN, while on the right-hand side observations with i.i.d.\ exponentially distributed LOMN are shown. Different to Figure \ref{Fig:locmin}, there is no simulated $(X_t)$, since it is simply constant up to one jump on the third block. Also different to Figure \ref{Fig:locmin}, the jump size is positive instead of negative. Block-wise, local averages of the observations with MMN are given by the bars in the left plot of Figure \ref{plotR1}. Again, instead of correctly identifying the single jump and its size, the differences between local averages split it into two adjacent jumps of smaller sizes, which illustrates the effect of ``pulverisation of jumps by pre-averages'' discussed in \cite{zhangmykland3}. Again, the differences between local minima of observations with LOMN, given by the bars in the right plot of Figure \ref{plotR1}, correctly suggest one jump and accurately estimate its size. Having an upward jump on the third block, the minimum on this block is taken before the jump. This demonstrates that local minima of observations with LOMN detect positive jumps without a pulverization effect in the same way as negative jumps (4.).

\begin{figure}[t]
\begin{framed}
\centering{\includegraphics[width=5cm]{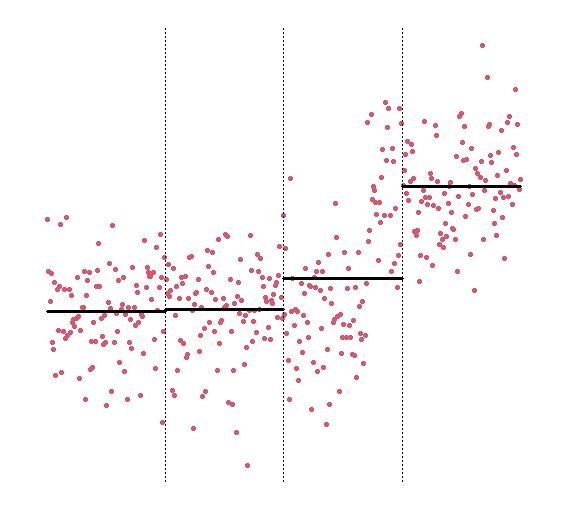}~\includegraphics[width=5cm]{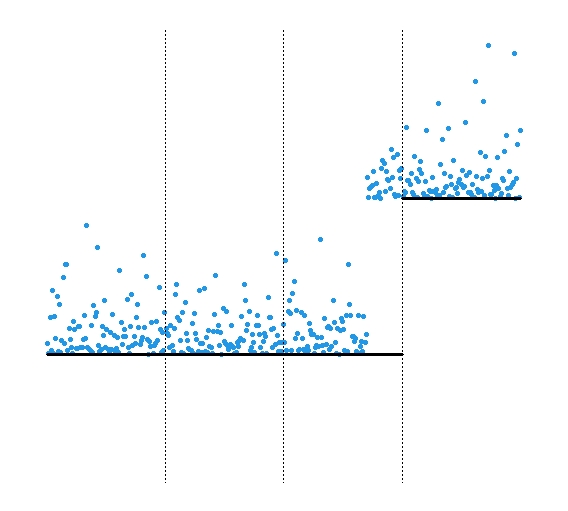}}\end{framed}
\caption{\label{plotR1}Plots similar to Figure \ref{Fig:locmin} with constant $(X_t)$ and upward jump.}
\end{figure}

We demonstrate 5.\ using similar plots as in Figure \ref{Fig:locmin} with a simulated, diffusive efficient log-price $(X_t)$ and one rather large upward jump. Based on real data from a limit order book, local averages are computed from trade prices or mid quotes. For the latter, the model with MMN is standard. Local minima are computed from best ask quotes (or seller-initiated trades) which are modeled with LOMN. Local maxima of bid quotes (or buyer-initiated trades) are also available and modeled in a symmetric way with one-sided, upper-bounded noise.
\begin{figure}[t]
\begin{framed}\includegraphics[width=4.5cm]{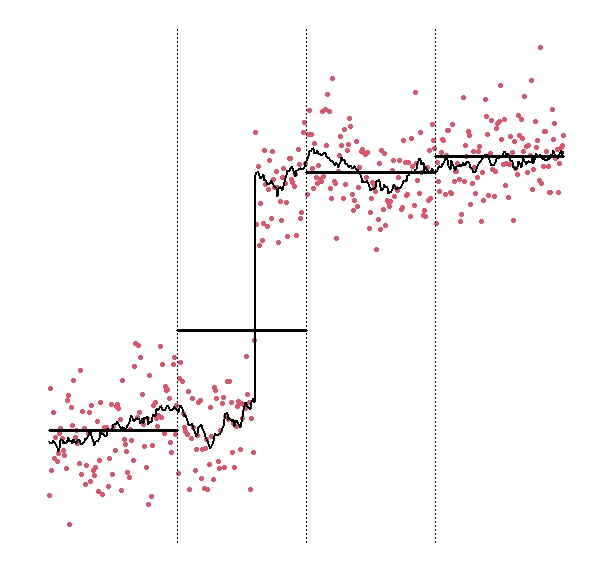}\,\includegraphics[width=4.5cm]{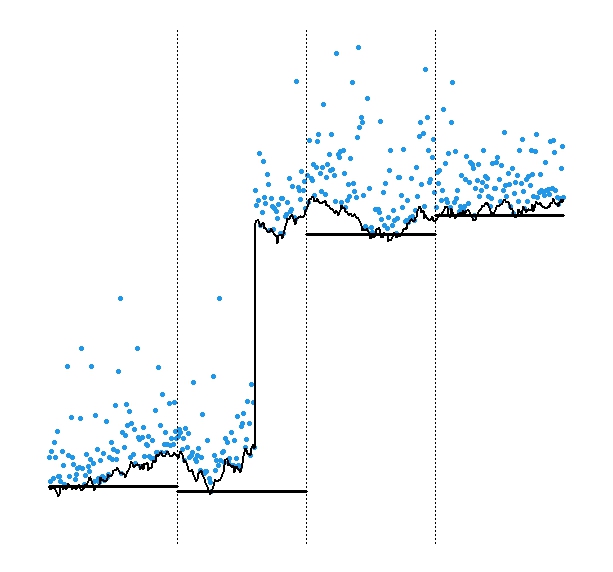}\,\includegraphics[width=4.5cm]{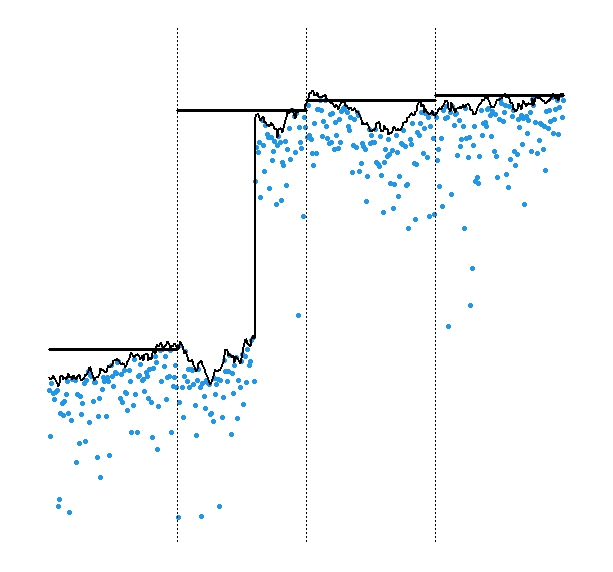}\end{framed}
\caption{\label{plotR2}Plots similar to Figure \ref{Fig:locmin} with upward jump for three noise specifications.}
\end{figure}
Figure \ref{plotR2} illustrates an efficient log-price (the same in all three plots) with an upward jump observed with MMN (left), LOMN as a model for ask quotes (middle), and one-sided, upper-bounded noise as a model for bid quotes (right). Accordingly, local averages (left), local minima (middle) and local maxima (right) of the observations are computed and given by the bars. Noise variances are the same in all three plots. The plots confirm that local maxima and local minima under one-sided noise both detect the jump without a pulverization effect. Since the maximum on a block is always greater or equal than the running maximum, an upward jump is detected as soon as the distance from an observation to the local maximum on the previous block exceeds a threshold. For the illustrated large positive jump, the local maxima of bid quotes hence detect the jump almost instantaneously after its occurrence, while the other statistics based on mid and ask quotes react with a delay of approximately one block length. This illustrates the advantage of speed based on bid quotes for an upward jump (5.). In the illustrations, the block lengths under MMN and one-sided noise are identical. However, the theory for MMN requires larger blocks of lengths proportional to $n^{-1/2}$ than for one-sided noise. Therefore, the delay in online jump detection will typically be more pronounced for local averages under MMN.

\begin{figure}[ht]
\begin{framed}
\centering{\includegraphics[width=13cm]{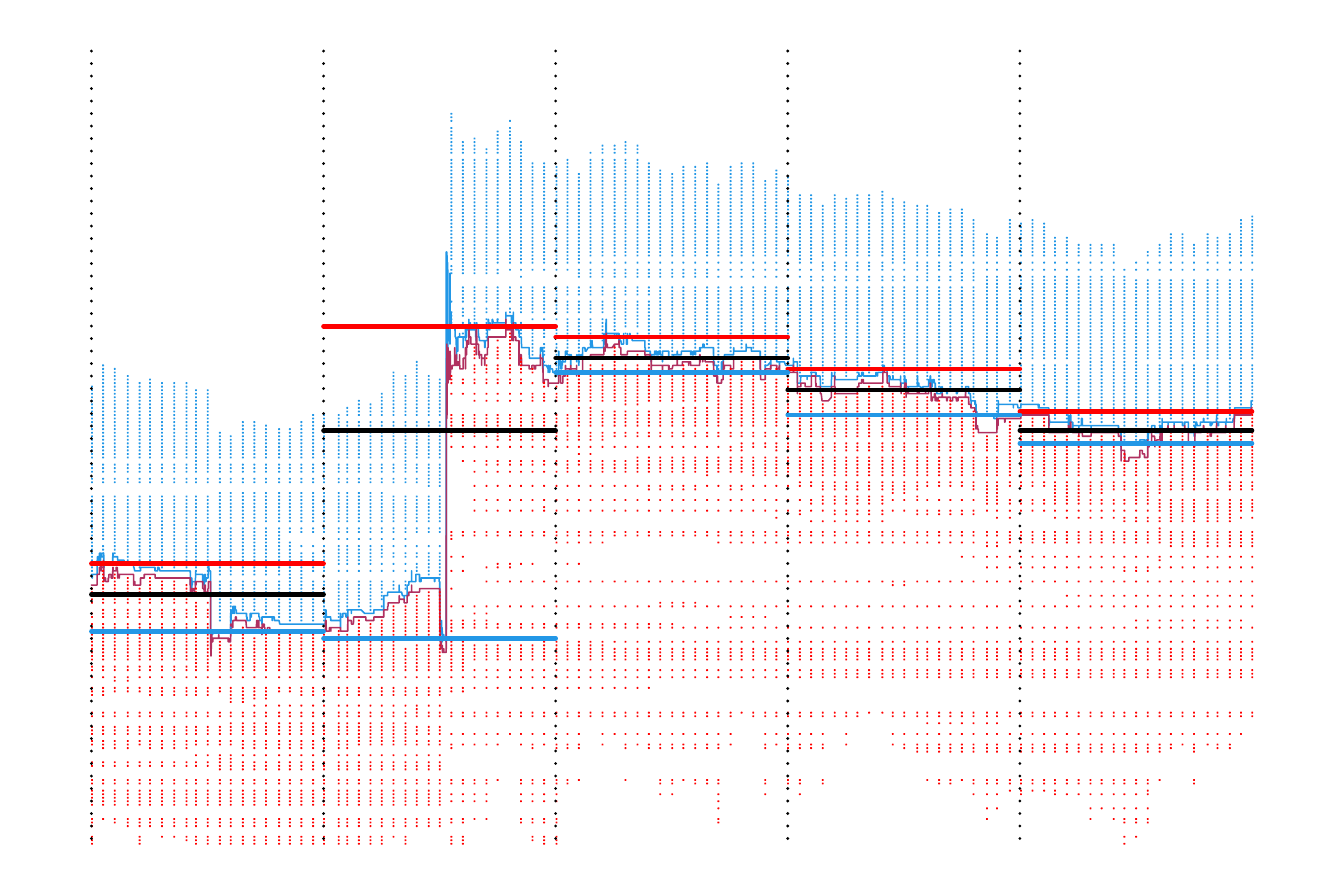}}\end{framed}
\vspace*{0.5cm}
\caption{\label{plotlocreal}Order quotes for AAPL on 12/20/2024 around 03:55pm over 5 blocks, each 10 seconds, with local averages of mid quotes (black bars), local minima of ask quotes (blue) and local maxima of bid quotes (red).}
\end{figure}
\begin{figure}[ht]
\begin{framed}
\centering{\includegraphics[width=5cm]{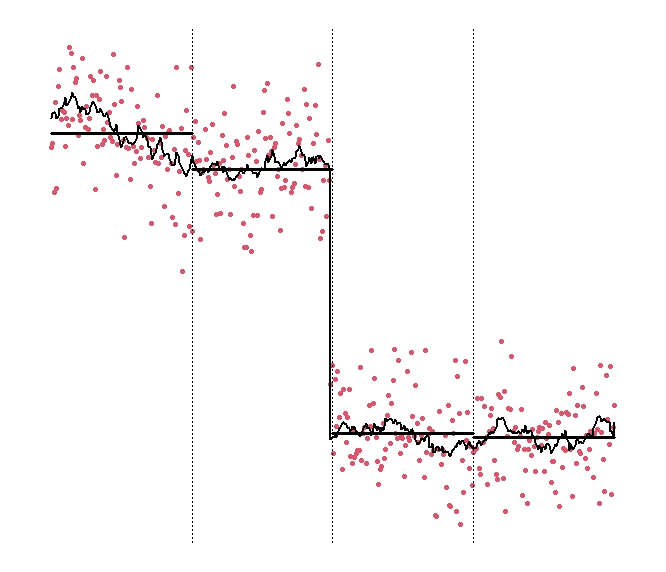}~~~~\includegraphics[width=5cm]{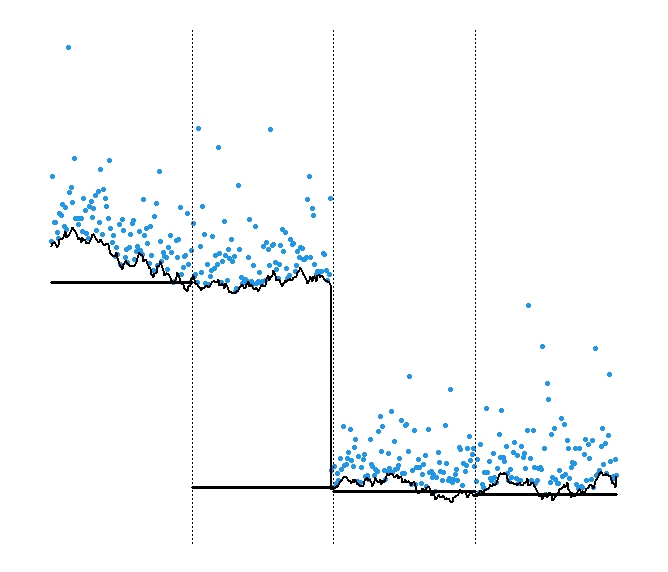}}\end{framed}
\caption{\label{plotR3}Plots similar to Figure \ref{Fig:locmin} with shifted jump time shortly before the end of the second block.}
\end{figure}
\enlargethispage*{1cm}
Figure \ref{plotlocreal} shows a 50 seconds interval of AAPL quotes on 12/20/2024, which is split in 5 blocks, with a positive jump on the second block. Only a subsets of the available order quotes are plotted as data points to get a meaningful illustration. Bid quotes are given by red points and ask quotes by blue points. We consider local averages of mid quotes (black bars), local minima of ask quotes (blue bars) and local maxima of bid quotes (red bars). Although the illustration is more complex than Figures 2, \ref{plotR1} and \ref{plotR2}, the same effects 1., 2., 4.\ and 5.\ are evident.

We complement our discussion of pulverization and speed advantage effects with an additional illustration showing what happens when a jump occurs near the boundary between two blocks. Figure \ref{plotR3} presents a plot similar to Figure \ref{Fig:locmin}, but with the jump shifted to occur shortly before the end of the second block. In the left panel, the bars show local averages of the data with MMN (red). This situation is beneficial compared to the one in Figure \ref{Fig:locmin} in the sense that the jump size is now more accurately estimated from differences of local averages, i.e., the pulverization effect is mitigated. However, there is an adverse effect on the speed of online jump detection. The local average on the second block is mainly determined by observations before the jump, so its difference from the previous one does not exceed the threshold. As a result, the jump is detected only at the end of the third block, with a delay of more than one block length. In the right panel, the bars show local minima of the data with LOMN (blue). Here, the jump located just to the left of the second vertical dashed line is detected almost instantaneously. The speed advantage of local minima is even more pronounced than in Figure \ref{Fig:locmin}. In conclusion, the position of blocks relative to jump times strongly influences the severity of the pulverization effect, but the speed advantage of local order statistics is preserved in all cases. 
 
\clearpage
\section{Data characteristics: acfs\label{S3}}
\begin{figure}[t]
\centering{\includegraphics[width=5cm]{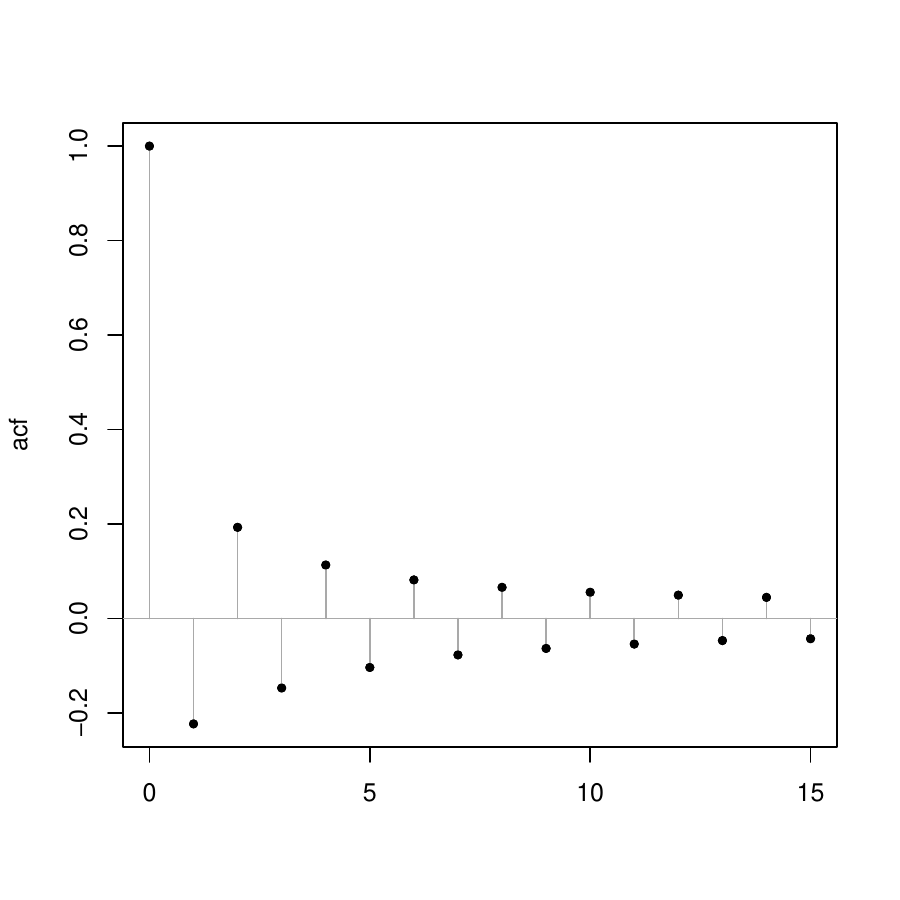}~\includegraphics[width=5cm]{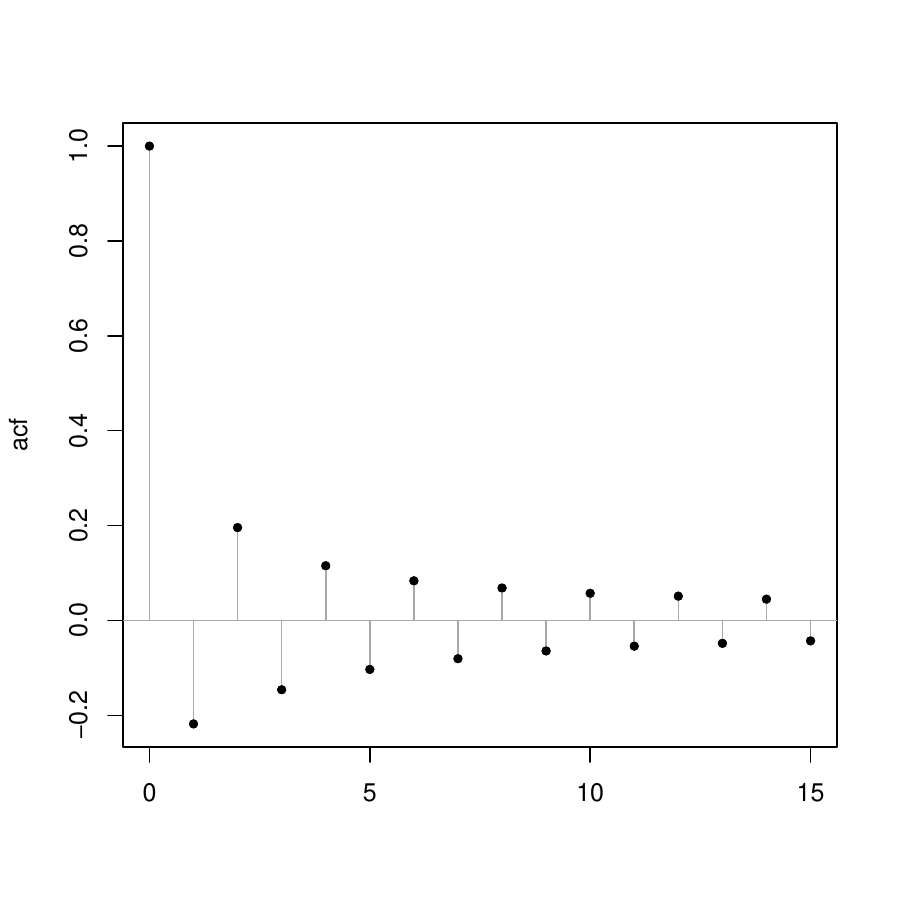}}\\
\centering{\includegraphics[width=5cm]{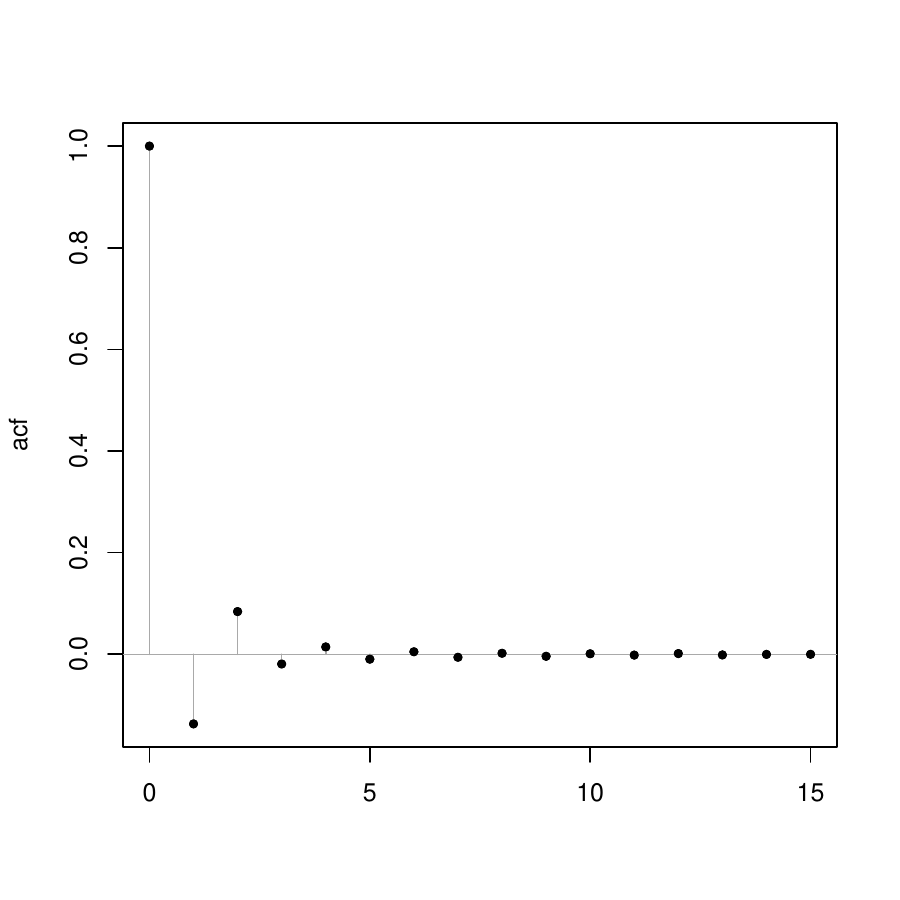}~\includegraphics[width=5cm]{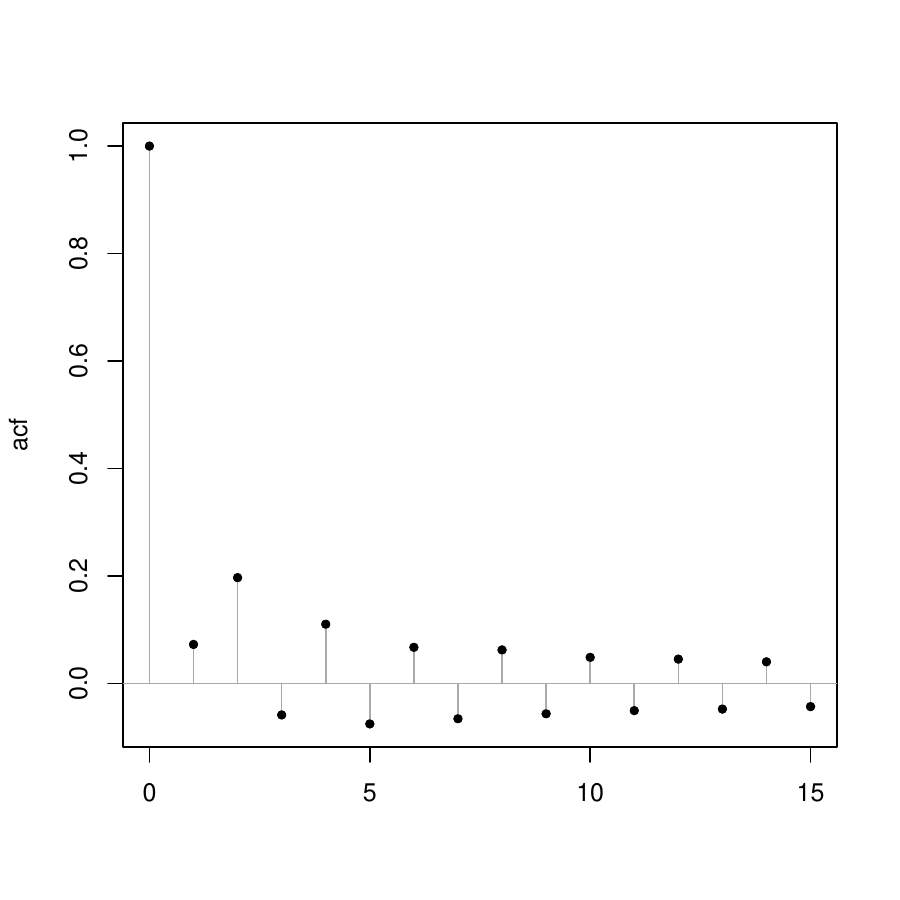}}
\caption{\label{plotACF}Medians of intra-daily acfs of data example JPM, July 2007 to September 2009.\\ Top: Best ask quotes (left) and best bid quotes (right).\\ Bottom: Executed trade prices (left) and mid quotes (right).}
\end{figure}
Figure \ref{plotACF} shows the autocorrelation structure of the high-frequency returns in our data example from Section 4.1. This can be used as a proxy of the autocorrelation function (acf) of the noise increments $(\epsilon_i-\epsilon_{i-1})_{1\le i\le n}$ in the LOMN- and the MMN-model, respectively, since the noise dominates high-frequency increments of the efficient log-price $(X_t)$ over the considered very short time intervals (the sample sizes are given in Table \ref{tab:summary} of Section 4.1). We cannot directly approximate autocorrelations of $(\epsilon_i)_{0\le i\le n}$ by the ones of $(Y_i)_{0\le i\le n}$, however, in the additive noise model. Note that standard significance lines given in acf-plots are not valid due to the neglected increments of the efficient log-price, i.e., the time series $(Y_i)_{0\le i\le n}$ is not the same as $(\epsilon_i)_{0\le i\le n}$. 

Looking at acfs for various days in our sample, we found that the autocorrelation structure can change over time. An acf-plot for the total time series of JPM quotes between July 2007 to September 2009 appears therefore not to be appropriate. For each lag, Figure \ref{plotACF} presents the median over all days in the sample of the intra-daily estimated acfs. The upper panel depicts these medians for best ask quotes (left), which we model with LOMN, and best bid quotes (right), which we model with one-sided, upper-bounded noise, analogously. The lower panel shows the results for executed trade prices (left) and mid quotes (right), which are both often modeled with MMN in the literature. Recall that for i.i.d.\ noise the lag 1 autocorrelation should be significantly negative and autocorrelations for larger lags small. For several single days in the data sample, the acfs of best ask quotes, best bid quotes and executed trade prices show such a pattern. However, on other days the acf decays more slowly and shows significant non-zero values also for larger lags. The medians of executed trade prices in Figure \ref{plotACF} indicate that there are usually two significant autocorrelations at lag 1 and 2, while autocorrelations for larger lags can be neglected. The median autocorrelations of best ask quotes and best bid quotes look rather similar to that of an AR(1) time series. We use such a model in the simulations of Section \ref{sec:sim_global}. The values decay exponentially, but in general there seem to be a few non-negligible autocovariances for small lags. The mid quotes exhibit a slightly different acf pattern, but as well decaying values with a few non-negligible autocovariances for small lags. One conclusion is, that an extension of the theory for the model with LOMN from i.i.d.\ noise to a LOMN-model which allows for serial correlation is important due to these stylized facts of best ask quotes from the limit order book data. Therefore, we develop the theory under Assumption \ref{noise} which allows for serially correlated noise.
\end{document}